\theoremstyle{plain}
\newtheorem{theorem}{Theorem}[section]
\newtheorem{proposition}[theorem]{Proposition}
\newtheorem{lemma}[theorem]{Lemma}
\newtheorem{corollary}[theorem]{Corollary}
\numberwithin{equation}{section}
\numberwithin{theorem}{section}
\newcommand{\mc}[1]{{\mathcal #1}}
\newcommand{\ms}[1]{{\mathscr #1}}
\newcommand{\bb}[1]{{\mathbb #1}}
\newcommand{\mf}[1]{\mathfrak{#1}}
\DeclareMathOperator{\tr}{Tr}
\newcommand{\varsh}{\mathop{\rm sh}\nolimits}
\newcommand{\varch}{\mathop{\rm ch}\nolimits}
\newcommand{\ad}{\mathop{\rm ad}\nolimits}
\newcommand{\Ad}{\mathop{\rm Ad}\nolimits}
\newcommand{\Dom}{\mathop{\rm Dom}\nolimits}
\newcommand{\End}{\mathop{\rm End}\nolimits}
\newcommand{\gap}{\mathop{\rm gap}\nolimits}
\newcommand{\rank}{\mathop{\rm rank}\nolimits}
\newcommand{\Span}{\mathop{\rm Span}\nolimits}
\newcommand{\Ent}{\mathop{\rm Ent}\nolimits}
\newcommand{\id}{{1 \mskip -5mu {\rm I}}}
\newcommand{\ind}{\mathbf{ 1}}
\newcommand{\ide}{\textrm{id}}
\renewcommand{\epsilon}{\varepsilon}
\renewcommand{\tilde}{\widetilde}
\begin{document}

\title[Ergodicity for QMS]{Trace distance ergodicity for \\ quantum Markov semigroups}

\author [L.\ Bertini]{Lorenzo Bertini}
\address{Lorenzo Bertini 
\hfill\break \indent
Dipartimento di Matematica, Universit\`a di Roma La Sapienza,
\hfill\break \indent
I-00185 Roma, Italy}
\email{bertini@mat.uniroma1.it}
\author [A.\ De Sole]{Alberto De Sole}
\address{Alberto De Sole
\hfill\break \indent
Dipartimento di Matematica \& INFN, Universit\`a di Roma La Sapienza,
\hfill\break \indent
I-00185 Roma, Italy}
\email{desole@mat.uniroma1.it}
\author [G.\ Posta]{Gustavo Posta}
\address{Gustavo Posta\hfill\break \indent
  Dipartimento di Matematica, Universit\`a di Roma La Sapienza,
  \hfill\break \indent
  I-00185 Roma, Italy}
\email{gustavo.posta@uniroma1.it}

\thanks{The work of A.\ De Sole  has been  supported  by  PRIN 2022S8SSW2}

\noindent
\keywords{Quantum Markov semigroups, Trace distance, Rate of convergence to equilibrium,  Casimir operator.}

\subjclass[2020]
 {Primary 
  81S22, 
 47A35. 
 Secondary 
 17B10, 
}

\begin{abstract}
We discuss the quantitative ergodicity of quantum Markov semigroups 
in terms of the trace distance from the stationary state, providing a general criterion based on the spectral decomposition
of the Lindblad generator.
We then apply this criterion 
to the bosonic and fermionic Ornstein-Uhlenbeck semigroups 
and to a family of quantum Markov semigroups parametrized by semisimple Lie algebras
and their irreducible representations,
in which the Lindblad generator is given by the adjoint action of the Casimir element.
\end{abstract}

\maketitle
\thispagestyle{empty}

\section{Introduction}
\label{s:1}

A \emph{quantum Markov semigroup} (QMS) is a one-parameter semigroup $(P_t)_{t\geq0}$
of completely positive operators on a $C^*$-algebra $\mc A$.
QMS are typically used to describe the non-unitary evolution of quantum systems,
as in the case of (weak) interaction with thermal baths 
or in measurement processes \cite{BP,SL}.
See also \cite{Ba} for the relevance of QMS in the context of quantum information theory.
The generator of a QMS is referred to as the Lindblad generator.
We focus to the case in which $P_t$ is self-adjoint with respect to the GNS inner product
induced by a state $\sigma$.
In particular, $\sigma$ is an invariant state for $P_t$.

Given an initial state $\rho$, we are interested in the exponential convergenze
of its evolution $\rho P_t$ to the invariant state $\sigma$,
that we quantify in terms of the \emph{trace distance},
the non-commutative version of the total variation distance.
In the context of classical Markov semigroups, this topic has become a central issue in probability theory that it has been addressed by different methods such as: coupling arguments, functional inequalities, and spectral theory. 
We refer to \cite{Peres} for a review on the subject.

To the best of our knowledge, a decay to equilibrium for QMS
based on a non-commutative version of coupling arguments has not been discussed in literature.

The methods based on functional inequalities can be directly extended to the non-commutative case.
Indeed, in view of the quantum Pinsker inequality, see e.g.\ \cite[Thm.11.9.5]{wilde},
the exponential convergence in trace distance can be deduced
by the exponential decay of the quantum relative entropy of $\rho P_t$
with respect to $\sigma$.
This exponential decay of the quantum relative entropy is in fact equivalent to the validity
of the modified log-Sobolev inequality.
Sufficient criteria for the latter, along the same lines of the celebrated Bakry-\'Emery criterion \cite{BGL}, 
have been discussed in \cite{Br:Ga:Ju,CM1,CM2,DR,Li:Ju:La,Wi:Zh,Coreani}.

The methods based on spectral theory require a detailed knowledge of the spectral decomposition 
of the generator of the semigroup, but, when such knowledge is available, they provide sharp bounds
on the decay rate.
In the present paper we deduce the exponential convergence to equilibrium
in trace distance for QMS by spectral methods.
More precisely, following \cite{TKRWV}, we introduce a non-commutative version of the $\chi^2$-divergence
and observe that it provides,
as in the commutative case, 
an upper bound for the trace distance between states.
We then show that the exponential decay of the $\chi^2$-divergence between $\rho P_t$ and $\sigma$
can be deduced from the spectral decomposition of the Lindblad generator $L$.
We emphasize that the eigenvectors of $L$
are normalized with respect to the GNS inner product,
but the previous estimate requires a control on their operator norm.
This is the non-commutative counterpart of an $L^\infty$-bound 
on the $L^2$-normalized eigenfunctions of the Markov generator discussed in \cite[\S12.6]{Peres}.
We also remark that, as in the commutative case,
the quantum relative entropy of a state $\rho$ with respect to $\sigma$
is controlled by the $\chi^2$-divergence of $\rho$ with respect to $\sigma$,
see e.g. \cite[Thm.8]{TKRWV}.
Hence, this approach also yields the exponential decay of the quantum relative entropy
of $\rho P_t$ with respect to the stationary state $\sigma$ with a quantitative decay rate.

As discussed in \cite[Thm.3.1]{CM1} and \cite[Thm.3]{Ali}, the explicit construction of a Lindblad generator that is self-adjoint with respect to the GNS inner product induced by a given state $\sigma$ requires the eigenvalues and eigenvectors of the modular operator associated to $\sigma$.
Identifying the state $\sigma$ with the corresponding density matrix,  we recall that the modular operator associated to $\sigma$ is the map $a\mapsto \sigma a\sigma^{-1}$.
For this reason, concrete examples of QMS that are self-adjoint with respect to given GNS inner products
have been so far limited to special models.
%
If $h$ is the Hamiltonian of a harmonic oscillator and $\sigma$
is the thermal state $e^{-\beta h}/\tr(e^{-\beta h})$ for some $\beta>0$, 
the bosonic Ornstein-Uhlenbeck semigroup is a QMS that is self-adjoint with respect to the GNS inner product
induced by $\sigma$.
Its construction has been carried out, by  Dirichlet form techniques, in \cite{CFL}.
The corresponding decay rate to equilibrium in entropic sense
has been proven in \cite[Theorem 8.5]{CM1}
using Mehler type formulae and intertwining relationships.
As we show in Section \ref{s:qou}, the decay to equilibrium in trace distance,
with the same rate as in \cite{CM1},
can also be obtained
from the spectral decomposition of the Lindblad generator.

The other paradigmatic example discussed in \cite{CM1}
is the fermionic Ornstein-Uhlenbeck semigroup,
which is self-adjoint with respect to the GNS inner product associated to a free fermion thermal state.
In this case the underlying Hilbert space is finite dimensional
and the exponential convergence in trace distance holds uniformly
with respect to the initial state.
The sharp exponential rate has been computed in \cite[Thm.8.6]{CM1},
again by intertwining relationships. 
In Section \ref{s:fou} we show that, also in the fermionic case, 
the decay to equilibrium in trace distance can be obtained
from the spectral decomposition of the Lindblad generator,
with the same rate as in \cite{CM1}.

A simple and popular QMS is the so-called depolarizing channel \cite{Chu}.
The exponential decay of the quantum relative entropy has been obtained in \cite{MS}.
It has also been analyzed in \cite{CM2,CM3,DR} in terms of a lower bound
on a ``Ricci curvature'', however the decay rate of the quantum relative entropy 
obtained there is not sharp.
Section \ref{s:la} contains the main novelty of the present paper:
the analysis of the sharp decay rate for a family of QMS parametrized by a finite dimensional semisimple Lie algebra $\mf g$ and a finite dimensional irreducible $\mf g$-module $V$,
in which the Lindblad generator is given by the adjoint action of the Casimir element.
As discussed in \cite{Ba}, these QMS can also be obtained by `transference' from heat kernel of the connected compact Lie group associate to the Lie algebra $\mf g$.
Such QMS are self-adjoint with respect to the normalized Hilbert-Schmidt inner product 
on $\End(V)$.
The corresponding generators can be viewed as
non-commutative versions of the Laplace-Beltrami operator on suitable manifolds.
The depolarizing channel mentioned above
is obtained as the special case when $\mf g=\mf{sl}_2$ and $V$ is its defining
2-dimensional representation.
It appears that the exponential decay of the quantum relative entropy for such family of QMS
cannot be obtained by applying the criteria in \cite{CM1,CM2,DR}.
On the other hand, the representation theory of finite dimensional semisimple Lie algebras
gives detailed information on the spectrum of the corresponding Lindblad generators.
In fact, sharp bounds on the eigenvalues and eigenvectors of the Lindblad generator
lead to exponential decay of the quantum $\chi^2$-divergence.
As a consequence, we deduce the exponential convergence of $\rho P_t$ to $\sigma$ 
both in trace distance and in entropic sense for all the QMS in the family.
We emphasize that for this class of QMS
the bound that we obtain for the rate of the exponential convergence in trace distance 
coincides with the inverse of the spectral gap of the Lindblad generator, and it is therefore  optimal.
More precisely, we show that there are reals $g_0,A>0$,
depending on the Lie algebra $\mf g$ but independent of the $\mf g$-module $V$,
such that, for any $t\geq0$ and any initial state $\rho$,
\begin{equation*}
\mathrm{d}_\mathrm{tr}\big(\rho P_t,\sigma\big)
  \le A \, e^{-g_0t}
  \,,
\end{equation*}
where $\mathrm{d}_\mathrm{tr}$ denotes the trace distance and the stationary state $\sigma$ 
is the normalized trace on $V$.

\section{Convergence in trace distance for quantum
  Markov semigroups} 
\label{s:2}

In this section we recall some basic facts about QMS and the trace distance between states.
We then prove a general inequality yielding the exponential ergodicity of QMS in trace distance, generalizing the bound stated in \cite[\S12.6]{Peres} for classical Markov semigroups.  
The discussion will be carried out
in the context of QMS on the $C^*$-algebra of linear operators
on a finite dimensional Hilbert space.
This setting, that avoids the functional analytic technicalities of infinite dimensional $C^*$-algebras,
is sufficient to accomodate the Fermi Ornstein-Uhlenbeck semigroup that we discuss in Section \ref{s:fou}, and the Lie algebra based QMS that we introduce in Section \ref{s:la}.
As in the case of Markov chains with finite state space,
the main emphasis is to obtain sharp bounds relative to 
the dimension of the underlying Hilbert space.
In contrast, the Bose Ornstein-Uhlenbeck semigroup acts on an infinite dimensional $C^*$-algebra and will be discussed in Section~\ref{s:qou}, where we provide the needed additional details.

We refer to \cite{CM1} for the more general setting of abstract finite dimensional $C^*$-algebras
and for a discussion on the self-adjointness of QMS with respect to other inner products than 
the GNS inner product considered here.
See also \cite{gustavo,Fa} and references therein
for a discussion of QMS on infinite dimensional $C^*$-algebras.

\subsection*{Quantum Markov semigroups and Lindblad generators}

Let $H$ be a finite dimensional Euclidean space over $\bb C$, the
inner product on $H$ is denoted by $\cdot$ and the corresponding
Euclidean norm by $|\cdot|$.
Let $\mc A:= \mathrm{End} (H)$ be the algebra of linear
operators on $H$. The algebra $\mc A$ becomes a finite-dimensional
$C^*$-algebra when it is endowed with the operator norm
$\|a\|:=\sup_{|x|=1} |ax|$ and the involution
$\mc A \ni a \mapsto a^*\in \mc A$, where $a^*$ is the adjoint of $a$,
i.e.\ $a^*x\cdot y = x\cdot a y$. The identity in $\mc A$ is denoted
by $\ind$.
An element $a\in\mc A$ is
\emph{self-adjoint} if $a=a^*$; the collection of self-adjoint
elements in $\mc A$ is denoted by $\mc A_\mathrm{sa}$. An element
$a\in \mc A_\mathrm{sa}$ is \emph{positive} if there exists $b\in\mc A$ such that
$a=bb^*$; equivalently if all the eigenvalues of $a$ are positive.
The cone of positive elements in $\mc A$ is denoted
by $\mc A_+$.
A linear map $\phi \colon \mc A\to \mc A$ is \emph{positive} if
$\phi(\mc A_+) \subset \mc A_+$.  The map $\phi$ is \emph{completely
  positive} if for any positive integer $k$ the map
$\ide \otimes \phi \, \colon \mathrm{End} (\bb C^{k})
\otimes \mc A \to
\mathrm{End} (\bb C^{k}) \otimes \mc A$ is positive.

A \emph{quantum Markov semigroup} (QMS) on $\mc A$ is a one-parameter,
strongly continuous, semigroup $(P_t)_{t\ge 0}$ on $\mc A$ such that
for each $t\ge 0$ the map $P_t\colon \mc A\to \mc A$ is completely
positive and satisfies $P_t \ind =\ind$. By the properties of
completely positive maps on $C^*$-algebras, if $(P_t)_{t\ge 0}$ is a
QMS then $\|P_t a\| \le \|a\|$ for any $a\in\mc
A$ and $t\ge 0$, i.e.\ $(P_t)_{t\ge 0}$ is a contraction semigroup
on $\mc A$.
Moreover, $(P_ta)^*= P_t a^*$ for any $a\in \mc A$ and $t\ge 0$.

According to the Hille-Yoshida theorem, a strongly continuous
contraction semigroup $(P_t)_{t\ge 0}$ on the Banach space $\mc A$ 
has the form $P_t=\exp\{t L\}$ for some \emph{generator} 
$L\in \mathrm{End}(\mc A)$.
If $(P_t)_{t\ge 0}$ is a QMS on $\mc A$, then its generator $L$ is characterized
by the following structure theorem \cite{Lindblad76}.
There exists $h\in \mc A_\mathrm{sa}$, a positive
integer $N$, and $\ell_j \in \mc A$, $j=1,\ldots, N$, such that $L$ has the form
\begin{equation}
  \label{3.2}
  L a = i [h,a] + \sum_j \big(
  [\ell_j^*,a] \,\ell_j + \ell^*_j \,[a,\ell_j]   \big)
\end{equation}
where $[a,b]=ab-ba$.
Conversely, for any choice of $h\in\mc A_\mathrm{sa}$, $N$, and $\ell_j\in \mc A$, $j=1,\ldots,N$,
the operator $L$ on $\mc A$ defined by the right hand side of \eqref{3.2} generates a QMS.
In the physical literature, the operator $L$ in \eqref{3.2} is
called Lindblad generator, $h$ Hamiltonian, and the $\ell_j$ jump operators.
%

\subsection*{Reversible quantum Markov semigroup} 

Let $\mc A'$ be the dual of $\mc A$ and denote by $\mc A_{+}'$ its 
positive cone, i.e.\ the set of elements $\sigma\in \mc A'$
satisfying $\sigma( \mc A_+)\subset \bb R_+$.
A \emph{state} $\sigma$ on $\mc A$
is an element of $\mc A'_+$ satisfying $\sigma(\ind)=1$. 
The set of states on $\mc A$ is denoted by $\mc A'_{+,1}$. 
Letting $\tr$ be the trace on $\mc A$, we here identify
the state $\sigma$ with its \emph{density matrix} in $\mathcal{A}$, still denoted
by $\sigma$,  satisfying $\sigma\geq0$ and $\tr(\sigma)=1$. Namely, 
$\sigma(a) = \tr(\sigma\, a)$.
The state $\sigma$ is \emph{faithful} when
$\sigma(aa^*)=0$ implies $a=0$, equivalently when 
the eigenvalues of the density matrix $\sigma$ are strictly positive.    
If $\sigma$ is a faithful state, the corresponding GNS inner product
$\langle\cdot,\cdot\rangle_\sigma$ on $\mc A$ is defined by 
$\langle a,b\rangle_\sigma :=\sigma(b a^*) =\tr(a^* \sigma b)$.
When equipped with this inner product, $\mc A$
becomes an Euclidean space over $\bb C$ denoted by $L_2(\sigma)$, while
the corresponding Euclidean norm is denoted by $\|\cdot\|_\sigma$.

Let $(P_t)_{t\ge 0}$ be a QMS on
$\mc A$; by duality $(P_t)_{t\ge 0}$ defines a semigroup
on $\mc A'$, i.e.\ $(\rho P_t)(a) := \rho(P_t a)$.
The conditions of complete positivity and the normalization $P_t\ind
=\ind$ imply that $(P_t)_{t\ge 0}$ preserves the set of states. 
If the generator of $P_t$ is the operator $L$ in \eqref{3.2},
then the generator of the dual of $(P_t)_{t\ge 0}$ is 
\begin{equation*}
  L^\dagger \rho = -i [h,\rho]
  + \sum_j \big(
  [\ell_j\rho,\ell_j^*] +  [\ell_j,\rho\ell_j^*]  \big).
\end{equation*}
The state $\sigma$ is \emph{invariant} if $\sigma P_t =\sigma$ for all
$t\ge 0$ or, in terms of the Lindblad generator,
if $L^\dagger\sigma =0$.
If $\sigma$ is an invariant state then the semigroup
$(P_t)_{t\ge 0}$ is also a contraction semigroup on $L_2(\sigma)$, i.e.\
$\|P_t a\|_\sigma \le \|a\|_\sigma$.
This statement is direct consequence of the Kadison-Schwarz
inequality $(P_t a)( P_t a^*) \le P_t (aa^*)$, $t\ge 0$,
$a\in \mc A$, which holds by the complete positivity of $P_t$ and $P_t\ind =\ind$.

The QMS $(P_t)_{t\ge 0}$ on $\mc A$ is \emph{reversibile} with respect to the state $\sigma$ if $P_t$ is
a self-adjoint operator on $L_2(\sigma)$ for $t\geq0$.
Equivalently, the generator $L$
in \eqref{3.2} is self-adjoint as an operator on $L_2(\sigma)$.
In the commutative case, reversibility of a Markov
semigroup is equivalent to the invariance under time reversal of the
law of the associated stationary process. The following statement,
proven by a direct computation, provides the analogue in the present
non-commutative setting.

\begin{lemma}
  Let $(P_t)_{t\ge 0}$ be a QMS on $\mc A$.
  Given $n\in\bb N$, $0\le t_1\le \cdots \le
  t_n$, and $a_1,\cdots, a_n \in \mc A$, set
  \begin{equation*}
    \begin{split}
      & a_1(t_1) \cdots a_n(t_n) :=
          P_{t_1} \big( a_1 P_{t_2-t_1} \big( a_2 \cdots   
          P_{t_n-t_{n-1}} \big(a_n\big)\cdots \big)\big)
      \\
      & a_1(t_n) \cdots a_n(t_1) := P_{t_1} \big( \cdots
      P_{t_{n-1}-t_{n-2}} \big( P_{t_n-t_{n-1}} \big(a_1\big) a_2\big)
      \cdots a_n\big).
    \end{split}
    \end{equation*}
  Then $(P_t)_{t\ge 0}$ is reversible with respect to $\sigma$ if and
  only if for each $n\in\bb N$, $0\le t_1\le \cdots \le t_n\le T$, and
  $a_1,\cdots, a_n \in \mc A$,
  \begin{equation*}
    \sigma\big( a_1(t_1) \cdots a_n(t_n) \big) =
    \sigma\big(a_1(T-t_1) \cdots a_n(T-t_n) \big).
  \end{equation*}
\end{lemma}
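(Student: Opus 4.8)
The plan is to verify both implications of the equivalence by direct manipulation of the defining expressions, using nothing beyond the definition of reversibility ($\tr((P_ta^*)\sigma b)=\tr(a^*\sigma(P_tb))$ for all $a,b\in\mc A$, $t\ge0$), the semigroup property $P_sP_t=P_{s+t}$, and linearity. First I would unwind the recursive definitions: writing $s_k:=t_{k+1}-t_k$ (with $t_0:=0$ and, for the time-reversed side, the analogous increments of $T-t_k$), the quantity $a_1(t_1)\cdots a_n(t_n)$ is a nested alternation of applications of $P_{s_k}$ and left-multiplications, while $a_1(t_n)\cdots a_n(t_1)$ is the "transposed" nesting with right-multiplications. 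The key algebraic identity that drives the whole argument is the adjoint relation in $L_2(\sigma)$ rewritten in trace form: for any $a,b,c\in\mc A$ and $s\ge0$,
\begin{equation*}
\tr\big(\sigma\,(P_s a)\,b\big)=\tr\big(\sigma\,a\,(P_s b)\big)
\end{equation*}
when one of the arguments is handled appropriately — more precisely, reversibility says $\tr(\sigma (P_s a) b)=\tr((P_s a^*)^*\sigma b)=\tr(a^*{}^*\sigma P_s b^{**})$ after taking adjoints; I would first record the clean trace-cyclic form of self-adjointness and then use it repeatedly to "move" each $P_{s_k}$ from acting on the first few factors to acting on the last few.

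The main step is an induction on $n$. For the forward direction (reversible $\Rightarrow$ the trace identity), I would peel off the outermost $P_{t_1}$ using cyclicity of the trace and the $\sigma$-self-adjointness to transfer it, then recognize that what remains is an instance of the same identity for $n-1$ factors with shifted times; the semigroup property is what lets the shifted increments recombine correctly. For the converse, I would specialize the hypothesis: taking $n=2$, $t_1=0$, $t_2=t$, and using that $P_0=\mathrm{id}$, the stated equality collapses precisely to $\tr(\sigma\,a_1\,(P_t a_2))=\tr(\sigma\,(P_{T}a_1)\,a_2)$ evaluated at the right choice of $T$, which after relabeling and taking adjoints is exactly the self-adjointness of $P_t$ on $L_2(\sigma)$; hence reversibility follows.

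The bookkeeping of indices and time-increments in the nested expressions is the only real obstacle — there is nothing deep here, but one must be careful that when $P_{t_1}$ is pulled out and its "partner" factors are moved to the other end of the product, the telescoping $t_2-t_1, t_3-t_2,\dots$ on one side matches $(T-t_1)-(T-t_2),\dots$ on the other, and that no ordering of the $a_k$ is accidentally scrambled. I expect the cleanest write-up to introduce an auxiliary notation for the "partial" nested product and to prove the transfer identity as a standalone sub-lemma, after which both directions of the claimed equivalence are immediate.
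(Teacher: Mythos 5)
Your plan is correct, and it matches the paper's (unwritten) intent: the paper states this lemma with no proof at all, remarking only that it is ``proven by a direct computation,'' and your outline is precisely that computation. The whole argument does reduce to the single transfer identity $\tr\big(\sigma\, a\,(P_s b)\big)=\tr\big(\sigma\,(P_s a)\, b\big)$, which by cyclicity of the trace and $(P_s a)^*=P_s a^*$ is equivalent to reversibility; invariance of $\sigma$ (needed to discard the outermost $P_{t_1}$ and $P_{T-t_n}$) comes for free by taking $b=\ind$, the increments telescope as you say since $(T-t_k)-(T-t_{k+1})=t_{k+1}-t_k$, and the converse follows from the case $n=2$, $t_1=0$, $t_2=T=t$. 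One small caution for the write-up: the displayed chain ``$\tr((P_s a^*)^*\sigma b)=\tr(a^{**}\sigma P_s b^{**})$'' in your second paragraph is garbled as written; just derive the transfer identity directly from $\tr\big((P_t a)\sigma b\big)=\tr\big(a\,\sigma (P_t b)\big)$ by cycling $\sigma$ to the front.
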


Fix $\beta> 0$ and consider the faithful state represented by the density matrix
$\sigma := \exp\{-\beta h\}/\tr\big( \exp\{-\beta h\}\big)$, where
$h\in \mc A_\mathrm{sa}$ is the Hamiltonian appearing in \eqref{3.2}. 
Then the first term in the right-hand side of \eqref{3.2} 
corresponds to the Heisenberg evolution, that is skew-adjoint as on
operator on $L_2(\sigma)$. We denote by $L_0$ the second term, i.e.
\begin{equation}
  \label{3.2.0}
  L_0 a = \sum_j \big([\ell_j^*,a] \,\ell_j + \ell^*_j \,[a,\ell_j]   \big).
\end{equation}
In the next statement we provide a necessary and sufficient condition for the 
self-adjointness of $L_0$ as an operator on $L_2(\sigma)$. 

\begin{lemma}
  \label{t:p1}
  Let $\mathrm{Ad}_\sigma\colon \mc A \to \mc A$ be the \emph{modular operator}
  $a\mapsto \sigma a\sigma^{-1}$.
The generator $L_0$ in \eqref{3.2.0} is self-adjoint in $L_2(\sigma)$
if and only if
\begin{equation}
  \label{calb2}
 2 \sum_{j} \big(
  \ell_j \otimes \mathrm{Ad}_\sigma(\ell_j^*) 
  - \ell_j^* \otimes \ell_j
  \big)
  =
  \id\otimes
  \sum_{j} \big(\mathrm{Ad}_\sigma(\ell_j^*\ell_j)-\ell^*_j\ell_j
 \big),
\end{equation}
where $\id$ is the identity operator on $\mathcal{A}$.
Furthermore, a sufficient condition for the self-adjointness of $L_0$ in $L_2(\sigma)$
is 
\begin{equation}
  \label{calb}
 \sum_{j} 
  \ell_j \otimes \mathrm{Ad}_\sigma(\ell_j^*) 
  =
  \sum_j \ell_j^* \otimes \ell_j
\,.
\end{equation}
\end{lemma}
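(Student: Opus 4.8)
The plan is to compute the bilinear form $\langle L_0a,b\rangle_\sigma-\langle a,L_0b\rangle_\sigma$ directly from \eqref{3.2.0}, to read off its vanishing for all $a,b$ as an operator identity on $\mc A$, and finally to rephrase that identity as the tensor equation \eqref{calb2}. First I would record the elementary identity $(L_0a)^*=L_0(a^*)$, which follows from \eqref{3.2.0} together with $[x,y]^*=[y^*,x^*]$, and rewrite the generator in the equivalent form $L_0a=\sum_j\big(2\ell_j^*a\ell_j-a\ell_j^*\ell_j-\ell_j^*\ell_ja\big)$. Using $\langle x,y\rangle_\sigma=\tr(x^*\sigma y)$ and bringing $a^*$ to the front by cyclicity of the trace, a short computation gives
\begin{equation*}
\langle L_0a,b\rangle_\sigma-\langle a,L_0b\rangle_\sigma
=\sum_j\tr\Big(a^*\big(2\ell_j\sigma b\ell_j^*-\ell_j^*\ell_j\sigma b-2\sigma\ell_j^*b\ell_j+\sigma\ell_j^*\ell_jb\big)\Big),
\end{equation*}
the two terms proportional to $\sigma b\ell_j^*\ell_j$ having cancelled. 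Since the pairing $(x,y)\mapsto\tr(xy)$ on $\mc A$ is nondegenerate and $a^*$ ranges over all of $\mc A$, self-adjointness of $L_0$ on $L_2(\sigma)$ is equivalent to the vanishing, for every $b\in\mc A$, of $\sum_j\big(2\ell_j\sigma b\ell_j^*-\ell_j^*\ell_j\sigma b-2\sigma\ell_j^*b\ell_j+\sigma\ell_j^*\ell_jb\big)$.

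The next step is to identify linear maps of ``sandwich type'' on $\mc A$ with elements of $\mc A\otimes\mc A$. Let $\Phi\colon\mc A\otimes\mc A\to\End(\mc A)$ be the linear map with $\Phi(X\otimes Y)(b)=YbX$; it is onto, because the $\Phi(X\otimes Y)$ with $X,Y$ matrix units span $\End(\mc A)$, and hence bijective since source and target have the same dimension. Applying $\Phi^{-1}$ to the operator identity above turns it into
\begin{equation*}
\sum_j\big(2\,\ell_j^*\otimes\ell_j\sigma-\ind\otimes\ell_j^*\ell_j\sigma-2\,\ell_j\otimes\sigma\ell_j^*+\ind\otimes\sigma\ell_j^*\ell_j\big)=0
\end{equation*}
in $\mc A\otimes\mc A$, and composing with the automorphism $\ide\otimes R_{\sigma^{-1}}$ of $\mc A\otimes\mc A$, where $R_{\sigma^{-1}}(Y)=Y\sigma^{-1}$, followed by an obvious regrouping, yields exactly \eqref{calb2}. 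This proves the first assertion. (Equivalently, one may run the argument via the dual generator $L_0^\dagger$ of the excerpt, since self-adjointness of $L_0$ on $L_2(\sigma)$ is the same as $L_0^\dagger(\sigma b)=\sigma\,L_0(b)$ for all $b$; the computational content is identical.)

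For the sufficiency of \eqref{calb}, assume \eqref{calb} holds. Then the left-hand side of \eqref{calb2} vanishes identically, so it remains only to show that its right-hand side vanishes too, i.e.\ that $\sigma$ commutes with $P:=\sum_j\ell_j^*\ell_j$. Applying the multiplication map $X\otimes Y\mapsto XY$ to both sides of \eqref{calb} gives $\sum_j\ell_j\,\mathrm{Ad}_\sigma(\ell_j^*)=P$, and multiplying on the right by $\sigma$ gives $\sum_j\ell_j\sigma\ell_j^*=P\sigma$. The left-hand side here is self-adjoint, because $\sigma$ is self-adjoint, hence $P\sigma=(P\sigma)^*=\sigma P$; therefore $\mathrm{Ad}_\sigma(P)=P$, that is $\sum_j\mathrm{Ad}_\sigma(\ell_j^*\ell_j)=\sum_j\ell_j^*\ell_j$, and \eqref{calb2} follows. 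By the first part, $L_0$ is then self-adjoint on $L_2(\sigma)$.

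The argument is essentially bookkeeping and I do not expect a real obstacle. The single point that is not purely mechanical is the observation in the last paragraph that \eqref{calb} forces $P\sigma$ to be self-adjoint, and hence $[\sigma,P]=0$; this is precisely what makes the right-hand side of \eqref{calb2} collapse. The only care required elsewhere is in tracking the cyclic rearrangements of the trace in the first step and in fixing the convention for $\Phi$ so that the resulting tensor identity matches \eqref{calb2} verbatim — a different convention produces the same identity with the two tensor factors interchanged, which is harmless for the ``if and only if'' but should be reconciled for the stated form.
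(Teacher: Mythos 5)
Your proposal is correct and follows essentially the same route as the paper: a direct computation of $\langle L_0a,b\rangle_\sigma-\langle a,L_0b\rangle_\sigma$ via cyclicity of the trace, the identification of the resulting operator identity with the tensor identity \eqref{calb2} under $X\otimes Y\mapsto(b\mapsto YbX)$, and, for the sufficiency, the observation that \eqref{calb} forces $\mathrm{Ad}_\sigma\big(\sum_j\ell_j^*\ell_j\big)=\sum_j\ell_j^*\ell_j$ so that the right-hand side of \eqref{calb2} vanishes. Your last step merely spells out concretely (via the multiplication map and the self-adjointness of $\sum_j\ell_j\sigma\ell_j^*$) what the paper compresses into the remark that \eqref{calb} ``applied two times'' kills the right-hand side of \eqref{calb2}.
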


\begin{proof}
By direct computation 
$\langle b,L_0 a\rangle_\sigma=\langle L_0b,a\rangle_\sigma$ for all $b\in\mc A$
is equivalent to
$$
\sum_j\big(
\sigma[\ell_j^*,a]\ell_j+\sigma\ell_j^*[a,\ell_j]
\big)
=
\sum_j\big(
[\ell_j\sigma a,\ell_j^*]+[\ell_j,\sigma a\ell_j^*]
\big)
$$
which can be rewritten as
$$
2\sum_j(
\Ad_{\sigma}(\ell_j^*)\sigma a\ell_j
-\ell_j\sigma a\ell_j^*
)
-\sum_j(
\Ad_{\sigma}(\ell_j^*\ell_j)
-\ell_j^*\ell_j
) \sigma a
=
0
$$
The above equation holds for all $a\in\mc A$ if and only if \eqref{calb2} holds.

For the last assertion, it is enough to notice that \eqref{calb}, when applied  two times, 
implies that the right-hand side of \eqref{calb2} vanishes.
\end{proof}


As discussed in \cite[Thm.3.1]{CM1} and \cite[Thm.3]{Ali}, given a faithful state $\sigma$ there is a
simple algorithm to construct self-adjoint Lindblad generators on
$L_2(\sigma)$.
Find eigenvectors of $\Ad_\sigma$, i.e.\ solve the
equation
\begin{equation}
  \label{eAs}
  \sigma v \sigma^{-1} =e^\omega \, v
\end{equation}
where $\omega\in \bb R$ and $v\in \mc A$.
By the self-adjointness of $\sigma$, if $(\omega, v)$ is a
solution to \eqref{eAs} then $( -\omega, v^*)$ is also a solution.
Given a collection $\{v_j\}$ of eigenvectors of $\Ad_\sigma$
that is closed with respect to $*$-adjunction, set
\begin{equation}
\label{faa}
  L_0 a =\sum_{j} e^{\omega_j/2} 
  \big( [v_j^*,a] v_j + v_j^* [a,v_j] \big)
\end{equation}
where $\omega_j$ is the eigenvalue associated to $v_j$.
The generator in \eqref{faa} has the form \eqref{3.2.0} with
$\ell_j = e^{\omega_j/4} v_j$. It is then straightforward to check
that condition \eqref{calb} holds, so that $L_0$ is indeed
self-adjoint on $L_2(\sigma)$.
In \cite[Thm.3.1]{CM1} and \cite[Thm.3]{Ali} it is shown that any Lindblad generator that is
self-adjoint on $L_2(\sigma)$ can be written in the form \eqref{faa}
for a suitable collection of $v_j$'s that are eigenvectors of $\Ad_\sigma$.

Let  $\sigma=\exp\{-\beta h\}/\tr(e^{- \beta h})$ for some
$h\in \mc A_\textrm{sa}$ and $\beta> 0$.
In particular, setting  $\textrm{ad}_h(v) := [h,v]$, we have 
$\Ad_\sigma =\exp\{-\beta \textrm{ad}_h\}$.
Equation \eqref{eAs} thus amounts to find the eigenvectors of
$\textrm{ad}_h$, i.e.\ to solve $\textrm{ad}_h(v) = \gamma v$ and then
set  $\omega=-\beta \gamma$.
The operators $v_j$ can be constructed from the spectral
decomposition of $h$. 
Using the bra-ket Dirac notation, 
let $h=\sum_n\epsilon_n |n\rangle\langle n|$ be the spectral decomposition of $h$.
Then $|m\rangle \langle n|\in \mc A$ is an eigenvector of $\ad_h$
with eigenvalue $\epsilon_m-\epsilon_n$. Given a
Hermitian matrix $(c_{m,n})_{m,n}$, the family
$\big\{c_{m,n} |m\rangle \langle n| \big\}_{m,n}$ is closed under
$*$-adjunction.
The Lindblad generator self-adjoint in $L_2(\sigma)$ associated to this family as in \eqref{faa} then reads 
\begin{equation*}
  L_0 a =\sum_{m,n} e^{\tfrac{\beta}{2} (\epsilon_n-\epsilon_m)} 
  |c_{m,n}|^2 
  \big( 2 \langle m |a|m  \rangle
  \, |n \rangle \langle n|
  - a |n \rangle \langle n| -|n \rangle \langle n|a  \big).
\end{equation*}

We finally mention that if $L_0$ has the form \eqref{faa} and
$\sigma = \exp\{-\beta h\}/\tr\big( e^{-\beta h}\big)$ for some $\beta>0$, then
$L_0$ commutes with the Heisenberg generator $i\ad_h$. Hence the
semigroups generated by $L_0$ and $i\ad_h$ commute \cite{Ali}.

\subsection*{Ergodicity}
We first consider the second term on the decomposition
\eqref{3.2} of the Lindblad generator $L$ that we denote by $L_0$ as
in \eqref{3.2.0}.
We further assume that $L_0$ is reversible with respect to the faithful state $\sigma$ and we regard it as an operator on $L_2(\sigma)$.
The corresponding \emph{Dirichlet form}
$E_\sigma \colon L_2(\sigma) \to \bb R_+$ is defined by
\begin{equation}
  \label{dirf}
  E_\sigma(a) := - \langle L_0 a, a\rangle_\sigma
  = \sum_j \big\langle [\ell_j^*, a] , [\ell_j^*, a] \big\rangle_\sigma 
  = \sum_j  \big\| \, [\ell_j^*, a ] \, \big\|_\sigma^2 
\end{equation}
where $\| \cdot \|_\sigma$ is the norm in $L_2(\sigma)$ and the second
equality follows by a direct computation.

Let $(P_t)_{t\ge 0}$ be the semigroup generated by $L_0$.
By the spectral theorem, the following statements are equivalent.
\begin{itemize}
\item [(i)]
   For each $a\in \mc A$, $\lim_{t\to\infty}\|P_t a-\sigma(a )\ind\|_\sigma=0$;
\item[(ii)] $0$ is a simple eigenvalue of $L_0$;
\item [(iii)] $E_\sigma(a) =0$ implies that $a$ is a multiple of $\ind$. 
\end{itemize}
The semigroup $P_t=\exp\{t L_0\}$, $t\geq0$, is \emph{ergodic} when any (and hence
all) of the above conditions is met.
In view of \eqref{dirf}, the semigroup $(P_t)_{t\geq0}$ is ergodic if
and only if the commutant (or centralizer) in $\mc A$ of the family $\{\ell_j^*\}$ is
$\bb C \ind$.
Loosely speaking, in order to ensure ergodicity the Lindblad generator has to be defined with enough jump operators.

Fix $\gamma>0$. 
Again by the spectral theorem, the following statements are equivalent.
\begin{itemize}
\item [(i)]
  For any $t\ge 0$ and any $a\in \mc A$
\begin{equation}
  \label{2.1}
  \big\| P_t a - \sigma(a) \ind \big\|_\sigma
  \le e^{-\gamma t}
  \big\| a - \sigma (a) \ind \big\|_\sigma;
\end{equation}
\item[(ii)] 
  $\gamma\le \textrm{gap}(-L_0)$, the second smallest eigenvalue of $-L_0$;
\item [(iii)]
  $  \gamma \, \big\| a - \sigma( a) \ind \|_\sigma ^2 \le E_\sigma(a)$
for all $a \in \mc A$.  
\end{itemize}

If $\sigma=\exp\{-\beta h\}/\tr(e^{-\beta h})$ for some $\beta>0$ then, as observed before, $L_\textrm{0}$ and $i \ad_h$ commute.
Hence, if $L_\textrm{0}$ is ergodic
then also the semigroup generated by $L= i \ad_h + L_0$ in
\eqref{3.2} is ergodic in the sense that statement (i) holds.
Moreover, if \eqref{2.1} holds for the semigroup generated by $L_0$,
then the same inequality holds for the semigroup generated by $L$.

\subsection*{Trace distance and quantum $\chi^2$-divergence} 

In the context of QMS, the most relevant
question on the velocity of convergence to the invariant state is the following: 
starting from an arbitrary state $\rho$, 
how large $t$ needs to be in order that $\rho P_t$ is close to the
stationary state $\sigma$? The $L_2$ ergodicity discussed above
does not really answer this question because it involves the 
$L_2$ distance with respect to the stationary state.
We address this issue by considering the trace distance on the set of
states and analyzing its behavior as $t\to \infty$.

The \emph{trace distance} on the set of states $\mc A'_{+,1}$ is defined by
\begin{equation}
 \label{dtrvvar} 
  \mathrm{d}_\mathrm{tr} (\rho,\sigma) :=
  \frac 12
  \sup_{\|a\|=1}  \big| \rho(a) -\sigma(a) \big|
  = \frac 12
  \sup_{\|a\|=1} \big| \tr\big( (\rho-\sigma) \, a \big)  \big|
 \end{equation}
where $\|\cdot\|$ is the operator norm on $\mc A$.
Equivalently, 
\begin{equation*}
  \mathrm{d}_\mathrm{tr} (\rho,\sigma) =\frac 12\,
  \tr \big( \sqrt{ (\rho-\sigma)^2} \, \big) =\frac 12 \sum_i |\lambda_i|
\end{equation*}
where $\lambda_i$ are the eigenvalues of $\rho-\sigma$. 
The normalizing factor $1/2$ in \eqref{dtrvvar} has been chosen so that $\mathrm{d}_\mathrm{tr}$  reduces to the total variation distance in the commutative case.

Fix a faithful state $\sigma$ and recall that $L_2(\sigma)$ denotes
the Euclidean space over $\bb C$ obtained by endowing $\mc A$ with the
GNS inner product  $\langle a,b\rangle_\sigma := \tr(a^* \sigma b)$.
Following \cite{TKRWV}, the \emph{quantum $\chi^2$-divergence} is the map $\mathrm{D}_{\chi^2} (\cdot | \sigma)\colon \mc A'_{+,1}\to \bb R_+$ defined by
\begin{equation}
 \label{dc2var} 
  \mathrm{D}_{\chi^2} (\rho | \sigma) :=
  \sup_{\|a\|_\sigma =1}  \big| \rho(a) -\sigma(a) \big|^2
  = \sup_{\|a\|_{\sigma}=1}
  \big| \langle \sigma^{-1}\rho-\ind, a \rangle_\sigma \big|^2
 \end{equation}
 where we emphasize that $\|\cdot\|_{\sigma}$ is the norm on
 $L_2(\sigma)$.
 By duality,
 \begin{equation*}
\mathrm{D}_{\chi^2} (\rho | \sigma)
= \big\| \sigma^{-1}\rho -\ind \big\|_{\sigma}^2
=\tr\big(
(\rho-\sigma)\sigma^{-1}(\rho-\sigma)
\big)
.
\end{equation*}
Note that $\mathrm{D}_{\chi^2} (\cdot | \cdot)$ is not symmetric so
that it cannot be used to introduce a distance on the set of
states.
However, as in the commutative case, the trace distance can be
bounded in terms of the quantum $\chi^2$-divergence.

\begin{lemma}
  \label{tr<c2}
  Let $\sigma$ be a faithful state. Then for each $\rho\in\mc
  A'_{+,1}$
  \begin{equation*}
    \mathrm{d}_\mathrm{tr} (\rho,\sigma)^2 \le
    \frac 14 \,\mathrm{D}_{\chi^2} (\rho|\sigma).
  \end{equation*}
\end{lemma}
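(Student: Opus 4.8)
The plan is to reduce both sides to statements about the single self-adjoint operator $\rho-\sigma$ and then invoke the Cauchy--Schwarz inequality in the GNS space $L_2(\sigma)$. First I would start from the operator-norm characterization of the trace distance. Writing $\Delta := \rho-\sigma$, which is self-adjoint, we have from \eqref{dtrvvar}
\begin{equation*}
  \mathrm{d}_\mathrm{tr}(\rho,\sigma) = \tfrac12 \sup_{\|a\|=1} \big| \tr(\Delta\, a)\big|.
\end{equation*}
The supremum can be restricted to self-adjoint $a$ (replacing $a$ by $(a+a^*)/2$ does not increase the operator norm and, since $\Delta=\Delta^*$, $\tr(\Delta a)$ has the same real part), and for self-adjoint $a$ with $\|a\|=1$ we simply have $\tr(\Delta a) = \sum_i \lambda_i \langle i|a|i\rangle$ in an eigenbasis of $\Delta$, which is maximized by taking $a$ to be $\mathrm{sgn}(\Delta)$; this recovers $\mathrm{d}_\mathrm{tr}(\rho,\sigma)=\tfrac12\sum_i|\lambda_i| = \tfrac12\|\Delta\|_1$, the trace norm. (Alternatively, I can take the second displayed formula for $\mathrm{d}_\mathrm{tr}$ in the excerpt as given.)

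Next I would rewrite the trace-norm in a way that brings in $\sigma^{-1}$, so as to match the $\chi^2$-divergence. The key identity is
\begin{equation*}
  \tr(\Delta\, a) = \tr\big( (\sigma^{-1/2}\Delta\,\sigma^{-1/2})\cdot \sigma^{1/2} a\, \sigma^{1/2}\big),
\end{equation*}
valid for all $a$ by cyclicity of the trace. Set $b := \sigma^{1/2} a\, \sigma^{1/2}$; but it is cleaner to work directly with the GNS inner product. Observe that $\langle \sigma^{-1}\Delta, a\rangle_\sigma = \tr(a^*\sigma\sigma^{-1}\Delta) = \tr(a^*\Delta)$, so for self-adjoint $a$ this is exactly $\tr(a\Delta)$ (up to the harmless complex conjugation absorbed in the sup). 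Hence
\begin{equation*}
  \mathrm{d}_\mathrm{tr}(\rho,\sigma) = \tfrac12 \sup_{\|a\|=1}\big|\langle \sigma^{-1}\Delta, a\rangle_\sigma\big|.
\end{equation*}
Now I bound the GNS inner product by Cauchy--Schwarz: $\big|\langle \sigma^{-1}\Delta, a\rangle_\sigma\big| \le \|\sigma^{-1}\Delta\|_\sigma\,\|a\|_\sigma$. By \eqref{dc2var} we have $\|\sigma^{-1}\Delta\|_\sigma^2 = \mathrm{D}_{\chi^2}(\rho|\sigma)$, so it remains to control $\|a\|_\sigma$ when $\|a\|=1$.

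The main obstacle, and the only nontrivial point, is precisely this last comparison between the operator norm and the GNS norm. Here I would use that $\sigma$ is a state, so $\tr(\sigma)=1$ and $\sigma\ge 0$: for any $a$,
\begin{equation*}
  \|a\|_\sigma^2 = \tr(a^*\sigma a) \le \|a^*a\|\,\tr(\sigma) = \|a\|^2.
\end{equation*}
(Indeed $a^*\sigma a \le \|a\|^2\,\sigma$ as positive operators, since $\sigma - \sigma^{1/2}(\|a\|^2 - a a^*)\sigma^{1/2}/\|a\|^2\cdots$ — more simply, $x^*(a^*\sigma a)x = (\sigma^{1/2}ax)\cdot(\sigma^{1/2}ax)$ times nothing, and $|\sigma^{1/2}ax|^2 \le \|a\|^2|\sigma^{1/2}x|^2$.) Thus $\|a\|=1$ forces $\|a\|_\sigma\le 1$, and combining the displays gives
\begin{equation*}
  \mathrm{d}_\mathrm{tr}(\rho,\sigma) \le \tfrac12\,\|\sigma^{-1}\Delta\|_\sigma = \tfrac12\,\sqrt{\mathrm{D}_{\chi^2}(\rho|\sigma)},
\end{equation*}
which squared is exactly the claimed inequality. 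The whole argument is short; I expect the only place requiring care is making sure the reduction to self-adjoint $a$ and the handling of the complex conjugation in the definition of the sup are done cleanly, but neither presents a real difficulty.
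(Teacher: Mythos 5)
Your proposal is correct and rests on exactly the same two ingredients as the paper's proof: the variational representations \eqref{dtrvvar} and \eqref{dc2var}, and the comparison $\|a\|_\sigma^2=\sigma(aa^*)\le\|a\|^2$. The detour through $\tr(\Delta a)=\langle\sigma^{-1}\Delta,a\rangle_\sigma$, Cauchy--Schwarz, and the reduction to self-adjoint $a$ merely unpacks the duality already built into \eqref{dc2var}, so the argument is essentially identical to the one in the paper.
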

A more general statement is proven in \cite[Lem.5]{TKRWV}.
For the reader's convenience,
we provide here an elementary variational proof.
\begin{proof}
  Since $\sigma\in \mc A'_{+,1}$, for each $a\in\mc A$,
  \begin{equation*}
    \|a\|_{\sigma}^2=
    \langle a,a \rangle_\sigma = \sigma(a a^*)
    \le \|a a^*\| = \|a\|^2.
  \end{equation*}
  The statement thus follows from the variational
  representations \eqref{dtrvvar} and
  \eqref{dc2var}. 
\end{proof}

We observe that the quantum relative entropy 
can be controlled in terms of the $\mathrm{D}_{\chi^2}$-divergence.
We refer to \cite[Prop.6]{TKRWV} for the proof of the following statement.
\begin{lemma}
Let $\sigma$ be a faithful state and
set $\Ent(\rho|\sigma)=\tr(\rho(\log\rho-\log\sigma))$.
Then
$$
\Ent(\rho|\sigma)
\leq
\mathrm{D}_{\chi^2}(\rho|\sigma)
\,.
$$
\end{lemma}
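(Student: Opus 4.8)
The plan is to prove the inequality by linear interpolation between $\sigma$ and $\rho$. Write $\Delta:=\rho-\sigma$ (so that $\tr\Delta=0$), set $\rho_t:=\sigma+t\Delta=(1-t)\sigma+t\rho$, which is positive definite for $t\in[0,1)$ since $\sigma$ is faithful, and consider $f(t):=\Ent(\rho_t|\sigma)$. Then $f(0)=0$, $f(1)=\Ent(\rho|\sigma)$ (a possible lack of faithfulness of $\rho$ being handled by continuity), and $\mathrm{D}_{\chi^2}(\rho|\sigma)=\tr(\Delta\sigma^{-1}\Delta)$ by the formula in the text, so it suffices to bound $f(1)$ by $\tr(\Delta\sigma^{-1}\Delta)$. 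Using the integral representation $\log x=\int_0^\infty\big((1+s)^{-1}-(x+s)^{-1}\big)\,ds$ one has $\tfrac{d}{dt}\log\rho_t=\int_0^\infty(\rho_t+s)^{-1}\Delta(\rho_t+s)^{-1}\,ds$; a short computation, using $\int_0^\infty x(x+s)^{-2}\,ds=1$ and $\tr\Delta=0$, then gives
\begin{equation*}
 f'(t)=\tr\big(\Delta(\log\rho_t-\log\sigma)\big),\qquad
 f''(t)=\int_0^\infty\tr\big((\rho_t+s)^{-1}\Delta(\rho_t+s)^{-1}\Delta\big)\,ds .
\end{equation*}
In particular $f'(0)=0$ and $f''(t)\ge0$, so Taylor's formula with integral remainder yields $f(1)=\int_0^1(1-t)f''(t)\,dt$.

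The core of the argument is the bound $f''(t)\le(1-t)^{-1}f''(0)$. Since $\rho_t\ge(1-t)\sigma$, for every $s\ge0$ we have $0<(\rho_t+s)^{-1}\le\big((1-t)\sigma+s\big)^{-1}$. Moreover the map $A\mapsto\tr(A\Delta A\Delta)$ is monotone on positive operators: its derivative at $A$ in a positive direction $H$ equals $2\tr\big(H(\Delta A\Delta)\big)\ge0$, because $\Delta A\Delta=(A^{1/2}\Delta)^*(A^{1/2}\Delta)\ge0$. Applying this with $A=(\rho_t+s)^{-1}$ and $A=\big((1-t)\sigma+s\big)^{-1}$, integrating over $s$, and rescaling $s\mapsto(1-t)s$ turns the resulting majorant into $(1-t)^{-1}f''(0)$. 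Substituting into the Taylor remainder gives $f(1)\le\int_0^1(1-t)\cdot(1-t)^{-1}f''(0)\,dt=f''(0)$.

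It remains to check $f''(0)\le\tr(\Delta\sigma^{-1}\Delta)$. Diagonalizing $\sigma=\sum_i\lambda_i|i\rangle\langle i|$ with $\lambda_i>0$ and using $\int_0^\infty(\lambda_i+s)^{-1}(\lambda_j+s)^{-1}\,ds=\frac{\log\lambda_i-\log\lambda_j}{\lambda_i-\lambda_j}$ (equal to $\lambda_i^{-1}$ when $i=j$), one obtains $f''(0)=\sum_{i,j}|\langle i|\Delta|j\rangle|^2\,\frac{\log\lambda_i-\log\lambda_j}{\lambda_i-\lambda_j}$, while $\tr(\Delta\sigma^{-1}\Delta)=\tfrac12\sum_{i,j}|\langle i|\Delta|j\rangle|^2(\lambda_i^{-1}+\lambda_j^{-1})$ after symmetrization. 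The desired inequality thus reduces to the elementary scalar estimate $\frac{\log a-\log b}{a-b}\le\tfrac12(a^{-1}+b^{-1})$ for $a,b>0$, i.e.\ to the fact that the logarithmic mean dominates the harmonic mean, which follows from $x-x^{-1}\ge2\log x$ for $x\ge1$. The step I expect to demand the most care is not any single inequality but the passage from the value $f''(0)$ to the intermediate values $f''(t)$ — i.e.\ the operator-monotonicity argument above — together with the routine justifications of differentiating under the integral sign in the formulas for $f'$ and $f''$ and of the limiting argument when $\rho$ is not faithful.
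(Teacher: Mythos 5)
Your proof is correct. Note, however, that the paper does not actually prove this lemma: it simply defers to \cite[Prop.~6]{TKRWV}, so your argument should be compared with that reference rather than with anything in the text. The route in \cite{TKRWV} starts from the integral representation $\log a-\log b=\int_0^\infty\big((b+s)^{-1}-(a+s)^{-1}\big)\,ds$ applied directly at $\rho$ and $\sigma$, which yields $\Ent(\rho|\sigma)\le\int_0^\infty\tr\big[(\rho-\sigma)(\sigma+s)^{-1}(\rho-\sigma)(\rho+s)^{-1}\big]\,ds$ after discarding a term equal to $-\Ent(\sigma|\rho)\le 0$, and then compares the resulting mean with the $\chi^2$-quadratic form. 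Your interpolation argument replaces this by a second-order Taylor expansion of $t\mapsto\Ent(\sigma+t(\rho-\sigma)\,|\,\sigma)$, with the two key ingredients being the comparison $f''(t)\le(1-t)^{-1}f''(0)$ (via $\rho_t\ge(1-t)\sigma$, operator anti-monotonicity of the inverse, and monotonicity of $A\mapsto\tr(A\Delta A\Delta)$ on the positive cone) and the scalar inequality that the logarithmic mean dominates the harmonic mean; both steps check out, as do the formulas for $f'$, $f''$, the vanishing of $f'(0)$ via $\int_0^\infty x(x+s)^{-2}\,ds=1$ and $\tr\Delta=0$, and the limiting argument for non-faithful $\rho$. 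What your approach buys is a self-contained proof identifying $f''(0)$ as the Kubo--Mori quadratic form and reducing everything to elementary means; what the citation buys the authors is brevity and the sharper chain of inequalities through the whole family of $\chi^2_\alpha$-divergences established in \cite{TKRWV}.
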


The next statement is the non-commutative version of a classical bound
for reversible Markov chains, see e.g.\ \cite[\S12.6]{Peres}.
Under suitable conditions on the spectral decomposition of the
generator, it can be used to deduce the exponential ergodicity of the
semigroup $(P_t)_{t\ge 0}$ in trace distance \emph{uniformly} with
respect to the initial state.

\begin{theorem}
  \label{t:tr<}
  Let $(P_t)_{t\ge 0}$ be an ergodic QMS on $\mc A$ reversible with respect to the faithful state $\sigma$.
  Let the corresponding generator $L$ have spectral decomposition
  \begin{equation}\label{eq:specdec}
    -L=\sum_{j\geq0}\lambda_j\langle f_j,\cdot\rangle_\sigma f_j
  \end{equation}
  where $\lambda_0=0$ and $f_0=\ind$.
  Then for any $\rho\in \mc A'_{+,1}$ and $t\ge 0$.
\begin{equation}
\label{eq:bound}
   \mathrm{d}_\mathrm{tr} (\rho P_t,\sigma)^2 \le
    \frac 14 \, \sum_{j\ge 1} e^{-2\lambda_j t} \big| \rho(f_j) \big|^2.
\end{equation}
In particular,
\begin{equation}
\label{eq:sup}
  \sup_{\rho\in \mc A'_{+,1} }\,\mathrm{d}_\mathrm{tr} (\rho P_t,\sigma)^2 \le
    \frac 14 \, \sum_{j\ge 1} e^{-2\lambda_j t} \|f_j\|^2.
\end{equation}
\end{theorem}
We emphasize that in the bound \eqref{eq:sup} the set $\{f_j\}_{j\ge 0}$
is an orthonormal basis in $L_2(\sigma)$, so that
$\|f_j\|_{\sigma}=1$ while $\|f_j\|\geq1$ is the operator norm of $f_j$. 
By ergodicity $\lambda_j>0$ for $j\geq1$.

\begin{proof}
  Since $(\rho P_t)(a) = \rho(P_t a)$,
  by definition \eqref{dc2var}, the invariance of $\sigma$,
  the self-adjointness of $P_t$ on
  $L_2(\sigma)$, and the spectral theorem,
  \begin{equation*}
    \begin{split}
      \mathrm{D}_{\chi^2} (\rho P_t|\sigma) & =
      \sup_{\|a\|_{\sigma}=1}
      \big|
      \langle \sigma^{-1}\rho-\ind, P_t a \rangle_\sigma \big|^2
      =
      \sup_{\|a\|_{\sigma}=1}
      \big|\langle P_t(\sigma^{-1}\rho)-\ind, a \rangle_\sigma\big|^2
      \\
      &=
      \big\| P_t (\sigma^{-1}\rho) -\ind
      \big\|_{\sigma}^2=
      \Big\| \sum_{j\ge 0} e^{-\lambda_j t}
      \langle f_j, \sigma^{-1}\rho\rangle_\sigma f_j  - \ind
      \Big\|_{\sigma}^2
      \\
      &= 
      \Big\| \sum_{j\ge 1} e^{-\lambda_j t} \langle f_j ,
      \sigma^{-1}\rho\rangle_\sigma f_j \Big\|_{\sigma}^2 =
    \sum_{j\ge 1} e^{-2\lambda_j t}
    \big| \langle f_j, \sigma^{-1}\rho\rangle_\sigma\big|^2
    \\
    &
    =\sum_{j\ge 1} e^{-2\lambda_j t} \big|\rho(f_j)\big|^2,
    \end{split}
    \end{equation*}
    where we used $\rho(f_j^*)=\overline{\rho(f_j)}$.
    The statement now follows from Lemma~\ref{tr<c2}.
  \end{proof}

\begin{corollary}
Let $(P_t)_{t\geq0}$ be a QMS with Lindblad generator
$L = i\ad_h+L_0$, where $L_0$ is reversible with respect to the state $\sigma=e^{-\beta h}/\tr(e^{-\beta h})$
for some $\beta>0$.
Assume that $-L_0$ has spectral decomposition given by the right-hand side of \eqref{eq:specdec}.
Then the bound \eqref{eq:bound} holds
for any $\rho\in \mc A'_{+,1}$ and $t\ge 0$.
\end{corollary}
\begin{proof}
As observed before, 
under the assumptions of the Corollary,
the semigroups generated by $L_0$ and $i\ad_h$ commute.
Since $\mathrm{d}_\mathrm{tr} (\rho e^{i\ad_h t},\sigma)
=\mathrm{d}_\mathrm{tr} (\rho,\sigma)$, the claim follows directly from Theorem \ref{t:tr<}.
\end{proof}

\section{Fermi Ornstein-Uhlenbeck semigroup}
\label{s:fou}

Given a positive integer $N$, let $a_k,a^*_k$, $k=1,\dots,N$ 
be a family of operators
satisfying the canonical anticommutation relations (CAR)
\begin{equation}\label{car}
  \{a_h,a_k\}=\{a^*_h,a^*_k\}=0, \qquad
  \{a_h,a^*_k\}=\delta_{h,k} \ind,
\end{equation}
where $\{a,b\}=ab+ba$ denotes the anticommutator of $a$ and $b$.
Let $\mc A$ be the $C^*$-algebra generated by the $a_k$, $a^*_k$, $k=1,\dots,N$.
This algebra can be realized as the $C^*$-algebra of operators on the Hilbert space
$H=(\bb C^2)^{\otimes N}$ in terms of the Jordan-Wigner transformation:
$$
a_k=\sigma_z^{\otimes(k-1)}\otimes \sigma^+\otimes\id_2^{\otimes(N-k)}
\,\,,\,\,\,\,
a^*_k=\sigma_z^{\otimes(k-1)}\otimes \sigma^-\otimes\id_2^{\otimes(N-k)}
\,,
$$
where
$$
\sigma^+=\left(\begin{array}{cc} 0&1 \\ 0&0 \end{array}\right)
\,,\,\,
\sigma^-=\left(\begin{array}{cc} 0&0 \\ 1&0 \end{array}\right)
\,,\,\,
\sigma_z=\left(\begin{array}{cc} 1&0 \\ 0&-1 \end{array}\right)
\,,\,\,
\id_2=\left(\begin{array}{cc} 1&0 \\ 0&1 \end{array}\right)
\,.
$$

Let $n_k=a_k^*a_k$, $k=1,\dots,N$, be the fermionic number operators,
which are pairwise commuting, self-adjoint, and satisfy $n_k^2=n_k$.
For a collection $\{\omega_k\}_{k=1}^N$ of reals,
consider the free fermionic Hamiltonian
$h=\sum_k\omega_kn_k$
and set $\sigma=e^{- h}/\tr(e^{- h})$, where we have absorbed the dependence of the inverse temperature $\beta$ in the parameters $\omega_k$.
Following \cite{CM1}, we introduce a QMS reversible with respect to $\sigma$.
Let $w$ be the self-adjoint and unitary element of $\mc A$ given by
$w=\prod_{k=1}^N(2n_k-\ind)$,  and set $v_k=wa_k$ and $v_k^*=a_k^*w$.
Observe that, since $w$ commutes with $h$,
$$
\ad_h(v_k)=-\omega_k v_k
\,,\qquad
\ad_h(v_k^*)=\omega_k v_k^*
\,.
$$
We then define the Lindblad generator
\begin{equation}
  \label{lho-f}
  L  = \sum_{k} \Big\{
  e^{\omega_k/2} \big( [v_k^*, \,\cdot\, ]v_k + v_k^* [\,\cdot\,,v_k] \big) 
  +e^{-\omega_k/2}
  \big( [v_k, \,\cdot\, ]v_k^* + v_k [ \,\cdot\, ,v_k^*] \big)\Big\}
  ,
\end{equation}
which is self-adjoint in $L_2(\sigma)$.

According to the standard terminology for Markov semigroups, see e.g. \cite[\S4.5]{Peres}, given $\epsilon\in(0,1/2)$, we define the \emph{$\epsilon$-mixing time} by
\begin{equation*}
  t_{\mathrm{mix}}(\epsilon)
  :=\sup_{\rho\in \mc A'_{+,1}}\inf\big\{t>0\colon \mathrm{d}_\mathrm{tr} (\rho P_t,\sigma)\leq\epsilon\big\},
\end{equation*}
where we recall that $\mc A'_{+,1}$ is the set of states on $\mathcal{A}$.
In words, for any $t\geq t_{\mathrm{mix}}(\epsilon)$ and any state $\rho$ the state $\rho P_t$ is $\epsilon$-close to $\sigma$ in trace distance.
The exponential ergodicity in trace distance of the QMS $(P_t)_{t\geq0}$ generated by $L$
is the content of the following result.

\begin{theorem}
Set $\Lambda=\inf_{k=1,\ldots,N} 2\varch\big(\frac{\omega_k}2\big)$.
Then
\begin{equation*}
4 \sup_{\rho \in   \mc A'_{+,1} }
\mathrm{d}_\mathrm{tr} (\rho P_t,\sigma)^2
\le (1+9e^{-2\Lambda(t-1)})^N-1.
\end{equation*}
In particular, for any $\epsilon\in(0,1/2)$
\begin{equation}\label{eq:stima-t-f}
t_{\mathrm{mix}}(\epsilon)\leq
1+
\frac{1}{2\Lambda}\log\Big(\frac{9N}{\log(1+4\epsilon^2)}\Big).
\end{equation}
\end{theorem}
The dependence on $N$ of the mixing time provided by \eqref{eq:stima-t-f} is optimal in general.
Indeed, in the case $\omega_k=0$, $k=1,\dots,N$,  by restricting the action of $(P_t)_{t\geq0}$ to diagonal matrices
we get a version of the classical Markov semigroup corresponding to the random walk on the hypercube $\{0,1\}^N$.
As follows from \S 18.2.2 and Theorem 20.3 in \cite{Peres}, the mixing time of this process  has the same $N$-dependence
as the right-hand side of \eqref{eq:stima-t-f}.
\begin{proof}
The spectral decomposition of $-L$ has been obtained in \cite{CM1}.
A complete system of eigenvector of $-L$,
parametrized by $\alpha=(\alpha_1,\dots,\alpha_N)\in(\{0,1\}\times\{0,1\})^N$,
is given by
$$
g_\alpha
=
g_{1,\alpha_1}\dots g_{N,\alpha_N}
\,,
$$
where
$$
g_{k,(0,0)}
=
\ind
\,,\,\,
g_{k,(1,0)}
=
a_k
\,,\,\,
g_{k,(0,1)}
=
a_k^*
\,,\,\,
g_{k,(1,1)}
=
e^{\omega_k/2} n_k
-
e^{-\omega_k/2}(\ind-n_k)
\,.
$$
The corresponding eigenvalue is
$$
\lambda_\alpha
=
2 \sum_{k=1}^n |\alpha_k| \varch\big(\tfrac{\omega_k}{2}\big)
\,,
$$
where $|\alpha_k|=i+j$ for $\alpha_k=(i,j)\in\{0,1\}\times\{0,1\}$.

By direct computation,
$$
\langle g_\alpha,g_\beta\rangle_\sigma
=
\delta_{\alpha,\beta}
\prod_{k=1}^N A(\omega_k)_{\alpha_k}
\quad\text{where}\quad
A(\omega_k)
=
\left(\begin{array}{cc}
1&\frac1{1+e^{\omega_k}} \\
\frac1{1+e^{-\omega_k}} & 1
\end{array}\right)
\,,
$$
while 
$$
\|g_\alpha\|
=
\prod_{k=1}^N B(\omega_k)_{\alpha_k}
\quad\text{where}\quad
B(\omega_k)
=
\left(\begin{array}{cc}
1&1 \\
1&e^{|\omega_k|/2}
\end{array}\right)
\,.
$$

Theorem \ref{t:tr<} and elementary computations yield
\begin{align*}
& 4\sup_{\rho \in   \mc A'_{+,1}}
\mathrm{d}_\mathrm{tr} (\rho P_t,\sigma)^2
\leq
\sum_{\alpha}e^{-2\lambda_\alpha t}\frac{\|g_\alpha\|^2}{\|g_\alpha\|^2_{\sigma}}
-1
\\
& =
\prod_{k=1}^N
\Big(
1+
8\varch^2(\omega_k/2)\varch(\omega_k)
e^{-4\varch(\omega_k/2)t}
+
e^{|\omega_k|}
e^{-8\varch(\omega_k/2)t}
\Big)
-1
\\
&\leq
\prod_{k=1}^N
\Big(
1+
9e^{2|\omega_k|-4\varch(\omega_k/2)t}
\Big)
-1
\\
& \le 
(1+9e^{-2\Lambda(t-1)})^N-1
\,,
\end{align*}
where, in the first inequality we used $\varch x\leq e^{|x|}$, 
while in the second inequality we used $|x|\leq\varch x$ and the definition of $\Lambda$.
\end{proof}

\section{Adjoint action of the Casimir element as a Lindblad generator}
\label{s:la}

In this section we analyze a family of QMS 
parametrized by a finite dimensional semisimple Lie algebra $\mf g$ and a
finite dimensional irreducible $\mf g$-module $V$.
As discussed in \cite{Ba,Li:Ju:La}, these QMS can be obtained by `transference' from the heat semigroup on the corresponding connected compact Lie group.
For this family we prove exponential ergodicity in trace distance
with a rate depending on $\mf g$ but not on $V$.

\subsection*{A motivating example}

Let $\ell_1,\ell_2,\ell_3$ be the orbital angular momentum of a
quantum particle in $\bb R^3$. In the Schr\"odinger representation they are
given by
$\ell_{j} = \epsilon_{jhk} \,x_h \,p_k=-i \, \epsilon_{jhk} \, x_h
\,\partial_{x_k}$ where $\epsilon_{jhk}$ is the totally antisymmetric
tensor of rank three and we used Einstein convention of summing over repeated indices. 
Note that $\ell_j$ are Hermitian and satisfy the
commutation relations $[\ell_j,\ell_h] = i \,\epsilon_{jhk}\, \ell_k$.
Denote by $S^2$ the two-dimensional sphere and
by $\Sigma$ the uniform probability on $S^2$.
%
Let also $\mc A$ be the family of bounded operators
on  $L_2(S^2; d\Sigma)$ endowed with the operator norm,
and consider the Lindblad operator on
$\mc A$ given by
\begin{equation*}
  L a 
  = \sum_{j=1}^3 \big\{ [\ell_j, a]\ell_j +\ell_j [a,\ell_j] \big\}
  =- \sum_{j=1}^3 [\ell_j,[\ell_j, a]]
\end{equation*}
that is well defined for a suitable dense subset of $\mc A$.
This operator can be seen as a non-commutative analogue of the
Laplace-Beltrami operator on $S^2$. 
%

The operator $L$ is not ergodic in $L_2(S^2; d\Sigma)$, its ergodic
decomposition is however simply achieved. Decompose
$L_2(S^2; d\Sigma)$ into the eigenspaces of the Casimir operator
$\ell^2=\sum_{j=1}^3 \ell_j^2$, i.e.\
$L_2(S^2; d\Sigma)=\bigoplus_{n=0}^{\infty} V_n$, where $V_n$, the
space of spherical harmonics of degree $n$, has dimension $2n+1$.  In
particular, $\ell^2$ acts on $V_n$ as the scalar operator
$n(n+1)\ind$.

Let $\mc A_n$ be the $C^*$-algebra $\mathrm{End}(V_n)$ endowed with
the operator norm.  It is simple to check that $L$ restricted to
$\mc A_n$ is reversible with respect to the normalized trace
$\sigma_n$ and ergodic.  The next natural issue regards the
speed of convergence to the invariant state $\sigma_n$ on each ergodic
component.  Relying on Theorem~\ref{t:tr<}, we will next show that
such convergence is in fact uniform in $n$. More precisely, denoting
by $\mc A'_{n,+,1}$ the set of states on $\mc A_n$, the following
bound holds.  There exists a universal constant $A\in (0,\infty)$
such that for any $n\in\bb Z_+$ and any $t\ge 0$
\begin{equation}\label{4t}
  \sup_{\rho\in \mc A'_{n,+,1}}
  \mathrm{d}_\mathrm{tr}\big(\rho P_t^n,\sigma_n\big)
  \le A \, e^{-2t}
 \end{equation}
where $P_t^n$ is the restriction of the semigroup generated by $L$
to $\mc A_n$.

\subsection*{Lie algebraic formulation}

Let $\mf g$ be a semisimple finite dimensional Lie algebra over $\bb C$
and let $\mf g_0$ be its compact real form, which has negative
definite Killing form $\kappa$, see e.g.\ \cite{Hum}. Then
$\mf g=\mf g_0 \otimes_{\bb R} \bb C $ is equipped with the
conjugation $(a_0+i b_0)^* := - a_0 + i b_0$ and the Euclidean inner
product $\langle a, b\rangle_{\mf g} := \kappa (a^*, b)$.  Let
$\pi\colon \mf g \to \mathfrak{gl} (V)$ be a finite dimensional
irreducible \emph{unitary representation} of $\mf g$ on $V$.  Namely,
$V$ is a Euclidean space over $\bb C$ with inner product
$(\cdot\,,\,\cdot)_V$ and $\pi$ is a Lie algebra
homomorphism such that $\pi(\mf{g}_0) \subset \mathfrak{u}(V)$. 
Here
$\mathfrak{gl} (V)$ denotes the general linear Lie algebra
$\mathrm{End} (V)$ with Lie bracket given by the commutator, while
$\mf{u}(V)\subset\mathrm{End}(V)$ is the real Lie algebra of
skew-Hermitian endomorphisms of $V$.  Observe that, by linearity,
$\pi(g^*)=\pi(g)^*$, $g\in\mf g$.  As customary in Lie theory, we will
typically drop $\pi$ from the notation.

Let $d:=\mathrm{dim}({\mf g})$ and fix an orthonormal basis
$\{\ell_j\}_{j=1}^{d}$ of the real Euclidean space $i\mf g_0$. 
Then $\{\ell_j\}_{j=1}^{d}$ is also basis of the complex space $\mf g$.
Let $\mc A$ be the $C^*$-algebra $\mathrm{End}(V)$ endowed with the
operator norm $\|\cdot\|$ and consider the Lindblad generator
$L \colon  \mc A \to \mc A$ given by
\begin{equation}
  \label{ling}
  L := - \sum_{j=1}^d \ad_{\ell_j}^2,
\end{equation}
which is reversible with respect to $\sigma$, the normalized trace
on $\mc A$. The operator $-L$ corresponds to the adjoint action of
the Casimir element of $\mf g$ on the $\mf g$-module $\mathrm{End}(V)$.
In particular, \eqref{ling} does not depend on the choice of the
orthonormal basis $\{\ell_j\}_{j=1}^d$.
Moreover, by the irreducibility of $V$, 
the QMS $(P_t)_{t\ge0}$ generated by $L$ is ergodic.

As in Section~\ref{s:2}, $L_2(\sigma)$ is the complex
Euclidean space $\mc A$ endowed with the inner product
\begin{equation}\label{eq:inner}
  \langle a, b \rangle_{\sigma} =\frac {\tr (a^* b)
  }{\mathrm{dim}(V)},
\end{equation}
i.e.\ $\langle \cdot\,,\,\cdot  \rangle_{\sigma}$ is the (normalized)
Hilbert-Schmidt inner product. Since $V$ is a unitary $\mf g$-module,
$L_2(\sigma)$ is also a unitary $\mf g$-module with respect to 
the adjoint action of $\mf g$.  

\begin{theorem}
  \label{t:mrl}
  There exists a constant $g_0\in(0,1]$ depending on the Lie
  algebra $\mf g$ but independent of the representation $V$ such that  
  $\gap(-L)\in [g_0,1]$.
  Furthermore, there is a constant $A\in (0,\infty)$ depending on
  $\mf g$ but independent of $V$  such that for any $t\ge 0$
  \begin{equation}\label{bound}
    \sup_{\rho\in \mc A'_{+,1}}
    \mathrm{d}_\mathrm{tr}\big(\rho P_t,\sigma\big)
    \le A \, e^{-\gap(-L)\,t}.
  \end{equation}
If $\mf g$ is simple, then the value of $g_0$ is given in the following table where $r\in\bb N$.
\begin{table}[h!]
\centering
\begin{tabular}{ c | c c c c c c c c c }
\vphantom{$\Big($}
$\mf g$ & $\mf{sl}_{r+1}$ & $\mf{so}_{2r+1}$ & $\mf{sp}_{2r}$ & $\mf{so}_{2r}$ & $E_6$ & $E_7$ & $E_8$ & $F_4$ & $G_2$ \\ 
\hline
\vphantom{$\Big($}
$g_0$ & 1 & $\frac{r}{2r-1}$ & $\frac{r}{r+1}$ & 1 & 1 & 1 & 1 & $\frac23$ & $\frac12$
\end{tabular}
\end{table}
\end{theorem}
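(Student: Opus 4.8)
\emph{Strategy.} The plan is to identify $-L$ with the Casimir element of $\mf g$ acting on $\End(V)\cong V\otimes V^*$ by the adjoint action, read off its spectrum from the decomposition of $V\otimes V^*$ into irreducibles, and obtain the uniform trace‑distance estimate from a convolution (heat‑kernel) representation of $P_t$ on the compact adjoint group of $\mf g$. The table then follows from a short weight computation in each Dynkin type.

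\emph{Spectrum and the interval $[g_0,1]$.} Since $\ell_j\in i\mf g_0$ we have $\ell_j^*=\ell_j$, so each $\ad_{\ell_j}$ is self‑adjoint on $L_2(\sigma)$ and $-L=\sum_j\ad_{\ell_j}^2$ is precisely the Casimir of $\mf g$ on the module $\End(V)$. Writing $V\otimes V^*=\bigoplus_\mu W_\mu^{\oplus m_\mu(V)}$ with $W_\mu$ the irreducible of highest weight $\mu$, the Casimir acts on $W_\mu$ by the scalar $c_\mu=\langle\mu,\mu+2\varrho\rangle$ in the Killing normalization, for which $c_\theta=1$ at the highest root $\theta$; hence $\mathrm{spec}(-L)=\{c_\mu:m_\mu(V)>0\}$ and $\gap(-L)=\min\{c_\mu:m_\mu(V)>0,\ \mu\neq0\}$. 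Because $\End(V)$ has trivial central character, every occurring $\mu$ lies in the root lattice $Q$, and since $c_\mu>0$ on the discrete set $(Q\cap P^+)\setminus\{0\}$, the number
\[
g_0:=\min\{\,c_\mu:\mu\in(Q\cap P^+)\setminus\{0\}\,\}
\]
is a positive constant depending only on $\mf g$, and $\gap(-L)\geq g_0$. Conversely, for any nontrivial $V$ the action map $\mf g\otimes V\to V$ is a nonzero $\mf g$‑homomorphism, so $V$ is a summand of $\mf g\otimes V$ and $\mf g\hookrightarrow V\otimes V^*$; its constituents have Casimir $1$, so $\gap(-L)\leq1$, and taking $\mu=\theta$ gives $g_0\leq1$. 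Thus $g_0\in(0,1]$ and $\gap(-L)\in[g_0,1]$.

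\emph{Convolution formula and the bound.} Let $G_0$ be the compact connected adjoint group with Lie algebra $\mf g_0$, equipped with the bi‑invariant metric induced by $-\kappa|_{\mf g_0}$, and let $p_t=\sum_\pi(\dim\pi)e^{-c_\pi t}\chi_\pi$ be its heat kernel, summed over the irreducible $G_0$‑modules (equivalently over $\mu\in Q\cap P^+$). As $a\mapsto\Ad_{\pi(\tilde g)}(a)$ depends only on the class of $g$ in $G_0$, differentiating in $t$ and matching generators — the heat Laplacian $\Box$ of $G_0$ satisfies $(\Box F_a)(g)=L\big(F_a(g)\big)$ for $F_a(g):=\pi(\tilde g)^{-1}a\,\pi(\tilde g)$, a short computation in which one must track the signs coming from squaring the left‑invariant fields attached to $i\ell_j\in\mf g_0$ — yields
\[
\rho P_t=\int_{G_0}p_t(g)\,\Ad_{\pi(\tilde g)}(\rho)\,dg .
\]
Subtract $\sigma=\int_{G_0}\Ad_{\pi(\tilde g)}(\rho)\,dg$. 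The finite‑rank map $f\mapsto\int_{G_0}f(g)\Ad_{\pi(\tilde g)}(\rho)\,dg$ intertwines left translation on $L^2(G_0)$ with conjugation on $\End(V)$, hence annihilates every $G_0$‑isotypic subspace not occurring in $\End(V)$; so $p_t$ may be replaced by its truncation $\tilde p_t:=\sum_{\mu\in S}(\dim W_\mu)e^{-c_\mu t}\chi_\mu$, where $S:=\{\mu:m_\mu(V)>0\}$ is finite and contains $0$. Therefore $\rho P_t-\sigma=\int_{G_0}(\tilde p_t(g)-1)\Ad_{\pi(\tilde g)}(\rho)\,dg$, and since $\Ad_{\pi(\tilde g)}$ preserves the trace norm and $\rho$ is a state,
\[
\mathrm{d}_\mathrm{tr}(\rho P_t,\sigma)=\tfrac12\|\rho P_t-\sigma\|_1\leq\tfrac12\!\int_{G_0}\!|\tilde p_t(g)-1|\,dg\leq\tfrac12\|\tilde p_t-1\|_{L^2(G_0)}=\tfrac12\Big(\sum_{\mu\in S\setminus\{0\}}(\dim W_\mu)^2e^{-2c_\mu t}\Big)^{1/2}.
\]
For $t\geq1$, factoring out $e^{-2\gap(-L)t}$ and using $c_\mu\geq\gap(-L)$ on $S\setminus\{0\}$ together with $\gap(-L)\leq1$, the last sum is at most $e^{2}\sum_{\mu\in(Q\cap P^+)\setminus\{0\}}(\dim W_\mu)^2e^{-2c_\mu}=e^{2}\big(p_1(e)-1\big)<\infty$, which depends only on $\mf g$; for $0\leq t\leq1$ one uses $\mathrm{d}_\mathrm{tr}\leq1\leq e^{\gap(-L)}e^{-\gap(-L)t}$. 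Combining gives $\mathrm{d}_\mathrm{tr}(\rho P_t,\sigma)\leq A\,e^{-\gap(-L)t}$ for all $t\geq0$, with $A$ depending only on $\mf g$, hence uniformly in $V$ and $\rho$.

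\emph{The table and the main obstacle.} For simple $\mf g$, $g_0=(2h^\vee)^{-1}\min\{\langle\mu,\mu+2\varrho\rangle:\mu\in(Q\cap P^+)\setminus\{0\}\}$ in the normalization where long roots have squared length $2$ (so $\langle\theta,\theta+2\varrho\rangle=2h^\vee$). Reading off $Q\cap P^+$ from the centre $P/Q$ and minimizing $\langle\mu+\varrho,\mu+\varrho\rangle$, the minimizer is $\theta$ — giving $g_0=1$ — for $\mf{sl}_{r+1}$, $\mf{so}_{2r}$, $E_6$, $E_7$, $E_8$; it is the vector weight $\varpi_1$ for $\mf{so}_{2r+1}$ (then $\langle\varpi_1,\varpi_1+2\varrho\rangle=2r$ and $h^\vee=2r-1$, so $g_0=\tfrac{r}{2r-1}$); it is $\varpi_2$ for $\mf{sp}_{2r}$ ($\langle\varpi_2,\varpi_2+2\varrho\rangle=2r$, $h^\vee=r+1$, so $g_0=\tfrac{r}{r+1}$); the $26$‑dimensional weight for $F_4$ (Casimir $\tfrac23$ of the adjoint's, $h^\vee=9$, so $g_0=\tfrac23$); and the $7$‑dimensional weight for $G_2$ (Casimir $\tfrac12$ of the adjoint's, $h^\vee=4$, so $g_0=\tfrac12$) — one short check per Dynkin type. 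I expect the main difficulty to be obtaining the \emph{sharp} rate $\gap(-L)$ rather than the weaker $g_0$: one cannot conclude from Theorem~\ref{t:tr<} directly, because for the maximally mixed $\sigma$ the $\chi^2$‑divergence of a pure state is of order $\dim V$, so that route carries a spurious $\sqrt{\dim V}$; passing to the trace norm forces the convolution representation, and recovering the exponent $\gap(-L)$ forces the truncation of $p_t$ to the representations actually occurring in $\End(V)$ before the $L^2$‑estimate on $G_0$. The sign bookkeeping in the identity $\Box=L$ and the per‑type Casimir minimizations are the remaining points needing care.
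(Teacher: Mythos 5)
Your proposal is correct, and while the spectral part (identification of $-L$ with the Casimir on $V\otimes V^*$, the restriction of the occurring highest weights to the root lattice, and the resulting definition of $g_0$ and the bounds $g_0\le \gap(-L)\le 1$) coincides with the paper's argument, your derivation of the uniform trace-distance bound \eqref{bound} is genuinely different. The paper applies Theorem~\ref{t:tr<} and then controls $\sum_j\|f_j^{(\mu)}\|^2$ by combining the operator-norm estimate $\|b\|\le\sqrt{\dim \mc B}\,\|b\|_\sigma$ for $\mf g$-submodules (Theorem~\ref{t:ha}) with the PRV multiplicity bound $n_{\lambda,\lambda^*}(\mu)\le\dim V_\mu$ (Lemma~\ref{nuovo}), arriving at $\dim(V_\mu)^3 e^{-2c_\mu t}$ per isotypic block; you instead integrate the action to the compact adjoint group, write $P_t$ as convolution with the heat kernel, truncate the kernel by Schur orthogonality to the representations actually occurring in $\End(V)$, and apply Cauchy--Schwarz on $L^2(G_0)$, arriving at $(\dim W_\mu)^2 e^{-2c_\mu t}$ per block. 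Your route thus bypasses both Theorem~\ref{t:ha} and the PRV lemma and even gives a marginally smaller constant, at the cost of importing the group-level heat-kernel machinery, whereas the paper's argument stays entirely within the finite-dimensional algebraic framework of Section~\ref{s:2} and produces the independently useful Theorem~\ref{t:ha}. One remark in your closing paragraph is inaccurate: the route through Theorem~\ref{t:tr<} does \emph{not} carry a spurious $\sqrt{\dim V}$, because the bound uses $|\rho(f_j)|\le\|f_j\|$ together with $\|f_j\|\le\sqrt{\dim V_\mu}\,\|f_j\|_\sigma$, so the dimension entering is that of the isotypic component $V_\mu$, not of $V$ -- this is precisely the point of Theorem~\ref{t:ha} and Lemma~\ref{nuovo}. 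Two minor slips that do not affect correctness: the constant in your $L^2$ estimate should be $p_2(e)-1$ rather than $p_1(e)-1$ (the exponent is $e^{-2c_\mu}$), and the per-type Casimir minimizations, which you only sketch, do match the case analysis carried out in Appendix~\ref{App1}.
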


For simple Lie algebras the bound $\gap(-L)\geq g_0$
is optimal in the sense that for each Lie algebra $\mf g$
there exists a representation $V$ for which $\gap(-L)$
coincides with the value of $g_0$ in the above table.
In fact, we have $\gap(-L)=g_0$ unless $\mf g$ is of type $B, C, F, G$,
and, even in these cases, $\gap(-L)=g_0$ except for ``few'' irreducible representations $V$.

The example discussed at the beginning of this section corresponds to the choice 
$\mf g = \mathfrak{so}_3 \simeq\mf{sl}_2$ together with its
unique irreducible representation $V_n$ of dimension $2n+1$. 
Observe that the factor $2$ on the right-hand side of \eqref{4t},
with respect to the value $g_0=1$ for $\mf g=\mf{sl}_2$ in Theorem \ref{t:mrl}
is due to a different normalization on the generators $\ell_j$, $j=1,2,3$.

\subsection*{Bounding operator norm by Hilbert-Schmidt norm}

In view of Theorem~\ref{t:tr<}, a key step for the proof of
Theorem~\ref{t:mrl} consists in obtaining a bound on the operator norm
of the eigenvectors of $L$ in terms of their norm in $L_2(\sigma)$.

Recall the following basic fact in harmonic analysis.
Let $G$ be a compact group that acts transitively on a set $X$ and endow
$X$ with the probability measure $\mu$ that is left invariant by the
action of $G$.  If $\mc B$ is a finite-dimensional $G$-stable subspace of
$L_2(X;\mu)$ then for any $b\in \mc B$  the bound
$\|b\|_\infty \le \sqrt{\mathrm{dim}(\mc B)} \, \|b\|_{L_2(X;\mu)}$ holds.  
The present aim, which has an independent
interest, is to provide a non-commutative version of this statement.


\begin{theorem}
  \label{t:ha}
  Let $\mc B$ be a $\mf g$-submodule of
  $L_2(\sigma)$. Then, for every $b\in\mc B$,
  \begin{equation}
    \label{op<hs}
    \|b\| \le \sqrt{\mathrm{dim}(\mc B)} \, \|b\|_\sigma
\,.  \end{equation} 
\end{theorem}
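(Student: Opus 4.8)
The plan is to bound the operator norm of an element $b\in\mc B$ through its matrix elements $(\eta,b\xi)_V$. Fix an orthonormal basis $\{E_i\}_{i=1}^{m}$ of $\mc B$ in $L_2(\sigma)$, where $m=\dim(\mc B)$, and let $\Pi\colon L_2(\sigma)\to\mc B$ be the orthogonal projection. For unit vectors $\xi,\eta\in V$ one has $\|b\|=\sup_{|\xi|=|\eta|=1}|(\eta,b\xi)_V|$, and since $\tr\big(b\,|\xi\rangle\langle\eta|\big)=(\eta,b\xi)_V$, the normalization of $\langle\cdot,\cdot\rangle_\sigma$ gives $(\eta,b\xi)_V=\dim(V)\,\langle\,|\eta\rangle\langle\xi|\,,\,b\,\rangle_\sigma=\dim(V)\,\langle\Pi(|\eta\rangle\langle\xi|),b\rangle_\sigma$, the last step using $b\in\mc B$ and $\Pi^*=\Pi$. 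By Cauchy--Schwarz the theorem then reduces to the estimate
\begin{equation*}
\dim(V)^2\,\big\|\Pi\big(|\eta\rangle\langle\xi|\big)\big\|_\sigma^2\le m
\qquad\text{for all unit }\xi,\eta\in V.
\end{equation*}

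Next I would expand the projection in the basis $\{E_i\}$: $\big\|\Pi(|\eta\rangle\langle\xi|)\big\|_\sigma^2=\sum_i\big|\langle E_i,|\eta\rangle\langle\xi|\rangle_\sigma\big|^2=\dim(V)^{-2}\sum_i\big|(\eta,E_i\xi)_V\big|^2$, so the quantity to be controlled is $\sum_i|(\eta,E_i\xi)_V|^2=\big(\eta,\big(\sum_i E_i\,|\xi\rangle\langle\xi|\,E_i^*\big)\eta\big)_V$. Because the operator $\sum_i E_i\,|\xi\rangle\langle\xi|\,E_i^*$ is positive, its quadratic form at the unit vector $\eta$ is at most its trace, and $\tr\big(\sum_i E_i\,|\xi\rangle\langle\xi|\,E_i^*\big)=\big(\xi,\big(\sum_i E_i^*E_i\big)\xi\big)_V$. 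Thus the whole argument hinges on identifying the operator $\sum_{i=1}^m E_i^*E_i\in\End(V)$.

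The crux, and the step I expect to carry the proof, is the identity $\sum_{i=1}^m E_i^*E_i=m\,\ind$. First, $\sum_i E_i^*E_i$ is independent of the chosen orthonormal basis of $\mc B$: replacing $E_i$ by $E_i'=\sum_j U_{ij}E_j$ with $U$ unitary leaves the sum unchanged, since $\sum_i E_i'^*E_i'=\sum_{j,k}\big(\sum_i\overline{U_{ij}}U_{ik}\big)E_j^*E_k=\sum_j E_j^*E_j$. Consequently it is $\mf g$-invariant: for $x$ in the compact real form $\mf g_0$ the operator $e^{t\,\ad_x}$ restricts to a unitary of $\mc B$ (as $\mc B$ is a unitary $\mf g$-submodule of $L_2(\sigma)$), hence sends $\{E_i\}$ to another orthonormal basis of $\mc B$; and since $e^{t\,\ad_x}$ acts on $\End(V)$ as conjugation by the unitary $e^{t\pi(x)}$, basis independence yields $e^{t\pi(x)}\big(\sum_i E_i^*E_i\big)e^{-t\pi(x)}=\sum_i E_i^*E_i$; differentiating at $t=0$ and using $\mf g=\mf g_0\otimes_{\bb R}\bb C$ shows that $\sum_i E_i^*E_i$ commutes with $\pi(\mf g)$. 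By Schur's lemma, applicable since $V$ is an irreducible $\mf g$-module, $\sum_i E_i^*E_i=c\,\ind$; taking traces and using $\tr(E_i^*E_i)=\dim(V)\,\|E_i\|_\sigma^2=\dim(V)$ gives $c=m$.

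Assembling the pieces, $\dim(V)^2\|\Pi(|\eta\rangle\langle\xi|)\|_\sigma^2=\sum_i|(\eta,E_i\xi)_V|^2\le\big(\xi,m\,\ind\,\xi\big)_V=m$ for all unit $\xi,\eta$, whence $|(\eta,b\xi)_V|\le\sqrt m\,\|b\|_\sigma$ and, taking the supremum over unit vectors, \eqref{op<hs} follows. It is worth emphasizing where non-commutativity enters: in the classical statement \eqref{op<hs} reflects the fact that the diagonal of the reproducing kernel is constant and integrates to $\dim(\mc B)$, whereas here the analogous operator $\sum_i E_i|\xi\rangle\langle\xi|E_i^*$ is in general neither scalar nor independent of $\xi$; it is only after passing to its trace, where Schur's lemma collapses the $\xi$-dependence, that the sharp constant $\dim(\mc B)$ emerges, which is precisely why the detour through $\sum_i E_i^*E_i$ is the right move.
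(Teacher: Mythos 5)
Your proof is correct and follows essentially the same route as the paper: both arguments reduce the bound to the identity $\sum_i E_i^*E_i=\dim(\mc B)\,\ind$, established via Schur's lemma and a trace computation, and then conclude by Cauchy--Schwarz on matrix elements. The only (minor) difference is how the $\mf g$-invariance of $\sum_i E_i^*E_i$ is verified — you integrate the action of the compact real form to a group of unitaries and invoke basis-independence, whereas the paper does a direct Lie-bracket computation using that $\{E_i^*\}$ is an orthonormal basis of $\mc B^*$; both are valid.
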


\begin{proof}
  Pick an orthonormal basis $\{b_{i}\}$ of $\mc B$.
  We claim that
  \begin{equation}
    \label{cl1}
    B:= \sum_i b_i^*b_i = \textrm{dim}(\mc B) \, \ind.
  \end{equation}
  Indeed, for $g\in\mf g$
  \begin{equation*}
    \begin{split}
    \sum_{i} [g,b_i^*] b_i  &= \sum_{i,j}
    \langle b_j^* ,[g,b_i^*] \rangle_\sigma b_j^* b_i
    \\
    &= \sum_{i,j}
    \langle b_i ,[b_j,g] \rangle_\sigma b_j^* b_i
    = \sum_{j} b_j^* [b_j,g]    
  \end{split}
\end{equation*}
  where we used that $\{b_i^*\}$ is an  orthonormal
  basis of $\mc B^*:=\{ b^*, \, b\in \mc B\}$ and $\mc B,\mc B^*$ are $\mf g$-submodules.
  We deduce that $[g,B]=0$. Hence, by the
  irreducibility of $V$ and Schur's lemma, $B=c \ind$ for some
  $c\in \bb R$.
  Moreover,
  \begin{equation*}
    c= \langle \ind, B \rangle_\sigma =\sum_{i}
    \langle b_i, b_i\rangle_\sigma= \textrm{dim}(\mc B)
  \end{equation*}
  as claimed.

  We next show that for each $u,v \in V$ we have
  \begin{equation}
    \label{s1}
    \sum_{i} \big| (u, b_i v)_V \big|^2 \le \textrm{dim}(\mc B) \, (u,u)_V\, (v,v)_V.
  \end{equation}
  Indeed, by Cauchy-Schwarz inequality and \eqref{cl1},
  \begin{equation*}
    \begin{split}
      \sum_{i} \big| (u, b_i v)_V \big|^2 & \le
    \sum_{i}  (u,u)_V (b_i v,b_i v)_V
    \\
    &
    =(u,u)_V\,  \Big( v\,,\, \sum_{i} b_i^*b_i v \Big)_V \, 
    = \textrm{dim}(\mc B) \, (u, u)_V \, (v,v)_V.
    \end{split}
  \end{equation*}
  
  We finally complete the proof by showing that the bound \eqref{s1}
  implies \eqref{op<hs}. Observe that if $b\in\mc B$ then
  $b =\sum_{i}\langle b_i,b\rangle_\sigma b_i$ and
  $\|b\| =\sup_{u,v} \big| (u,bv)_V\big|$ where the supremum is
  carried out over $u,v\in V$ such that $(u,u)_V=(v,v)_V=1$.  By
  Cauchy-Schwarz inequality we then get
  \begin{equation*}
    \begin{split}
    \|b\|^2 & = \sup_{u,v}  \big| (u,bv )_V\big|^2 
    =\sup_{u,v} \Big| \sum_{i} 
    (u, b_i v)_V 
     \langle b_i, b \rangle_\sigma
    \Big|^2
    \\
    & \le  \sup_{u,v} 
    \sum_{i} \big| (u,b_i v)_V\big|^2
    \,\sum_{i} \big|\langle b_i, b \rangle_\sigma \big|^2
    \le \textrm{dim}(\mc B) \, \|b\|_\sigma^2
  \end{split}
\end{equation*}
  where we used \eqref{s1} in the last step.
\end{proof}

\subsection*{Convergence in trace distance}

In this section we complete the proof of Theorem \ref{t:mrl}.
Referring the unfamiliar reader e.g.\ to \cite{Hum},
we recall some basic facts about finite dimensional representations of semisimple Lie algebras.

Let $(E,\Phi)$ be the \emph{root system} associated to the Lie algebra $\mf g$.
Here $E$ is a real Euclidean space of dimension $r=\rank(\mf g)$, with inner product $(\cdot\,|\,\cdot)$,
and $\Phi\subset E$ is a finite set of \emph{roots}.
Let also $\Delta=\{\alpha_1,\ldots,\alpha_r\}$ be a \emph{base}  of $\Phi$,
i.e. a basis of $E$ such that $\Phi=\Phi_+\sqcup(-\Phi_+)$,
where $\Phi_+=\Phi\cap E_+$, in which $E_+=\sum_i\bb R_+\alpha_i$ 
is the positive cone of $E$.
The elements of $\Delta$ are the \emph{simple roots}
and the elements of $\Phi_+$ are the \emph{positive roots}.
The collection of \emph{fundamental weights} $\Pi=\{\wp_1,\ldots,\wp_r\}\subset E$ 
is defined by $\frac{2(\alpha_i|\wp_j)}{(\alpha_i|\alpha_i)}=\delta_{i,j}$.
Finally, the set of \emph{dominant weights} is
$\Lambda^+
=
\big\{n_1\wp_1+\cdots+n_r\wp_r\big\}_{n_1,\ldots,n_r\in\bb Z_+}$.

A basic result in representation theory of Lie algebras is that 
the isomorphism classes of the irreducible finite dimensional representations of $\mf g$
are in one-to-one correspondence with $\Lambda^+$.
For $\lambda\in\Lambda^+$, we denote by $V_\lambda$ the corresponding 
irreducible representation.
The Weyl dimension formula then reads
\begin{equation}\label{dimeq}
\dim(V_\lambda)
=
\prod_{\alpha\in\Phi_+}\frac{(\lambda+\delta|\alpha)}{(\delta|\alpha)}
\,,
\end{equation}
where $\delta:=\frac12\sum_{\alpha\in\Phi_+}\alpha$.

By the Schur Lemma,
the Casimir element $C=\sum_{i=1}^d\ell_j^2$ acts as a scalar 
on each irreducible representation $V_\lambda$.
More precisely,
\begin{equation}\label{casimir}
C|_{V_\lambda}
=
c_\lambda\id_{V_{\lambda}}
\,,\qquad
c_\lambda
=
\frac{(\lambda|\lambda+2\delta)}{(\theta|\theta+2\delta)}
\,,
\end{equation}
where $\theta\in\Phi_+$ is the highest root, i.e.\
the dominant integral weight associated to the adjoint representation.
Indeed, the value by which the Casimir element acts 
on the irreducible representation $V_\lambda$ is 
$c_\lambda=c\,(\lambda,\lambda+2\delta)$,
where the constant $c$ does not depend on the representation $V_\lambda$,
see e.g. \cite{Hum}.
On the other hand, by the very definition of the Killing form and of the Casimir element $C$,
its value on the adjoint representation
is $c_\theta=1$,
whence $c=(\theta,\theta+2\delta)^{-1}$.

Given $\lambda_1,\lambda_2\in\Lambda^+$,
consider the decomposition of the tensor product $V_{\lambda_1}\otimes V_{\lambda_2}$
into its irreducible components, namely
\begin{equation}\label{tensor}
V_{\lambda_1}\otimes V_{\lambda_2}
=†
\bigoplus_{\mu\in\Lambda^+}V_\mu^{\oplus n_{\lambda_1,\lambda_2}(\mu)}
\,.
\end{equation}
Moreover, $n_{\lambda_1,\lambda_2}(\mu)=0$ unless
$\lambda_1+\lambda_2-\mu\in P^+
:=\big\{n_1\alpha_1+\cdots+n_r\alpha_r\big\}_{n_1,\ldots,n_r\in\bb Z_+}$.

\begin{lemma}\label{nuovo}
For each $\mu\in\Lambda^+$,
$$
\sup_{\lambda_1,\lambda_2\in\Lambda^+}\,\,
n_{\lambda_1,\lambda_2}(\mu)
\,\leq\, 
\dim(V_\mu)
\,.
$$
\end{lemma}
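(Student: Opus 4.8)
The plan is to reduce the statement to the Racah--Klimyk tensor-product multiplicity formula, after a duality step which is the crucial point: one rewrites $n_{\lambda_1,\lambda_2}(\mu)$ so that $V_\mu$ appears as a \emph{factor} of a tensor product rather than as an irreducible summand. Concretely, since $\mf g$ is semisimple every finite dimensional module is completely reducible, so the tensor--hom adjunction gives
\[
n_{\lambda_1,\lambda_2}(\mu)
=\dim\mathrm{Hom}_{\mf g}\big(V_\mu,\,V_{\lambda_1}\otimes V_{\lambda_2}\big)
=\dim\mathrm{Hom}_{\mf g}\big(V_{\lambda_1}^{*}\otimes V_\mu,\,V_{\lambda_2}\big)\,,
\]
that is, $n_{\lambda_1,\lambda_2}(\mu)$ is the multiplicity of $V_{\lambda_2}$ in $V_{\lambda_1}^{*}\otimes V_\mu$. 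Here $V_{\lambda_1}^{*}$ is again irreducible; write $\lambda_1'\in\Lambda^{+}$ for its highest weight, and $m_\mu(\xi)=\dim (V_\mu)_\xi$ for the dimension of the weight space of $V_\mu$ of weight $\xi$.

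Next I would invoke the Racah--Klimyk formula (which follows from Steinberg's formula and Kostant's weight-multiplicity formula; see e.g.\ \cite{Hum}), expanded in the weight multiplicities of the factor $V_\mu$: with $W$ the Weyl group of $\mf g$ and $\det(w)\in\{\pm1\}$ the sign of the action of $w$ on $E$, it reads
\[
n_{\lambda_1,\lambda_2}(\mu)=\sum_{w\in W}\det(w)\,m_\mu\big(w(\lambda_2+\delta)-\lambda_1'-\delta\big)\,.
\]
Dropping the terms with $\det(w)=-1$ can only increase the value, so
\[
n_{\lambda_1,\lambda_2}(\mu)\ \le\ \sum_{w\in W:\ \det(w)=1} m_\mu\big(w(\lambda_2+\delta)-\lambda_1'-\delta\big)\,.
\]

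To conclude I would note that $\lambda_2+\delta$ is strictly dominant, hence regular (because $\lambda_2\in\Lambda^{+}$), so the $|W|$ weights $w(\lambda_2+\delta)$, $w\in W$, are pairwise distinct, and therefore so are the weights $w(\lambda_2+\delta)-\lambda_1'-\delta$ appearing above. Since $V_\mu=\bigoplus_\xi (V_\mu)_\xi$ is the direct sum of its distinct weight spaces, a sum of the dimensions of \emph{distinct} weight spaces is at most $\dim(V_\mu)=\sum_\xi m_\mu(\xi)$; hence $n_{\lambda_1,\lambda_2}(\mu)\le\dim(V_\mu)$, and taking the supremum over $\lambda_1,\lambda_2\in\Lambda^{+}$ gives the claim.

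I expect the main obstacle to be essentially bookkeeping: pinning down the precise form (sign conventions and $\delta$-shift) of the Racah--Klimyk formula, and --- more importantly --- making the two strategic choices that make the argument work, namely using duality so that $V_\mu$ becomes a tensor factor, and then expanding in the weight multiplicities of $V_\mu$ rather than of $V_{\lambda_1}^{*}$. Granting these, the remaining ingredients (the adjunction, discarding the negative terms, and the distinctness of the $W$-translates of a regular weight) are routine.
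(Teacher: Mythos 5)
Your proof is correct. The paper disposes of this lemma in one line by citing the PRV theorem \cite[Thm.2.1]{PRV}, which identifies the tensor-product multiplicity with the dimension of a distinguished subspace of a single weight space of one of the factors; combined with the same duality observation you make (so that $V_\mu$ plays the role of a factor rather than a summand), this gives the even sharper bound $n_{\lambda_1,\lambda_2}(\mu)\le \dim (V_\mu)_{\lambda_2-\lambda_1'}\le\dim(V_\mu)$. Your route replaces the PRV theorem by the Racah--Klimyk/Steinberg character identity, which you derive the bound from by discarding the $\det(w)=-1$ terms and using the regularity of $\lambda_2+\delta$ to see that the surviving terms are multiplicities of pairwise distinct weight spaces of $V_\mu$. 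Each step checks out: the adjunction $\dim\mathrm{Hom}_{\mf g}(V_\mu,V_{\lambda_1}\otimes V_{\lambda_2})=\dim\mathrm{Hom}_{\mf g}(V_{\lambda_1}^*\otimes V_\mu,V_{\lambda_2})$ is valid by complete reducibility, the Klimyk formula is correctly stated (it follows from writing $\chi_{\lambda_1'}\chi_\mu=\sum_\xi m_\mu(\xi)\chi_{\lambda_1'+\xi}$ via the Weyl character formula), and the distinctness of the $W$-orbit of the regular weight $\lambda_2+\delta$ is exactly what is needed for the final counting argument. The trade-off is that your argument is self-contained modulo the Weyl character formula, at the cost of a slightly coarser intermediate estimate (a sum of weight multiplicities over half the Weyl group instead of a single weight multiplicity); you also correctly identified the two non-obvious choices, namely dualizing so that $V_\mu$ becomes a tensor factor and expanding in the weight multiplicities of $V_\mu$ rather than of $V_{\lambda_1}^*$, without which neither approach would yield a bound independent of $\lambda_1,\lambda_2$.
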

\begin{proof}
The claim follows directly from the PRV formula \cite[Thm.2.1]{PRV}.
\end{proof}

We are now ready to complete the proof of the convergence in trace distance.

\begin{proof}[Proof of Theorem \ref{t:mrl}]
Let $\lambda\in\Lambda^+$
be the dominant weight corresponding to the representation $V$,
i.e. $V\simeq V_\lambda$. 
Consider the canonical isomorphism
$\End V\simeq V\otimes V^* = V_\lambda\otimes V_{\lambda^*}$,
where $\lambda^*$ is the dominant weight associated to the dual representation $V_\lambda^*$.
The eigenspaces of the Lindblad generator $L$ in \eqref{ling} can then be obtained
from \eqref{tensor},
\begin{equation}\label{dec}
L_2(\sigma)
=
\bigoplus_{\mu\in\Lambda^+}V_\mu^{\oplus n_{\lambda,\lambda^*}(\mu)}
\,,
\end{equation}
and the corresponding eigenvalues can be read out of \eqref{casimir}:
\begin{equation}\label{eigenvalues}
-L|_
{V_\mu^{\oplus n_{\lambda,\lambda^*}(\mu)}}
=
\frac{(\mu|\mu+2\delta)}{(\theta|\theta+2\delta)}\id_{V_\mu^{\oplus n_{\lambda,\lambda^*}(\mu)}}
\,.
\end{equation}
In particular, $V_0$, the trivial 1-dimensional representation of $\mf g$,
appears with multiplicity $n_{\lambda,\lambda^*}(0)=1$,
and corresponds to $\bb C\ind$,
the eigenspace associated to the eigenvalue $0$ of $L$.
Since the adjoint representation is always present in the decomposition \eqref{dec},
we get $\gap(-L)\leq1$.
Moreover, by the observation following \eqref{tensor},
$\lambda+\lambda^*\in P^+$,
so that the dominant weights appearing in the right hand side of \eqref{dec}
must lie in $\Lambda^+\cap P^+$.
Hence, by \eqref{casimir}, we deduce that
\begin{equation}\label{eq:g0}
\mathrm{gap}(-L)
\geq
g_0
:=
\min_{\mu\in\Lambda^+\cap P^+\setminus\{0\}}
\frac{(\mu|\mu+2\delta)}{(\theta|\theta+2\delta)}
\,.
\end{equation}
This concludes the proof of the uniform lower bound on the spectral gap.

To prove the bound \eqref{bound}, we first observe that,
since  $\mathrm{d}_\mathrm{tr}(\rho,\sigma)\leq 1$,
we can assume $t\geq1$, 
By Theorem \ref{t:tr<} and \eqref{eigenvalues}, 
for $t\geq0$ and $\rho\in\mc A^\prime_{+,1}$,
$$
 \mathrm{d}_\mathrm{tr}\big(\rho P_t,\sigma\big)^2
 \leq
 \frac14
 \sum_{\mu\in\Lambda^+\cap P^+\setminus\{0\}}
 e^{-2\frac{(\mu|\mu+2\delta)}{(\theta|\theta+2\delta)}t}
\sum_{j}\| f_j^{(\mu)}\|^2
\,,
$$
where $f_j^{(\mu)}$ is an orthonormal basis of the eigenspace 
$V_\mu^{\oplus n_{\lambda,\lambda^*}(\mu)}$.
Let $g:=\gap(-L)$, by \eqref{eq:g0},
for $t\geq1$ the right hand side above is bounded by
$$
\frac14
 e^{-2gt}
  \sum_{\mu\in\Lambda^+\cap P^+\setminus\{0\}}
 e^{-2\big(
\frac{(\mu|\mu+2\delta)}{(\theta|\theta+2\delta)} 
-g
\big)}
\sum_{j}\| f_j^{(\mu)}\|^2
\,.
$$
By Theorem \ref{t:ha} we have 
$\| f_j^{(\mu)}\|^2\leq\dim(V_\mu)\| f_j^{(\mu)}\|_\sigma^2=\dim(V_\mu)$,
so that
$$
\sum_{j}\| f_j^{(\mu)}\|^2
\leq
n_{\lambda,\lambda^*}(\mu)\dim(V_\mu)^2
\leq
\dim(V_\mu)^3
\,,
$$
where we used Lemma \ref{nuovo}.
Combining the above bounds, we deduce \eqref{bound} with 
$$
A
=
\frac12
e^{g}
\Big(
\sum_{\mu\in\Lambda^+\cap P^+\setminus\{0\}}
e^{
-2\frac{(\mu|\mu+2\delta)}{(\theta|\theta+2\delta)}
}
\dim(V_\mu)^3
\Big)^{\frac12}
\,,
$$
which is finite thanks to \eqref{dimeq}.

It remains to show that, for simple Lie algebras, the value of $g_0$ as defined in \eqref{eq:g0} 
is the one in the statement.
%
%
This is achieved by a case by case analysis
that is carried out in Appendix \ref{App1}.
\end{proof}

\subsection*{Explicit eigenvectors of the Lindblad operator when
  $\mf g =\mf{sl}_2$}

In the special case when $\mf g=\mf{sl}_2\simeq\mf{so}_3$,
we provide explicit formulae for the eigenvectors of the Lindblad generator \eqref{ling}.
These can be viewed as a non-commutative version of the spherical harmonics.

Recall that the standard basis $\{e,f,h\}$ of $\mf{sl}_2$ has commutation relations
$[h,e]=2e,\,[h,f]=-2f,\,[e,f]=h$.
In this basis the Killing form is represented by the matrix
$$
\left(\begin{array}{ccc}
0&4&0 \\
4&0&0 \\
0&0&8
\end{array}\right)
$$
Hence, the dual basis of the standard basis with respect to the Killing form is
$\{\frac14 f,\frac14 e,\frac18 h\}$.
As a consequence, the Lindblad operator associated to an irreducible representation
$\pi:\,\mf{sl}_2\to\mf{gl}(V)$ is
\begin{equation}\label{L0sl2}
L
=
-\frac14\Big(
\ad_{\pi(e)}\ad_{\pi(f)}
+\ad_{\pi(f)}\ad_{\pi(e)}
+\frac12\ad_{\pi(h)}^2
\Big)
\,.
\end{equation}

Let $V_\lambda$ be the $n$-dimensional irreducible representation of $\mf{sl_2}$
(of highest weight $\lambda=n-1$).
An explicit realization of the action of the $\mf{sl}_2$-generators on $V_\lambda=\bb C^n$
is given by the following matrices
$$
e
=
\sum_{x=1}^n\sqrt{x(n-x)}E_{x,x+1}
\,,\quad
f
=
\sum_{x=1}^n\sqrt{x(n-x)}E_{x+1,x}
\,,\quad
h
=
\sum_{x=1}^n(n+1-2x)E_{x,x}
\,.
$$
The spectral decomposition of $L$ acting on $\mf{gl}(V)$ is (cf. \eqref{eigenvalues})
$$
\mf{gl}(V_\lambda)\simeq V_\lambda\otimes V_\lambda^*
=
\bigoplus_{i=0}^{n-1}V_{2i}
\,,\qquad
-L|_{V_{2i}}
=
\frac{i(i+1)}2
\,.
$$
An orthonormal basis of $V_{2i}\subset\End(V_\lambda)$
with respect to the inner product \eqref{eq:inner} is
$\{v^{(i)}_\ell\}_{\ell=0}^{2i}$, where
\begin{equation}\label{eq:vi}
v^{(i)}_\ell
=
\frac{i!\sqrt{n}}{\ell!\sqrt{\binom{2i}{\ell}\binom{n+i}{2i+1}}}
\sum_{x}
\gamma^{(i)}_{\ell}(x)
E_{x,x+i-\ell}
\,,
\end{equation}
in which
\begin{equation}\label{eq:gi}
\gamma^{(i)}_{\ell}(x)
=
\sqrt{\frac{(x-1)! (n-x-i+\ell)!}{(x-1+i-\ell)!(n-x)!}}
\,\,
\sum_{j=0}^\ell
(-1)^{\ell-j}
\binom{\ell}{j}\binom{x+i-j-1}{i}\binom{n-x+j}{i}
\,.
\end{equation}

\subsection*{Action of the Lindblad generator on diagonal matrices for $\mf g =\mf{sl}_2$}

If $\mf g=\mf{sl}_2$, then the generator $L$ in \eqref{L0sl2} preserves the Abelian subalgebra 
of diagonal matrices.
More precisely,
given $f:\,\{1,\dots,n\}\to\bb C$, by direct computation,
$$
L \sum_{x=1}^n f(x)E_{x,x}
=
\sum_{x=1}^n
(L_\mathrm{cl} f)(x) E_{x,x}
$$
where $E_{x,y}$ are the elementary matrices 
and $L_\mathrm{cl}$ is the classical Markov generator on $\{1,\dots,n\}$
given by
$$
(L_\mathrm{cl} f)(x)
=
\frac12
x(n-x)[f(x+1)-f(x)]
+
\frac12
(x-1)(n-x+1)[f(x-1)-f(x)]
\,,
$$
that is reversible with respect to the uniform probability on $\{1,\dots,n\}$.
Note that the process generated by $L_\mathrm{cl}$ is a discrete version of the Wright-Fisher
diffusion process.

The spectral decomposition of $-L_\mathrm{cl}$ can be obtained by restricting \eqref{eq:vi}-\eqref{eq:gi}
to diagonal matrices.
We have $\bb C^n=\bigoplus_{i=0}^{n-1}\bb C\gamma^{(i)}$,
where 
\begin{equation*}
\begin{split}
\gamma^{(i)}(x)
& =
\frac{\sqrt{n}}{\sqrt{\binom{2i}{i}\binom{n+i}{2i+1}}}
\sum_{j=0}^i
(-1)^{i-j}
\binom{i}{j}\binom{x+i-j-1}{i}\binom{n-x+j}{i}
\\
& =
i!\sqrt{\frac{(n-i-1)!}{(n+i)!}(2i+1)n}
\sum_{j=0}^i
(-1)^{i-j}
\binom{i}{j}\binom{x-1}{j}\binom{n-x}{i-j}
\end{split}
\end{equation*}
and $-L_\mathrm{cl}\gamma^{(i)}=\frac12i(i+1)\gamma^{(i)}$.
Note that, by the second expression, $\gamma^{(i)}(x)$ is a polynomial in $x$ of degree $i$.
In fact, these are the discrete Chebyshev polynomials, see e.g. \cite[\S 22.17]{AS}.

Finally, letting $P^\mathrm{cl}_t=\exp(tL_\mathrm{cl})$ be the Markov semigroup generated by $L_\mathrm{cl}$,
as a Corollary of Theorem \ref{t:mrl} we get that for any probability $\rho$ on $\{1,\dots,n\}$
\begin{equation*}
\mathrm{d}_\mathrm{TV}\big(\rho P^\mathrm{cl}_t,\sigma\big) \le A \, e^{-t}
\end{equation*}
where $\mathrm{d}_\mathrm{TV}$ is the total variation distance.
For the continuous Wright-Fisher diffusion semigroup this bound can be deduced from the exponential decay of the entropy which can be established by the Bakry-\'Emery criterion, see discussion below.
For the jump process generated by $L_\mathrm{cl}$, the criterion in \cite{Mi} yields a logarithmic Sobolev inequality with a constant uniform in $N$ but not sharp.
It is unclear to us wether the sharp value of the (modified) logarithmic Sobolev constant can be obtained for this jump process.

\subsection*{Lower bound on the spectral gap via Bakry-\'Emery criterion}

In the commutative case, a popular and easy to use criterion to deduce exponential ergodicity is the so-called Bakry-\'Emery curvature dimension condition, we refer to \cite{BGL} for an exhaustive reference on the subject.
As discussed there, without additional requirements, this criterion provides a lower bound on the spectral gap.
In the special case of diffusion semigroups, this criterion also yields a logarithmic Sobolev inequality, which imply the exponential convergence of the entropy.
Extensions of the Bakry-\'Emery theory to the non-commutative case are discussed in \cite{Br:Ga:Ju,CM1,CM2,DR,Li:Ju:La,Wi:Zh,Coreani}.
In the present setting of QMS with Lindblad generator given by the adjoint action of the Casimir element of a semi-simple Lie algebra, we here show that a uniform lower bound 
on the spectral gap can be obtained by using the Bakry-\'Emery curvature dimension condition.
Note that this argument does not yield the sharp values of the spectral gap computed in Appendix~\ref{App1} with the machinery of representation theory and detailed in Theorem \ref{t:mrl}.

Let the \emph{carr\`e du champ} $\Gamma\colon \mc A\times\mc A\to\mc A$ be defined by 
\begin{equation*}
\Gamma(a,b)
=
\frac12\Big(
L(ab)-(La)b-a(Lb)
\Big)
\end{equation*}
and the  \emph{carr\`e du champ it\'er\'e} $\Gamma_2\colon \mc A\times\mc A\to\mc A$ be defined by 
\begin{equation*}
\Gamma_2(a,b)
=
\frac12\Big(
L\Gamma(a,b)-\Gamma(La,b)-\Gamma(a,Lb)
\Big)
\,.
\end{equation*}

The following general statement is proven as in the commutative case,
see e.g. \cite[Prop.4.8.1]{BGL}.
\begin{proposition}\label{thm:ledu}
Let $L$ be an ergodic and reversible Lindblad generator.
Assume that there exists a constant $\gamma>0$ such that
\begin{equation}\label{eq:gamma}
\Gamma_2(a,a^*)\geq \gamma\, \Gamma(a,a^*)
\quad\text{ for all }\quad a\in\mc A\,.
\end{equation}
Then
$\gap(-L)\geq\gamma$.
\end{proposition}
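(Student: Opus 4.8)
The plan is to carry out the classical Bakry--\'Emery $\Gamma_2$ argument, see e.g.\ \cite[Prop.~4.8.1]{BGL}, in the present non-commutative setting, with integration against the invariant measure replaced by the functional $a\mapsto\langle\ind,a\rangle_\sigma=\tr(\sigma a)$. Two structural remarks will be used repeatedly. First, every Lindblad generator commutes with the $*$-operation, whence $\Gamma(a,b)^*=\Gamma(b^*,a^*)$ and $\Gamma_2(a,b)^*=\Gamma_2(b^*,a^*)$; in particular $\Gamma(a,a^*)$ and $\Gamma_2(a,a^*)$ lie in $\mc A_\mathrm{sa}$, so that \eqref{eq:gamma} is a genuine inequality between self-adjoint elements of $\mc A$. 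Second, reversibility of $(P_t)_{t\ge0}$ entails invariance of $\sigma$, hence $\tr(\sigma\, Lb)=0$ for every $b\in\mc A$.

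The heart of the proof is the pair of identities
\begin{equation*}
\tr\big(\sigma\,\Gamma(a,a^*)\big)=-\langle La,a\rangle_\sigma,
\qquad
\tr\big(\sigma\,\Gamma_2(a,a^*)\big)=\|La\|_\sigma^2,
\qquad a\in\mc A,
\end{equation*}
the quantity $-\langle La,a\rangle_\sigma$ being the Dirichlet form $E_\sigma(a)$ of $L$. To obtain the first, expand $\Gamma(a,a^*)=\tfrac12\big(L(aa^*)-(La)a^*-a(La^*)\big)$: the term $\tr(\sigma L(aa^*))$ vanishes by invariance, while cyclicity of the trace together with $La^*=(La)^*$ reduces each of the two remaining terms to $\langle La,a\rangle_\sigma$, which by reversibility equals $\langle a,La\rangle_\sigma$ and is real. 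The second identity follows in the same way one level up: in $\tr(\sigma\,\Gamma_2(a,a^*))$ the term with $L\Gamma(a,a^*)$ drops by invariance, and both $\tr(\sigma\,\Gamma(La,a^*))$ and $\tr(\sigma\,\Gamma(a,La^*))$ reduce, after the same manipulations, to $-\|La\|_\sigma^2$, using reversibility in the form $\langle a,L^2a\rangle_\sigma=\langle La,La\rangle_\sigma=\|La\|_\sigma^2$.

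Granting the identities, the rest is immediate. Since $\sigma\ge0$, applying the positive functional $\tr(\sigma\,\cdot)$ to \eqref{eq:gamma} gives the scalar bound $\|La\|_\sigma^2\ge-\gamma\,\langle La,a\rangle_\sigma$ for every $a\in\mc A$. Let $\{(\lambda_j,f_j)\}_{j\ge0}$ be the spectral basis of $-L$ on $L_2(\sigma)$ with $0=\lambda_0<\lambda_1\le\lambda_2\le\cdots$, the strict inequality $\lambda_0<\lambda_1$ being ergodicity. Choosing $a=f_j$ for $j\ge1$ we have $\|Lf_j\|_\sigma^2=\lambda_j^2$ and $-\langle Lf_j,f_j\rangle_\sigma=\lambda_j$, hence $\lambda_j^2\ge\gamma\lambda_j$, i.e.\ $\lambda_j\ge\gamma$; in particular $\gap(-L)=\lambda_1\ge\gamma$. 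The one delicate point I anticipate is the verification of the two identities: one has to use only legitimate cyclic permutations of the trace --- in particular $\sigma$ may not be commuted past $a$ --- and to invoke reversibility exactly where $\langle La,a\rangle_\sigma=\langle a,La\rangle_\sigma$, respectively $\langle a,L^2a\rangle_\sigma=\|La\|_\sigma^2$, is needed. Everything downstream of the identities is the classical argument verbatim.
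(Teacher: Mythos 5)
Your proof is correct and is precisely the classical Bakry--\'Emery argument that the paper invokes by citing \cite[Prop.4.8.1]{BGL} without writing out the details: the two integrated identities $\tr(\sigma\,\Gamma(a,a^*))=-\langle La,a\rangle_\sigma$ and $\tr(\sigma\,\Gamma_2(a,a^*))=\|La\|_\sigma^2$ (both using invariance of $\sigma$ and reversibility exactly where you indicate), followed by testing on eigenvectors. The computations check out, including the careful handling of the GNS inner product and the use of $(La)^*=L(a^*)$.
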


\begin{theorem}
Let $L$ be the Lindblad generator in \eqref{ling}.
Then the inequality \eqref{eq:gamma} holds with $\gamma=\frac14$.
Hence, $\gap(-L)\geq\frac14$.
\end{theorem}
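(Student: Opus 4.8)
The plan is to prove directly the operator inequality $\Gamma_2(a,a^*)\ge\tfrac14\,\Gamma(a,a^*)$ for every $a\in\mc A$ and then to invoke Proposition~\ref{thm:ledu}; equivalently, to verify \eqref{eq:gamma} with $\gamma=\tfrac14$. Write $\ad_j:=\ad_{\ell_j}$, so that $L=-\sum_{j=1}^d\ad_j^2$. Two elementary facts will be used repeatedly: each $\ad_j$ is a derivation of $\mc A$, and, since $\ell_j\in i\mf g_0$, the endomorphism $\pi(\ell_j)$ is Hermitian, so that $\ad_j(c^*)=-(\ad_j c)^*$ for all $c\in\mc A$.

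First I would compute $\Gamma$ and $\Gamma_2$ explicitly. Expanding $\ad_j^2(ab)$ with the Leibniz rule gives $L(ab)=(La)b+a(Lb)-2\sum_j(\ad_j a)(\ad_j b)$, hence
\[
\Gamma(a,b)=-\sum_{j}(\ad_j a)(\ad_j b),
\qquad\text{and in particular}\qquad
\Gamma(a,a^*)=\sum_{j}(\ad_j a)(\ad_j a)^*\ge0.
\]
The crucial structural input is that $-L$ is the adjoint action of the Casimir element and is therefore central: writing $[\ell_j,\ell_k]=\sum_mC_{jk}^m\ell_m$, the normalization $\kappa(\ell_j,\ell_k)=\delta_{jk}$ (valid because $\{\ell_j\}$ is orthonormal for $\langle\cdot,\cdot\rangle_{\mf g}$ and $\ell_j^*=\ell_j$) together with the invariance of $\kappa$ makes the structure constants $C_{jk}^m$ totally antisymmetric, and this yields $[\ad_j,L]=0$ for every $j$. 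Using this commutation when differentiating $L\big((\ad_j a)(\ad_j b)\big)$, the terms involving $La,Lb$ cancel against $\Gamma(La,b)+\Gamma(a,Lb)$ and one is left with
\[
\Gamma_2(a,b)=\sum_{j,k}(\ad_k\ad_j a)(\ad_k\ad_j b),
\qquad\text{so}\qquad
\Gamma_2(a,a^*)=\sum_{j,k}(\ad_k\ad_j a)(\ad_k\ad_j a)^*,
\]
where the last identity uses $\ad_k\ad_j(a^*)=(\ad_k\ad_j a)^*$ (two sign changes).

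Then I would split the ``Hessian'' into its symmetric and antisymmetric parts in the pair $(j,k)$,
\[
\ad_k\ad_j a=H_{jk}+\tfrac12R_{jk},\qquad
H_{jk}:=\tfrac12(\ad_k\ad_j+\ad_j\ad_k)a,\qquad
R_{jk}:=[\ad_k,\ad_j]a=\ad_{[\ell_k,\ell_j]}a,
\]
with $H_{kj}=H_{jk}$ and $R_{kj}=-R_{jk}$. Inserting this into $\Gamma_2(a,a^*)=\sum_{j,k}(\ad_k\ad_j a)(\ad_k\ad_j a)^*$ and summing over $(j,k)$, the cross terms $\sum_{j,k}H_{jk}R_{jk}^*$ and $\sum_{j,k}R_{jk}H_{jk}^*$ vanish: relabelling $(j,k)\mapsto(k,j)$ turns each such sum into its own negative, and this uses only reindexing, never a commutation. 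Hence
\[
\Gamma_2(a,a^*)=\sum_{j,k}H_{jk}H_{jk}^*+\frac14\sum_{j,k}R_{jk}R_{jk}^*.
\]
The first term belongs to $\mc A_+$. For the second, expand $R_{jk}=\sum_mC_{kj}^m(\ad_m a)$; since $C_{kj}^m$ is purely imaginary (because $[\ell_k,\ell_j]$ is skew-Hermitian) one has $R_{jk}^*=-\sum_mC_{kj}^m(\ad_m a)^*$, and the Killing-form identity $\sum_{j,k}C_{jk}^mC_{jk}^n=-\delta_{mn}$ --- a direct consequence of $\kappa(\ell_m,\ell_n)=\tr(\ad_{\ell_m}\ad_{\ell_n})=\delta_{mn}$ and the total antisymmetry of the $C_{jk}^m$ --- gives precisely $\sum_{j,k}R_{jk}R_{jk}^*=\sum_m(\ad_m a)(\ad_m a)^*=\Gamma(a,a^*)$. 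Therefore $\Gamma_2(a,a^*)-\tfrac14\Gamma(a,a^*)=\sum_{j,k}H_{jk}H_{jk}^*\in\mc A_+$, which is \eqref{eq:gamma} with $\gamma=\tfrac14$, and Proposition~\ref{thm:ledu} yields $\gap(-L)\ge\tfrac14$.

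The computations of $\Gamma$ and $\Gamma_2$ are routine Leibniz-rule bookkeeping. The two points that require genuine care are the identity $[\ad_j,L]=0$ --- it is exactly what collapses $\Gamma_2$ to the Hessian-squared form and makes the curvature term tractable --- and the identity $\sum_{j,k}R_{jk}R_{jk}^*=\Gamma(a,a^*)$, where one must keep track of the fact that the structure constants of the Hermitian basis $\{\ell_j\}$ are purely imaginary while the normalization $\kappa(\ell_j,\ell_k)=\delta_{jk}$ is precisely the one for which $-L$ is the Casimir. This is why the constant comes out equal to the representation-independent value $\tfrac14=(\tfrac12)^2$; it is strictly smaller than the upper bound $\gap(-L)\le1$ and, as noted in the text, is not sharp.
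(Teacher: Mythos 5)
Your proof is correct and follows essentially the same route as the paper: the same splitting of $\ad_k\ad_j a$ into its anticommutator (positive) and commutator parts, discarding the former and identifying the latter with $\tfrac14\Gamma(a,a^*)$ via the identity $\sum_{j,k}[\ell_j,\ell_k]\otimes[\ell_j,\ell_k]=-\sum_j\ell_j\otimes\ell_j$, which you establish through the Killing-form trace identity for the structure constants rather than through the paper's invariance-of-the-Casimir-tensor argument --- an equivalent computation. You also supply details the paper leaves implicit (the use of $[\ad_{\ell_j},L]=0$ to reduce $\Gamma_2$ to the Hessian-squared form, and the sign bookkeeping $(\ad_{\ell_j}a)^*=-\ad_{\ell_j}(a^*)$, which incidentally shows your sign for $\Gamma(a,b)$ is the consistent one).
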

\begin{proof}
By a direct computation
$$
\Gamma(a,b)
=
\sum_{j}(\ad_{\ell_j}a)(\ad_{\ell_j}b)
\,,
$$
and
$$
\Gamma_2(a,b)
=
\sum_{i,j}(\ad_{\ell_i} \ad_{\ell_j}a)\,(\ad_{\ell_i}\ad_{\ell_j}b)
\,.
$$
By a symmetrization procedure,
$$
\Gamma_2(a,a^*)
=
\frac14\sum_{i,j}(\ad_{[\ell_i,\ell_j]} a)\,(\ad_{[\ell_i,\ell_j]}a^*)
+
\frac14
\sum_{i,j}(\{\ad_{\ell_i},\ad_{\ell_j}\}a)\,(\{\ad_{\ell_i},\ad_{\ell_j}\}a^*)
\,,
$$
where, as usual, $\{A,B\}=AB+BA$.
To conclude, we observe that the second term on the right hand side above is positive,
while the first term is equal to $\Gamma(a,a^*)/4$
since
$$
\sum_{i,j}[\ell_i,\ell_j] \otimes [\ell_i,\ell_j]
=
-\sum_j\ell_j\otimes\ell_j
\,,
$$
as follows from the facts 
that $\sum_j\ell_j\otimes\ell_j\in\mf g^{\otimes2}$ is left invariant by 
the adjoint action of $\mf g$,
and that the Casimir operator $\sum_i\ad_{\ell_i}^2$ acts as the identity on $\mf g$.
\end{proof}

\section{Bose Ornstein-Uhlenbeck semigroup}
\label{s:qou}

In this section we consider the Bose Ornstein-Uhlenbeck semigroup corresponding to $N$ free bosons and discuss the spectral decomposition of the corresponding Lindblad generator together with its exponential ergodicity in trace distance.
In contrast to the previous examples, the underlying Hilbert space is infinite dimensional.
According to the terminology in \cite[\S3]{Fa}, in such a context a QMS is defined on a von Neumann algebra  rather than a $C^*$-algebra.
In particular $(P_t)_{t\geq0}$ is required only to be weakly* continuous and not strongly continuous.
For the present purposes it is however convenient to realize the Bose Ornstein-Uhlenbeck semigroup as a strongly continuous semigroup on the $C^*$-algebra of compact operators on a suitable Hilbert space.
In the commutative case, this choice corresponds to the realization of Markov semigroups as strongly continuous semigroups on the commutative $C^*$-algebra given by the continuos functions vanishing at infinity on some metric space.
While the Bose Ornstein-Uhlenbeck semigroup depends only on the canonical commutation relationships between creation and annihilation operators, it will be convenient, as in \cite{CFL}, to realize this operator in the Hilbert space in which the bosonic number operators are diagonal.

Fix a postive integer $N$.
In this section, we employ the usual multi-index notation.
For $\ell,j \in \bb Z_+^N$ and $z=(z_1,\ldots, z_N)\in\bb C^N$,  we denote $|\ell|:= \ell_1+\cdots +\ell_N$,
$\ell!:=\ell_1!\cdots \ell_N!$, $j\le \ell$ if $j_1\le \ell_1$,\dots,
$j_N\le \ell_N$, $z^\ell := z_1^{\ell_1}\cdots z_N^{\ell_N}$, and
$\binom {\ell}{j} := \binom {\ell_1}{j_1}\cdots \binom {\ell_N}{j_N}$.

Consider the Hilbert space $H=\boldsymbol{\ell}_2(\bb Z_+^N)$ and denote by $\{|\ell\rangle\}_{\ell\in\bb Z_+^N}$ its canonical  orthonormal basis.
The bosonic number operators $n_k$, $k=1,\dots,N$, are the self-adjoint pairwise commuting operators defined by $n_k |\ell\rangle=\ell_k |\ell\rangle$, with domain
\begin{equation*}
  \Dom(n_k)=\Big\{v=\sum_\ell v_\ell |\ell\rangle\in H\colon \sum_\ell \ell_k^2|v_\ell|^2<\infty\Big\}.
\end{equation*}
The annihilation and creation operators $a_k$, $a_k^*$ on $H$, with domain $\Dom(a_k)=\Dom(a_k^*)=\Dom(\sqrt{n_k})$, $k=1,\dots,N$ are defined by
\begin{equation}\label{eq:aastar}
  a_k |\ell\rangle =
  \sqrt{\ell_k} \: |\ell- {e}_k\rangle,
  \qquad a_k^* |\ell\rangle = \sqrt{\ell_k+1}\: |\ell+e_k \rangle
\end{equation}
where $e_k\in \bb Z_+^N$ has coordinates $(e_k)_{k'} := \delta_{k,k'}$.
We understand that  $a_k |\ell\rangle=0$ when $\ell_k=0$.
The operators $a_k$, $a_k^*$ are closed, mutually adjoint, and they satisfy $a_k^*a_k=n_k$, $a_ka_k^*=n_k+\ind$.

Fix a family $\{\omega_k\}_{k=1}^N$ of strictly positive reals.
We hereafter assume that they are bounded away from zero uniformly in $N$.
Namely, there exists $\Lambda\in(0,\infty)$ independent on $N$ such that $2\varsh(\omega_k/2)\geq\Lambda$, $k=1,\dots,N$.
The corresponding free boson Hamiltonian is $h=\sum_k\omega_kn_k$ which is self-adjoint on $\Dom(h)=\bigcap_k\Dom(n_k)$.
We denote by $\mathcal{B}$ the $C^*$-algebra of bounded operators on $H$ endowed with the operator norm $\|\cdot\|$.
Let also $\mathcal{K}$ be the $C^*$-subalgebra of $\mathcal{B}$ given by the compact operators on $H$, namely the $\|\cdot\|$ closure of finite rank operators on $H$.
Note that $\mathcal{B}$ is unital while $\mathcal{K}$ is not.

Let $\sigma$ be the positive trace class operator on $H$ given by $\sigma=e^{-h}/\tr(e^{-h})$.
We regard $\sigma$ also as a state on $\mathcal{K}$ by setting $\sigma(f)=\tr(\sigma f)$.
We have absorbed the dependence on the inverse temperature on the frequencies $\omega_k$.
Denote by $\mathcal{D}^{(0)}$ the linear span of $\{|\ell\rangle\langle\ell'|\}_{\ell,\ell'\in\bb Z_+^N}$, which is a dense subset of $\mathcal{K}$ invariant by left and right multiplication by $a_k$, $a_k^*$, $k=1,\dots,N$.
By direct computation, on $\mathcal{D}^{(0)}$
\begin{equation*}
  \ad_h(a_k) = [h,a_k]=-\omega_k a_k,\qquad \ad_h(a_k^*) = [h,a_k^*]=\omega_k a_k^*.
\end{equation*}

We then consider the operator $L^{(0)}$ on $\mathcal{K}$ defined on $\mathcal{D}^{(0)}$ by
\begin{equation}
  \label{lho-b}
  L^{(0)}  = \sum_{k=1}^N \Big\{
  e^{\omega_k/2} \big( [a_k^*, \,\cdot\, ]a_k + a_k^* [
  \,\cdot\,,a_k] \big) 
  +e^{-\omega_k/2}
  \big( [a_k, \,\cdot\, ]a_k^* + a_k [ \,\cdot\, ,a_k^*] \big)\Big\}.
\end{equation}
As follows from \cite[Thm 5.1]{CFL}, the graph norm closure of $L^{(0)}$ generates a strongly continuous contraction semigroup $(P_t)_{t\ge0}$ on $\mc K$ (as a Banach space) such that $P_t$ is a completely positive operator on $\mathcal{K}$ for any $t\geq0$.
Furthermore, $\mathcal{D}^{(0)}$ is a core for the generator of $(P_t)_{t\ge0}$ and $\sigma$ is a reversible state.
More precisely, in \cite{CFL} the semigroup generated by $L^{(0)}$ is constructed first on the Hilbert space associated to the KMS inner product induced by the state $\sigma$ by Dirichlet form techniques, and then it is shown that it has the Feller property, i.e., it preserves $\mc K$.
In view of \cite[Thm.2.9]{CM1}, the notion of reversibility used in \cite{CFL} is equivalent to the one employed here based on the GNS inner product induced by $\sigma$.

We denote by $L_2(\sigma)$ the Hilbert space obtained by completing $\mathcal{K}$ with respect to the Euclidean distance induced by the GNS inner product $\langle f,g\rangle_\sigma=\sigma(gf^*)$.
We regard $\mathcal{K}$ as a dense subset of $L_2(\sigma)$.
By the Kadison-Schwarz inequality, $(P_t)_{t\ge0}$ extends to a strongly continuous self-adjoint contractions semigroup on $L_2(\sigma)$ that is denoted by $(P_t^{(2)})_{t\ge0}$.
Let finally $L^{(2)}$ with domain $\Dom(L^{(2)})$ be its self-adjoint generator.
Observing that $(L^{(0)},\mathcal{D}^{(0)})$ is symmetric when regarded as an operator on $L_2(\sigma)$, the generator $L^{(2)}$ can also be obtained, as stated in the next lemma, by taking the graph-norm closure in $L_2(\sigma)$ of $(L^{(0)},\mathcal{D}^{(0)})$.

\begin{lemma}
  Regarding $(L^{(0)},\mathcal{D}^{(0)})$ as densely defined operator on $L_2(\sigma)$, its graph-norm closure is $(L^{(2)},\Dom(L^{(2)}))$.
\end{lemma}

\begin{proof}
  By direct computation $(-L^{(0)},\mathcal{D}^{(0)})$ is a positive symmetric operator on $L_2(\sigma)$.
  On the other hand, by its very definition, $(-L^{(2)},\Dom(L^{(2)}))$ is a positive self-adjoint operator on $L_2(\sigma)$ that extends  $(-L^{(0)},\mathcal{D}^{(0)})$.
  It is therefore enough to show that  $(-L^{(0)},\mathcal{D}^{(0)})$ has a unique self-adjoint extension.
  Since the graph norm closure in $\mathcal{K}$ of $(L^{(0)},\mathcal{D}^{(0)})$ generates a strongly continuous contraction semigroup on $\mathcal{K}$, the Lumer-Philips Theorem, see e.g. \cite[Thm.X.48]{RS2}, implies that for each $\lambda>0$ the set $(\lambda-L^{(0)})\mathcal{D}^{(0)}$ is dense in $\mathcal{K}$.
  Since $\mathcal{K}$ is dense in $L_2(\sigma)$ we get that for each $\lambda>0$ the set $(-\lambda+L^{(0)})\mathcal{D}^{(0)}$ is dense in $L_2(\sigma)$.
  By the characterization of self-adjoint extensions of positive symmetric operators in \cite[Thm.X.2]{RS2}, we deduce that $(-L^{(0)},\mathcal{D}^{(0)})$ has a unique self-adjoint extension.
\end{proof}

As next stated, the action of $L^{(2)}$ on polynomials in the annihilation and creation operators is still given by the informal expression on the right hand side of \eqref{lho-b}.

\begin{lemma}\label{lem:ide}
  Let $\mathcal{W}$ be the unital algebra generated by the operators $a_k$, $a_k^*$, $k=1,\dots,N$ in which we understand that the elements of $\mathcal{W}$ are defined on the linear span of $\{|\ell\rangle\}_{\ell\in\bb Z_+^N}$.
  Then $\mathcal{W}$ can be identified with a subset of $L_2(\sigma)$.
  Furthermore, under this identification, $\mathcal{W}\subset\Dom(L^{(2)})$ and for each $w\in\mathcal{W}$
  \begin{equation}\label{eq:Lw}
      L^{(2)}w  = \sum_{k=1}^N \Big\{
  e^{\omega_k/2} \big( 2a_k^*w a_k - w a_k^*a_k-a_k^*a_kw \big) 
  +e^{-\omega_k/2}
  \big( 2a_kw a_k^* - w a_ka_k^*-a_ka_k^*w \big)\Big\}. 
  \end{equation}
\end{lemma}

\begin{proof}
  Given $w\in\mathcal{W}$ and $\alpha\in\bb N$, let $w_\alpha\in\mathcal{D}^{(0)}$ be the operator defined by
  \begin{equation*}
    w_\alpha|\ell\rangle
    =
    \begin{cases}
      w|\ell\rangle & \text{if $\ell_k\leq\alpha$ for all $k=1,\dots,N$,}\\
      0 & \text{otherwise.}
    \end{cases}
  \end{equation*}
  By direct computation, the sequence $\{w_\alpha\}$ is Cauchy with respect to $\|\cdot\|_\sigma$.
  The first statement is achieved by identifying $w$ with the equivalence class of the Cauchy sequence $\{w_\alpha\}$.

  To prove the second statement, fix $w\in\mathcal{W}$ and let $\{w_\alpha\}$ be as in the previous displayed equation.
  Since, as observed before, $L^{(2)}$ can be obtained as the graph norm closure of $(L^{(0)},\mathcal{D}^{(0)})$ in $L_2(\sigma)$, it is enough to show that the sequence $\{L^{(0)}w_\alpha\}$ is Cauchy in $L_2(\sigma)$ and in the same equivalence class of the Cauchy sequence that represents the right  hand side of \eqref{eq:Lw}.
  Again, this is achieved by a direct computation.
\end{proof}

In order to describe the spectral decomposition of $-L^{(2)}$,
given $j,m\in \bb Z_+^N$, let $g_{j,m}\in \mathcal{W}\subset\Dom(L^{(2)})$ be given by
\begin{equation}
  \label{plm}
  \begin{split}
  g_{j,m} &:=\sum_{i\in {\bb Z}_+^N}  i!\, \binom{j} i
   \binom{m} i (-1)^{|i|} \gamma^i (a^*)^{j -i}a^{m -i}
   \\
   & =\sum_{i\in {\bb Z}_+^N}  i!\, \binom{j} i
   \binom{m} i (-1)^{|i|} \delta^i a^{m -i}(a^*)^{j -i}
 \,,
 \end{split}
 \end{equation}
 where $a=(a_1,\ldots,a_N)$, $a^*=(a_1^*,\ldots,a_N^*)$,
 $\gamma =(\gamma_1,\ldots,\gamma_N)$,
 $\delta=(\delta_1,\ldots,\delta_N)$, 
 and
\begin{equation}
  \label{al=}
  \gamma_k = \frac {e^{-\omega_k/2}}{2\varsh(\omega_k/2)}
  ,\qquad
  \delta_k = \frac {e^{\omega_k/2}}{2\varsh(\omega_k/2)}.
\end{equation}

\begin{theorem}\label{thm:specdec}
 The operator $-L^{(2)}$ has purely discrete spectrum given by
    \begin{equation}
    \label{spl}
    \lambda_j := 2 \sum_{k=1}^N  
    \varsh\big(\tfrac {\omega_k}2\big) j_k,
    \qquad j\in\bb Z_+^N.
  \end{equation}
 If $\varsh(\omega_k/2)$, $k=1,\dots,N$, are rationally independent,
  the eigenspace $\ms U_j$ associated to $\lambda_j$
  has dimension $(j_1+1) \cdots (j_N+1)$.
  An orthonormal basis of $\ms U_j$ is given by
  $f_{j,m} := g_{j-m,m}/\| g_{j-m,m}\|_{\sigma}$,
  $m\le j$.
  The normalization is
  \begin{equation}
    \label{aul}
    \| g_{j,m}\|_{\sigma}^2= j!\, m!\, \gamma^j\, \delta^{m}
    ,\qquad j,m\in\bb Z_+^N.
  \end{equation}
For arbitrary $\omega_k$,
the eigenspace associated to the eigenvalue $\lambda$ is
  $\mc U_\lambda:=\bigoplus_{j\,:\,\lambda_j=\lambda}\mc U_j$.
\end{theorem}

We point out that the spectrum of the Lindblad generator of the QMS $(P_t)_{t\geq0}$ on $\mathcal{K}$ is different from the spectrum of the operator $L^{(2)}$ described in the above theorem.
We refer to \cite{Meta} for a discussion on this topic in the commutative setting.
Postponing the proof of Theorem~\ref{thm:specdec}, we next discuss the convergence
to equilibrium of the Bose Ornstein-Uhlenbeck semigroup.  To this end,
let $\rho$ be a state on $\mathcal{K}$.  By \cite[Thm.VI.26]{RS1},
there exists positive trace class operator on $H$ with unit trace,
still denoted by $\rho$, such that $\rho(f)=\tr(\rho f)$,
$f\in\mathcal{K}$.  In order to formulate the exponential ergodicity
in trace distance of $(P_t)_{t\geq0}$, we need to impose some
conditions on the initial state $\rho$.  Indeed, in this infinite
dimensional case, exponential ergodicity in trace distance cannot hold
uniformly with respect to $\rho$.
To this end, we denote by $\mc S$ the set of states on $\mc K$ 
for which  the number operators $n_k$ have finite moments of any
order. Namely, $\rho$ belongs to $\mc S$ if for each $j\in\bb Z_+^N$
\begin{equation}\label{eq:moment}
  \rho( n^j ) :=
  \sum_{\ell\in\bb Z_+^N}\ell^j\langle\ell|\rho|\ell\rangle<\infty.
\end{equation}
Readily, the thermal state $\sigma$ belongs to $\mc S$.
Given $K\in (0,\infty)$, let finally $\mc S_{K}$ be the set of
$\rho\in \mc S$ such that 
\begin{equation}
  \label{mcK}
  \rho\big( n^j \big)
  = \sum_{\ell\in\bb Z_+^N}\ell^j\, \langle\ell|\rho|\ell\rangle
  \le  j! \: K^{|j|},
  \qquad \forall \, j\in \bb Z_+^N.
\end{equation}
Given $\epsilon\in(0,1/2)$, we define the \emph{$\epsilon$-mixing time in $\mathcal{S}_K$} by
\begin{equation*}
  t_{\mathrm{mix}}(\epsilon,\mathcal{S}_K)
  :=\sup_{\rho\in\mathcal{S}_K}\inf\big\{t>0\colon \mathrm{d}_\mathrm{tr} (\rho P_t,\sigma)\leq\epsilon\big\},
\end{equation*}
so that for  any $t\geq t_{\mathrm{mix}}(\epsilon,\mathcal{S}_K)$ and any $\rho\in\mathcal{S}_K$ the state $\rho P_t$ is $\epsilon$-close to $\sigma$ in trace distance.
Recalling that $\Lambda$ is a uniform lower bound on 
$2\varsh(\omega_k/2)$, as next stated, the semigroup $(P_t)_{t\geq0}$ is 
exponentially ergodic in trace distance uniformly on $\mc S_K$. 

\begin{theorem}\label{thm-bose}
For each $K\in (0,\infty)$ there is a constant
$A=A(\Lambda,K)$ such that for any $N\in\bb N$ and  $t\ge 0$ 
  \begin{equation}\label{label4}
4 \sup_{\rho \in   \mc S_K }
    \mathrm{d}_\mathrm{tr} (\rho P_t,\sigma)^2
     \le 
\big(
 1+Ae^{-2\Lambda t}
\big)^N -1
\,.
   \end{equation}
In particular, for any $\epsilon\in(0,1/2)$
\begin{equation}\label{eq:stima-t-b}
t_{\mathrm{mix}}(\epsilon,\mathcal{S}_K)\leq
\frac{1}{2\Lambda}\log\Big(\frac{AN}{\log(1+4\epsilon^2)}\Big).
\end{equation}
\end{theorem}

The proof of Theorem \ref{thm:specdec} is achieved by the next two lemmata.
In the former, which is purely algebraic, we show that each $f_{j,m}$ is eigenvector of $-L^{(2)}$ with eigenvalue $\lambda_j$.
In the latter we show that the linear span of $\{f_{j,m}\}_{m\leq j}$ is dense in $L_2(\sigma)$.

\begin{lemma}
  As a formal power series, 
   \begin{equation}
   \label{gen=}
     g(z,w):=
     \sum_{j,m\in\bb Z_+^N}\frac{{z}^j w^m}{j!m!} 
     g_{j,m}
   \end{equation}
   satisfies
       \begin{equation}
      \label{elgf}
      -L^{(2)} g (z,w) =
      2 \sum_{k=1}^N   \varsh\big(\tfrac {\omega_k}2\big)
      \Big(z_k  \frac {\partial}{\partial z_k} +
      w_k  \frac {\partial}{\partial w_k}\Big) g(z,w)
    \end{equation}
    and
    \begin{equation}
  \label{cos}
  \langle g(\tilde z,\tilde w), g(z,w) \rangle_\sigma 
  = e^{ \sum_{k}\big(\gamma_k {\overline{\tilde z}}_k z_k
  + \delta_k {\overline{\tilde w}}_k w_k\big)}.
\end{equation}
\end{lemma}

\begin{proof}
  In view of \eqref{plm}, a direct computation (at  the level of formal power series) yields
  \begin{equation*}
    g(z,w)=
     e^{- \sum_{k} \gamma_k z_kw_k } \,e^{z a^*}
     \,e^{w a}
     =e^{-\sum_{k} \delta_k z_k w_k} \,e^{w a}   \,e^{z a^*}.
   \end{equation*}
      Recalling \eqref{lho-b} and applying the Leibniz rule,
    \begin{equation*}
      \begin{split}
        &  -L^{(2)} g (z,w)
        \\
        &=
      \sum_{k=1}^N \Big[
      e^{\omega_k/2} \big(
      z_k a_k^* g(z,w)  + g(z,w) w_k a_k \big)
      -e^{-\omega_k/2} \big(
      g(z,w) z_k a_k^*  + w_k a_k g(z,w)\big)\Big]
      \\
      &=  2 \sum_{k=1}^N
            \varsh\big(\omega_k/2\big)
      \big( z_k a_k^* g(z,w)  + g(z,w) w_k a_k \big)
      -2 e^{-\omega_k/2} z_k w_k g(z,w)
    \end{split}
    \end{equation*}
    where we used $[a_k,g(z,w)]=z_k g(z,w)$ and $[a_k^*,g(z,w)]=-w_k g(z,w)$.
    Moreover, by a direct computation,
    \begin{equation*}
      \begin{split}
        &
        2 \sum_{k=1}^N   \varsh\big(\tfrac {\omega_k}2\big)
      \Big(z_k  \frac {\partial}{\partial z_k} +
      w_k  \frac {\partial}{\partial w_k}\Big) g(z,w)
      \\
      &\;=
      2 \sum_{k=1}^N
            \varsh\big(\omega_k/2\big)
            \big( z_k a_k^* g(z,w)  + g(z,w) w_k a_k \big)
            - 4 \sum_{k=1}^N \varsh\big(\omega_k/2\big)    
      \gamma_k z_k w_k g(z,w).
    \end{split}
    \end{equation*}
Hence the identity \eqref{elgf} follows from \eqref{al=}.

In order to prove \eqref{cos}, we first observe that the canonical commutation relations of $a_k$, $a_k^*$, $k=1,\dots,N$ imply
\begin{equation*}
  e^{wa} e^{za^*} = e^{\sum_k z_kw_k}  e^{za^*} e^{wa}. 
\end{equation*}
This identity and the definition \eqref{gen=} yield
\begin{equation}
  \label{tr1}
  \begin{split}
    &  \tr \big( g(\tilde z,\tilde w)^* e^{- h} g(z,w) \big)
    = e^{-\sum_k \delta_k (z_kw_k
      +{\overline{\tilde z}}_k {\overline{\tilde w}}_k)}
    \tr \big( e^{{\overline{\tilde z}}a} e^{{\overline{\tilde w}}a^*}
    e^{- h} e^{wa} e^{za^*} 
    \big) 
    \\
    &\quad
    = e^{-\sum_k [\delta_k (z_kw_k
      +{\overline{\tilde z}}_k {\overline{\tilde w}}_k)
      + z_k{\overline{\tilde z}}_k]
    }
    \tr \big( 
    e^{(z+{\overline{\tilde w}}) a^*}
    e^{- h} e^{(w+{\overline{\tilde z}}) a}
    \big).
  \end{split}
\end{equation}
Computing the above trace in the  basis $\{|\ell\rangle\}_{\ell\in\bb Z_+^N}$, using \eqref{eq:aastar} and expanding the exponentials, we get
\begin{equation*}
  \begin{split}
    \tr \big(e^{za^*}e^{-h} e^{wa}\big)
    &= \sum_{\ell\in \bb Z_+^N}
    \langle\ell| e^{za^*}e^{- h} e^{wa}  |\ell\rangle
    \\
    &
    =
    \sum_{\ell,m\in \bb Z_+^N} \binom{\ell}{m} \frac {z^m w^m}{m!} 
    e^{-\sum_{k}\omega_k(\ell_k-m_k+1/2)}
    \,.
   \end{split}
\end{equation*}
Since
\begin{equation*}
  \sum_{\ell \in \bb Z_+^N}  \binom{\ell}{m} x^\ell =
  \prod_{k=1}^N \frac {x_k^{m_k}}{(1-x_k)^{m_k+1}},\qquad x=(x_1,\cdots,x_N)\in \bb C^N, 
\end{equation*}
we have
\begin{equation*}
  \begin{split}
    \tr \big(e^{za^*}e^{-h} e^{wa}\big)
    &=    \prod_{k=1}^N  \Big\{ \frac{e^{- \omega_k/2}}
    {1-e^{-\omega_k}}
    \sum_{m=0}^\infty \frac {1}{m!} 
    \Big( \frac {z_k w_k}{ 1-e^{-\omega_k }} \Big)^m
    \Big\}
    \\
    & =
    \exp\Big\{\sum_{k} \Big[
    \frac {z_k w_k}{ 1-e^{-\omega_k }}
    - \log \big( 2\varsh(\omega_k/2) \big)\Big]
    \Big\}.
  \end{split}
\end{equation*}
Noticing that $\delta_k^{-1} =1-e^{-\omega_k}$, 
and plugging the above result in \eqref{tr1}, we deduce
\begin{equation*}
   \langle g(\tilde z,\tilde w), g(z,w) \rangle_\sigma 
   = e^{-\sum_k [\delta_k (z_kw_k
      +{\overline{\tilde z}}_k {\overline{\tilde w}}_k)
      + z_k{\overline{\tilde z}}_k]
    }
    \, e^{\sum_{k} 
   \delta_k {(z_k+{\overline{\tilde w}}_k)(w_k+{\overline{\tilde z}}_k)} }
\end{equation*}
which implies \eqref{cos} as $\delta_k-1=\gamma_k$.
\end{proof}

\begin{lemma}\label{lem:denso}
Under the identification in Lemma~\ref{lem:ide}, the set $\mathcal{W}$ is dense in $L_2(\sigma)$.
\end{lemma}
\begin{proof}
We follow the argument in \cite[Lem.7.1]{CFL}.
For simplicity of notation, we prove the statement in the case $N=1$.
Since $\mc D^{(0)}$ is dense in $L_2(\sigma)$, it is enough to show that for $b\in\mc D^{(0)}$
\begin{equation}\label{eq:claim}
\langle (a^*)^ia^j,b\rangle_\sigma=0
\,\,\text{ for all }\,\, i,j\in\bb Z_+
\quad\Longrightarrow\quad
b=0.
\end{equation}
Fix $i,j\in\bb Z_+$.
By direct computation, for $b=\sum_{\ell,m}b_{\ell,m}|\ell\rangle\langle m|\in\mc D^{(0)}$,
we have
\begin{align*}
\langle (a^*)^ia^{i+j},b\rangle_\sigma
& =
\frac1{1-e^{-\omega_1}}
\sum_{\ell,m}
b_{\ell,m}
e^{-\omega_1\ell}
\langle m|
(a^*)^{i+j}a^i
|\ell\rangle \\
& =
\frac1{1-e^{-\omega_1}}
\sum_{\ell}
b_{\ell,\ell+j}
e^{-\omega_1\ell}
\langle \ell+j|
(a^*)^{i+j}a^i
|\ell\rangle
\,.
\end{align*}
Since
$$
\langle \ell+j|
(a^*)^{i+j}a^i
|\ell\rangle
=
\ell(\ell-1)\cdots(\ell-i+1)
\langle \ell+j|
(a^*)^{j}
|\ell\rangle
\,,
$$
the polynomial
$$
b^{(j)}(z)
=
\frac1{1-e^{-\omega_1}}
\sum_{\ell}
b_{\ell,\ell+j}
e^{-\omega_1\ell}
\langle \ell+j|
(a^*)^{j}
|\ell\rangle
z^\ell
$$
satisfies
$$
\frac{d^ib^{(j)}}{dz^i}\Big|_{z=1}
=
\langle (a^*)^ia^{i+j},b\rangle_\sigma
\,.
$$
Hence, the assumption in \eqref{eq:claim} implies $b^{(j)}(z)=0$,
which yields $b_{\ell,\ell+j}=0$ for every $\ell,j\in \bb Z_+$. 
Exchanging the roles of $i$ and $i+j$ and of $\ell$ and $m$,
we also get $b_{\ell+j,\ell}=0$, completing the proof of \eqref{eq:claim}.
\end{proof}

\begin{proof}[Proof of Theorem \ref{thm-bose}]
  Within the present infinite dimensional context, we define the
  quantum $\chi^2$-divergence $\mathrm{D}_{\chi^2} (\cdot | \sigma)$ as the map on the set of states on $\mathcal{K}$ defined by the variational formula
  \eqref{dc2var}, namely
  \begin{equation}\label{eq:dichi}
   \mathrm{D}_{\chi^2} (\rho | \sigma) :=
  \sup_{g\in \mc K \,\colon\, \|g\|_\sigma =1}  \big| \rho(g) -\sigma(g)
  \big|^2. 
  \end{equation}
  We then notice that the variational proof of Lemma~\ref{tr<c2}
  presented in Section~\ref{s:2} applies also to the current setting.

  We next claim that for each $K\in(0,\infty)$ there exists $T_K$ such that the following holds.
  For each $t\geq T_K$, $g\in \mc K$, and $\rho\in S_K$ we have
  \begin{equation}
    \label{cla}
    \rho\big(P_t g\big) =
    \sum_{j\in \bb Z_+^N} \sum_{m\le j}
    e^{-\lambda_j t} \: \big\langle f_{j,m}, g \big\rangle_\sigma
    \: \rho\big(  f_{j,m}\big)
  \end{equation}
  where $\{f_{j,m}\}$ are the eigenvectors of $L^{(2)}$ provided by
  Theorem~\ref{thm:specdec} and $\lambda_j$ the corresponding
  eigenvectors.
  Note that $\rho\big(f_{j,m}\big)$ is well defined as $\rho\in \mc S_K$ and $f_{j,m}\in\mathcal{W}$. 

  Postponing the proof of the equality in \eqref{cla}, we first show that the (numerical) series on its right hand side is absolutely convergent.
  By Cauchy-Schwarz inequality, \eqref{spl}, elementary estimates, and choosing $T_K$ large enough, this follows from the following bound.
  There exists $C=C(K,\omega)\in(0,\infty)^N$ such that
  \begin{equation}
    \label{eq:bdf}
    \big|\rho\big(f_{j,m}\big)\big|
    \leq C^j
    \qquad j,m\in\bb Z_+^N, m\leq j.
  \end{equation}
  As follows from Theorem~\ref{thm:specdec}, this bound is equivalent to
  \begin{equation}
        \label{eq:bdf1}
    \big|\rho\big(g_{\ell,m}\big)\big|
    \leq \sqrt{\ell!m!\gamma^{\ell}\delta^{m}}C^{\ell+m}
    \qquad \ell,m\in\bb Z_+^N.
  \end{equation}
  According to \eqref{plm},
  \begin{equation}
    \label{eq:bdf2}
    \big|\rho\big(g_{\ell,m}\big)\big|
    \leq \sum_{i\in\bb Z_+^N}\binom{\ell}{i}\binom{m}{i}i!\gamma^i\big|\rho\big((a^*)^{\ell-i}a^{m-i}\big)\big|.
  \end{equation}
  On the other hand, by the Kadison-Schwarz inequality and \eqref{mcK}, for $i\leq\ell,m$
  \begin{equation*}
    \begin{split}
     &\big|\rho\big((a^*)^{\ell-i}a^{m-i}\big)\big|^2
    \leq \rho\big((a^*)^{\ell-i}a^{m-i}(a^*)^{m-i}a^{\ell-i}\big)\\
    &=\rho\Big(\prod_{k=1}^Nn_k(n_k-1)\cdots(n_k-\ell_k+i_k+1)\\
    &\qquad\times(n_k-\ell_k+i_k+1)(n_k-\ell_k+i_k+2)\cdots(n_k-\ell_k+m_k)\Big)\\
    &\leq \rho\Big((n+\ell+m-2i)^{\ell+m-2i}\Big)\\
    &= \rho\Big(\prod_{k\colon\ell_k+m_k-2i_k>0}(n_k+\ell_k+m_k-2i_k)^{\ell_k+m_k-2i_k}\Big)\\
    &\leq \rho\Big(\prod_{k\colon\ell_k+m_k-2i_k>0}2^{\ell_k+m_k-2i_k-1}\big(n_k ^{\ell_k+m_k-2i_k}+(\ell_k+m_k-2i_k)^{\ell_k+m_k-2i_k}\big)\Big)\\
    &\leq2^{|\ell+m-2 i|-|\{k\colon\ell_k+m_k-2i_k>0\}|}\sum_{A\subset \{k\colon\ell_k+m_k-2i_k>0\}}\prod_{\alpha\in A}(\ell_\alpha+m_\alpha-2i_\alpha)^{\ell_\alpha+m_\alpha-2i_\alpha}\\
    &\phantom{\leq2^{|\ell+m-2 i|-|\{k\colon\ell_k+m_k-2i_k>0\}|}}\qquad\times\prod_{\beta\not\in A}(\ell_\beta+m_\beta-2i_\beta)!K^{\ell_\beta+m_\beta-2i_\beta}.
    \end{split}
  \end{equation*}
  By using $k^k\leq k!e^k$ and assuming, without loss of generality, that $K\geq e$, we deduce
  \begin{equation*}
    \big|\rho\big((a^*)^{\ell-i}a^{m-i}\big)\big|
    \leq (2K)^{|\ell+m-2i|/2}\sqrt{(\ell+m-2i)!}.
  \end{equation*}
  Plugging this bound into \eqref{eq:bdf2} we get
  \begin{equation*}
    \begin{split}
        \big|\rho\big(g_{\ell,m}\big)\big|
        &\leq\sqrt{\ell!m!} \sum_{i\in\bb Z_+^N}\sqrt{\binom{\ell}{i}\binom{m}{i}\binom{\ell+m-2i}{\ell-i}}\gamma^i(2K)^{|\ell+m-2i|/2}\\
     &\leq\sqrt{\ell!m!\gamma^\ell\delta^m}2^{|\ell+m|/2}\sum_{i\leq\ell,m}\Big(\frac{4K}{\gamma}\Big)^{(\ell+m-2i)/2}
    \end{split}
  \end{equation*}
  where we used $\binom{k}{h}\leq2^{|k|}$ and $\delta\geq\gamma$.
  Above $4K/\gamma$ stands for $4K(\gamma_1^{-1},\dots, \gamma_N^{-1})\in(0,\infty)^N$.
  Without loss of generality we now assume $4K>e^{-\omega_k/2}/\varsh(\omega_k/2)$, which implies $\gamma_k/(4K)\leq1/2$, $k=1,\dots,N$.
  Hence,
  \begin{equation*}
        \big|\rho\big(g_{\ell,m}\big)\big|
     \leq\sqrt{\ell!m!\gamma^\ell\delta^m}\Big(\frac{16K}{\gamma}\Big) ^{(\ell+m)/2}.
   \end{equation*}
   The  bound \eqref{eq:bdf1} thus holds with $C_k=4\sqrt{K/\gamma_k}$, $k=1,\dots,N$, and $T_K$ such that $e^{-2\Lambda T_K}\max_{k=1,\dots,N}C_k\leq1/2$.

   To achieve the bound \eqref{label4}, we assume without loss of generality $t\geq T_K$ and observe that $\sigma(g)=\langle f_{0,0},g\rangle_\sigma$.
   In view of \eqref{eq:dichi} and \eqref{cla}, by Cauchy-Schwarz inequality,
   \begin{equation*}
     D_{\chi^2}(\rho P_t|\sigma)
     \leq \sum_{|j|>0}e^{-2\lambda_j t}\sum_{m\leq j}\big|\rho\big(f_{j,m}\big)\big|^2
     \leq \prod_{k=1}^N\sum_{j_k=0}^\infty(j_k+1)\big(e^{-4\varsh(\omega_k/2)t}C_k^{2}\big)^{j_k}-1,
   \end{equation*}
   where we used \eqref{spl} and \eqref{eq:bdf}.
   We then get
   \begin{equation*}
     D_{\chi^2}(\rho P_t|\sigma)
     \leq \prod_{k=1}^N\Big(1+A_ke^{-2\Lambda t}\Big)-1
   \end{equation*}
   where
   \begin{equation*}
     A_k:=\sum_{j=1}^\infty e^{-2\Lambda T_K(j-1)}(j+1)C_k^{2j}
     \leq A
   \end{equation*}
   for a suitable constant $A$ depending only on $\Lambda$ and $K$.

   It remains to prove the equality in \eqref{cla}.
   To this end, for $r\in \bb N$ let $\chi_r$ be the orthogonal projector on the span of $\big\{ |\ell\rangle, \: |\ell |\le r\big\}$ and introduce the (finite rank) density matrix
  \begin{equation*}
      \rho_r := \frac 1{Z_r} \chi_r \rho \chi_r
  \end{equation*}
  where $Z_r$ is the appropriate normalization constant. Denoting
  still by $\rho_r$ the state on $\mc K$ with this density matrix we
  readily deduce that the sequence $\{\rho_r\}$ converges weakly* to
  $\rho$. Letting $\varrho_{r,\sigma}$ be the finite rank operator
  given by
  \begin{equation*}
    \varrho_{r,\sigma}
    =\prod_{k=1}^N(1-e^{-\omega_k})\cdot\sum_{|\ell|\leq r}e^{\sum_{k=1}^N\omega_k\ell_k}|\ell\rangle\langle\ell|\rho_r,
  \end{equation*}
   for each $g\in\mc K$ and  $r\in
  \bb N$ we have 
  \begin{equation*}
    \rho_r(g) = \big\langle \varrho_{r,\sigma}, g  \big\rangle_\sigma.
  \end{equation*}
  Hence
  \begin{equation}
    \label{prelim}
    \big(\rho P_t\big)(g) =\rho\big(P_t g\big)=\lim_{r\to\infty}  \rho_r\big(P_t g\big)
    =\lim_{r\to\infty}\big\langle\varrho_{r,\sigma}, P_t g  \big\rangle_\sigma
    =\lim_{r\to\infty}\big\langle\varrho_{r,\sigma}, P_t^{(2)} g
    \big\rangle_\sigma. 
  \end{equation}
  On the other hand, by the spectral decomposition in
  Theorem~\ref{thm:specdec}, for each $g\in \mc K \subset L_2(\sigma)$
  \begin{equation}
    \label{toplug}
    P_t^{(2)} g = \sum_{j\in \bb Z_+^N} \sum_{m\le j}
    e^{-\lambda_j t} \: \big\langle f_{j,m}, g \big\rangle_\sigma \: f_{j,m}.
  \end{equation}
  In view of \eqref{mcK}, the definition of $\rho_r$, and the arguments used to deduce \eqref{eq:bdf}, letting $C\in(0,\infty)^N$ be as in \eqref{eq:bdf}, we have   
  \begin{equation*}
    \sup_{r\in \bb N} \big| \rho_r\big( f_{j,m} \big) \big| \le C^j.
  \end{equation*}
  Plug \eqref{toplug} on the right hand side of \eqref{prelim}.
  In view of the previous bound, the expression \eqref{spl} for the
  eigenvalues $\lambda_j$, and dominated convegence, we can 
  exhange the limit as $r\to \infty$ with the summation over $j\in \bb
  Z_+^N$ and deduce the claim \eqref{cla}.
\end{proof}

\subsection*{Commutative case: restriction to diagonal matrices.}

Consider the continuos time Markov chain on state space $\bb Z_+^N$
whose pre-generator $L_\mathrm{cl}$ acts on compactly supported functions
$f\colon \bb Z_+^N \to \bb R$ as
\begin{equation*}
  \big(L_\mathrm{cl} f\big) (\ell) = \sum_{k=1}^N 
  \Big\{ c_k^+(\ell_k)  \big[ f(\ell +e_k) - f(\ell) \big]
  +c_k^-(\ell_k)  \big[ f(\ell - e_k) - f(\ell) \big] \Big\} 
\end{equation*}
where we recall that $e_k$ has coordinates $(e_k)_{k'} =\delta_{k,k'}$ and 
the \emph{jump rates} are given by 
\begin{equation}\label{eq:jrc}
  c_k^+(l) = 2 \, (l +1) \, e^{-\omega_k/2},
  \qquad\qquad
  c_k^-(l) = 2\, l \,e^{\omega_k/2}.
\end{equation}
The chain generated by $L_\mathrm{cl}$ corresponds to $N$ independent
birth and death processes with birth and death  rate
given by $c_k^+$, $c_k^-$, respectively. It is simple to check that
\begin{equation*}
  L^{(0)} \sum_{\ell\in\bb Z_+^N} f(\ell) |\ell\rangle\langle\ell|
  =\sum_{\ell\in\bb Z_+^N} \big( L_\mathrm{cl} f\big)
  (\ell) \, |\ell\rangle\langle\ell|.
\end{equation*}
In particular, if the state $\rho$ has the form
$\rho= \sum_{\ell\in\bb Z_+^N}\mu(\ell)\,
|\ell\rangle\langle\ell|$ for some probability $\mu$ on  $\bb Z_+^N$,
then $\rho P_t =  \sum_{\ell\in\bb Z_+^N}
\big(\mu P_t^\mathrm{cl}\big) (\ell)\,  |\ell\rangle\langle\ell|$ where  
$(P_t^\mathrm{cl})_{t\geq0}$ is the Markov semigroup on $c_0(\bb Z_+^N)$ generated by $L_\mathrm{cl}$.
Cle\-ar\-ly, the generator $L_\mathrm{cl}$ is reversible with respect to the
probability $\pi$ on $\bb Z_+^N$ given by $\pi(\ell) = Z^{-1} e^{-\sum_k \omega_k \ell_k}$,
where $Z$ is the appropriate normalization. 
Theorem~\ref{thm:specdec} thus provides, as a corollary, the spectral
decomposition of $L_\mathrm{cl}^{(2)}$.
Here we have denoted by $L_\mathrm{cl}^{(2)}$  the generator of the semigroup $(P_t^{\mathrm{cl},(2)})_{t\geq0}$, the extension to $L^2(\bb Z_+^N,\pi)$ of $(P_t^\mathrm{cl})_{t\geq0}$.
More precisely, recalling
\eqref{spl}, 
$\mathrm{spec}(-L_\mathrm{cl}^{(2)})
=\{\lambda^\mathrm{cl}_\ell\}_{\ell\in\bb Z_+^N}$
with eigenvalues $\lambda_\ell^\mathrm{cl}$ given by
\begin{equation*}
  \lambda^\mathrm{cl}_\ell =2 \lambda_\ell
  =4 \sum_{k=1}^N  
   \varsh\big(\tfrac {\omega_k}2\big) \ell_k.
\end{equation*}
The corresponding orthonormal basis $\{f_\ell^\mathrm{cl}\}$ of $L^2(\bb Z_+^N,\pi)$ of eigenvectors is given by
\begin{equation*}
  f_\ell^\mathrm{cl}(m) = \frac 1{\gamma^{\ell/2}\delta^{\ell/2}}
  \sum_{j\in \bb Z_+^N} \binom {\ell}{j}\binom {m}{\ell -j} (-1)^{|j|}
  \gamma^j  
  \,,
\end{equation*}
where $\gamma$ and $\delta$ are given by \eqref{al=}.
Indeed, by \eqref{plm} and \eqref{aul},
$$
f_{2\ell,\ell} = \sum_m f_\ell^\mathrm{cl}(m) \,|m\rangle\langle m|.
$$

Furthermore, Theorem \ref{thm-bose} implies the following statement regarding the exponential ergodicity
of $(P_t^{\mathrm{cl}})_{t\geq0}$.
Fix $K>0$ and denote by $\mc P_K$ the set of probabilities $\mu$ on $\bb Z_+^N$
satisfying
\begin{equation}
  \label{eq:skc}
  \sum_{m\in\bb Z_+^N}\mu(m) m^{\ell}
  \leq
  K^{|\ell|}
  \ell!
  \,.  
\end{equation}
Then, for each $K>0$ there exists $A=A(K,\Lambda)$ such that for each $N\in\bb N$ and $t\geq0$
$$
4 \sup_{\mu\in\mc P_K}
    \mathrm{d}_\mathrm{TV} (\mu P^{\mathrm{cl}}_t,\pi)^2
     \le 
\big(
 1+Ae^{-4\Lambda t}
\big)^N -1
\,,
$$
where $\mathrm{d}_\mathrm{TV}$ denotes the total variation distance.

The classical example of a birth and death chain reversible with respect to $\pi$ has jump rates $c_k^+(l)=e^{-\omega_k}$, $c_k^-(l)=\ind(l_k>0)$.
It provides a popular example for which the modified logarithmic Sobolev inequality fails.
In contrast, as follows from \cite[Thm.8.5]{CM1}, this inequality holds for the jump rates specified in \eqref{eq:jrc}.
The value of the modified logarithmic Sobolev constant is $4\Lambda$.
By applying the general bound for one-dimensional chain in \cite{Mi}, it is also possible to prove directly a logarithmic Sobolev inequality uniform in $N$ but the value of the corresponding constant is not sharp.
To our knowledge, the convergence in total variation for initial states satisfying the bound \eqref{eq:skc} has not been previously analyzed in the literature. 

\appendix

\section{Gap of the Lindblad generator for simple Lie algebras}\label{App1}

In the case of classical Lie algebras, i.e. for types $A_r$, $B_r$, $C_r$ and $D_r$,
we rely on specific representations of the root systems, see e.g. \cite{Hum}.
For the exceptional Lie algebras $E_6$, $E_7$, $E_8$, $F_4$ and $G_2$, 
we do not rely on a specific representation of the root system but only 
on the computation of the inverse of the Cartan matrix.

\subsection*{Case $A_r=\mf{sl}_{n}$, $n=r+1\geq2$}

The root system of type $A_r$ is
$$
\Phi
=
\big\{
e_i-e_j
\big\}_{1\leq i\neq j\leq n}
$$
where $e_1,\dots,e_{n}$ is the canonical basis of the Euclidean space $\bb R^{n}$.
In this case $E=(e_1+\dots+e_{n})^{\perp}\subset\bb R^{n}$.
We fix the base $\Delta=\{\alpha_1,\dots,\alpha_r\}$,
where 
$\alpha_i=e_i-e_{i+1}$.
With this choice of the base, the subset of positive roots is
$\Phi_+
=
\big\{
e_i-e_j
\big\}_{1\leq i<j\leq n}$,
the highest root is $\theta=e_1-e_n$,
and the sum of positive roots is
$$
2\delta
=
\sum_{\alpha\in\Phi_+}\alpha
=
\sum_{i=1}^n(n+1-2i)e_i
\,.
$$
The root lattice is
$P=\Span_{\bb Z}\{\Phi\}=\bb Z^n\cap E$,
while its positive cone is
$$
P^+
=
\Span_{\bb Z_+}\{\Phi_+\}
=
\bigg\{
\sum_{i=1}^nk_ie_i\in \bb Z^n\cap E\,\colon\,
\sum_{j=1}^i k_j\geq0,\, i=1,\dots,n
\bigg\}
\,.
$$

The system of fundamental weights 
associated to the base $\Delta$ is $\Pi=\{\wp_1,\dots,\wp_r\}$,
where
$$
\wp_i
=
\frac{n-i}n\sum_{j=1}^ie_j-\frac{i}{n}\sum_{j=i+1}^ne_j
\,,\quad i=1,\dots,r
\,.
$$
The corresponding weight lattice is
$$
\Lambda
=
\Span_{\bb Z}\{\Pi\}
=
\bigg\{
\sum_{i=1}^n(k_i-\bar k)e_i
\,\colon\,
k_1,\dots,k_n\in\bb Z
\bigg\}
\,,
$$
where
$\bar k=\frac1n\sum_{i=1}^nk_i$.
In other words, $\Lambda$ is the orthogonal projection of $\bb Z^n$ onto $E$.
The positive cone of dominant integral weights is
$$
\Lambda^+
=
\Span_{\bb Z_+}\{\Pi\}
=
\bigg\{
\sum_{i=1}^n(k_i-\bar k)e_i\in\Lambda
\,\colon\,
k_1\geq k_2\geq \dots\geq k_n
\bigg\}
\,.
$$
It follows that
$$
P^+\cap\Lambda^+
=
\bigg\{
\sum_{i=1}^nk_ie_i\in\bb Z^n
\,\colon\,
k_1\geq k_2\geq \dots\geq k_{n-1}\geq0,\,
\sum_{i=1}^{n}k_i=0
\bigg\}
\,.
$$

Observe that, if $\mu=\sum_{i=1}^nk_ie_i\in P^+\cap\Lambda^+$,
then 
$$
(\mu|\mu+2\delta)
=
\sum_{i=1}^nk_i(k_i+n+1-2i)
=
\sum_{i=1}^nk_i(k_i+2(n-i))
\,,
$$
where, for the second equality, we used $\sum_{i=1}^nk_i=0$.
Now each $i$-th summand in the right hand side is positive and increasing in $k_i$.
Hence, the minimum in \eqref{eq:g0}
is achieved for $\mu=e_1-e_n=\theta$,
which corresponds to the adjoint representation $V_\theta=\ad \mf g$.
As a consequence $g_0=1$, as claimed in Theorem \ref{t:mrl}.

\subsection*{Case $B_r=\mf{so}_{2r+1}$, $r\geq2$}

The root system of type $B_r$ is
$$
\Phi
=
\big\{
\pm e_i\pm e_j
\big\}_{1\leq i<j\leq r}
\sqcup
\big\{
\pm e_i
\big\}_{1\leq i\leq r}
$$
where $e_1,\dots,e_{r}$ is the canonical basis of the Euclidean space $E=\bb R^{r}$.
We fix the base $\Delta=\{\alpha_1,\dots,\alpha_r\}$,
where 
$\alpha_i=e_i-e_{i+1}$ for $i=1,\dots,r-1$ and $\alpha_r=e_r$.
With this choice of the base, the subset of positive roots is
$$
\Phi_+
=
\big\{
e_i\pm e_j
\big\}_{1\leq i<j\leq r}
\sqcup
\big\{
e_i
\big\}_{1\leq i\leq r}
\,,
$$
the highest root is $\theta=e_1+e_2$,
and the sum of positive roots is
$$
2\delta
=
\sum_{\alpha\in\Phi_+}\alpha
=
\sum_{i=1}^r (2r+1-2i) e_i
\,.
$$
The root lattice is
$P=\Span_{\bb Z}\{\Phi\}=\bb Z^r$,
while its positive cone is
$$
P^+
=
\Span_{\bb Z_+}\{\Phi_+\}
=
\bigg\{
\sum_{i=1}^rk_ie_i\in \bb Z^r\,\colon\,
\sum_{j=1}^i k_j\geq0,\, i=1,\dots,r
\bigg\}
\,.
$$

The system of fundamental weights 
associated to the base $\Delta$ is $\Pi=\{\wp_1,\dots,\wp_r\}$,
where
$$
\wp_i
=
\sum_{j=1}^ie_j\,,\quad i=1,\dots,r-1\,,\qquad
\wp_r=\frac12\sum_{j=1}^re_j
\,.
$$
The corresponding weight lattice is
$\Lambda
=
\Span_{\bb Z}\{\Pi\}
=
\Lambda_{\bar 0}\sqcup\Lambda_{\bar 1}$,
where
$$
\Lambda_{\bar 0}
=
\bigg\{
\sum_{i=1}^rk_ie_i
\,\colon\,
k_1,\dots,k_r\in\bb Z
\bigg\}
\,\text{ and }\,
\Lambda_{\bar 1}
=
\bigg\{
\sum_{i=1}^rk_ie_i
\,\colon\,
k_1,\dots,k_r\in\frac12+\bb Z
\bigg\}
\,.
$$
The positive cone of dominant integral weights is
$\Lambda^+
=
\Span_{\bb Z_+}\{\Pi\}
=
\Lambda_{\bar 0}^+\sqcup\Lambda_{\bar 1}^+$,
where
$$
\Lambda_{a}^+
=
\bigg\{
\sum_{i=1}^rk_ie_i\in\Lambda_a
\,\colon\,
k_1\geq k_2\geq \dots\geq k_r\geq0
\bigg\}
\,,\quad a\in\{\bar 0,\bar 1\}
\,.
$$
It follows that
$P^+\cap\Lambda^+
=
\Lambda_{\bar 0}^+$.

Observe that, if $\mu=\sum_{i=1}^rk_ie_i\in\Lambda_{\bar 0}^+$,
then 
$$
(\mu|\mu+2\delta)
=
\sum_{i=1}^rk_i(k_i+2r+1-2i)
\,.
$$
Now each $i$-th summand in the right hand side is positive and increasing in $k_i$.
Hence, the minimum in \eqref{eq:g0}
is achieved for $\mu=e_1=\wp_1$,
where it has value
$(\wp_1|\wp_1+2\delta)=2r$.
On the other hand, the highest root is $\theta=e_1+e_2=\wp_2$,
and
$(\theta|\theta+2\delta)
=
2(2r-1)$.
As a consequence $g_0=\frac{r}{2r-1}<1$, as claimed in Theorem \ref{t:mrl}.

We conclude the study of this case
by exhibiting a representation of $\mf g$ for which 
$\gap(-L)=g_0$ and another representation for which $\gap(-L)=1$.
For the defining representation $V=V_{\wp_1}$ we have
$\End V\simeq V_0\oplus V_{\theta}\oplus V_{2\wp_1}$,
so that, in this case, $V_{\wp_1}$ does not appear and $\gap(-L)=1$.
On the other hand, for $V=V_{\wp_r}$
we have
$\End V\simeq V_0\oplus\Big(\bigoplus_{i=1}^{r-1}V_{\wp_i}\Big)\oplus V_{2\wp_r}$.
Since, in this case, $V_{\wp_1}$ does appear in the decomposition,
we have $\gap(-L)=g_0$.

\subsection*{Case $C_r=\mf{sp}_{2r}$, $r\geq3$}

The root system of type $C_r$ is
$$
\Phi
=
\big\{
\pm e_i\pm e_j
\big\}_{1\leq i<j\leq r}
\sqcup
\big\{
\pm 2e_i
\big\}_{1\leq i\leq r}
$$
where $e_1,\dots,e_{r}$ is the canonical basis of the Euclidean space $E=\bb R^{r}$.
We fix the base $\Delta=\{\alpha_1,\dots,\alpha_r\}$,
where 
$\alpha_i=e_i-e_{i+1}$ for $i=1,\dots,r-1$ and $\alpha_r=2e_r$.
With this choice of the base, the subset of positive roots is
$$
\Phi_+
=
\big\{
e_i\pm e_j
\big\}_{1\leq i<j\leq r}
\sqcup
\big\{
2e_i
\big\}_{1\leq i\leq r}
\,,
$$
the highest root is $\theta=2e_1$,
and the sum of positive roots is
$$
2\delta
=
\sum_{\alpha\in\Phi_+}\alpha
=
2\sum_{i=1}^r (r+1-i) e_i
\,.
$$
The root lattice is
$$
P=\Span_{\bb Z}\{\Phi\}=
\Big\{\sum_{i=1}^rk_ie_i\in\bb Z^r
\,\colon\,
\sum_{i=1}^rk_i\in2\bb Z
\Big\}
\,,
$$
while its positive cone is
$$
P^+
=
\Span_{\bb Z_+}\{\Phi_+\}
=
\bigg\{
\sum_{i=1}^rk_ie_i\in P\,\colon\,
\sum_{j=1}^i k_j\geq0,\, i=1,\dots,r
\bigg\}
\,.
$$

The system of fundamental weights 
associated to the base $\Delta$ is $\Pi=\{\wp_1,\dots,\wp_r\}$,
where
$$
\wp_i
=
\sum_{j=1}^ie_j\,,\quad i=1,\dots,r
\,.
$$
The corresponding weight lattice is
$\Lambda
=
\Span_{\bb Z}\{\Pi\}
=\bb Z^r$,
while 
the positive cone of dominant integral weights is
$$
\Lambda^+
=
\Span_{\bb Z_+}\{\Pi\}
=
\bigg\{
\sum_{i=1}^rk_ie_i\in\bb Z^r
\,\colon\,
k_1\geq k_2\geq \dots\geq k_r\geq0
\bigg\}
\,.
$$
It follows that
$$
P^+\cap\Lambda^+
=
\bigg\{
\sum_{i=1}^rk_ie_i\in\bb Z^r
\,\colon\,
k_1\geq k_2\geq \dots\geq k_r\geq0,\,
\sum_{i=1}^r k_i\in 2\bb Z
\bigg\}
\,.
$$

Observe that, if $\mu=\sum_{i=1}^rk_ie_i\in P^+\cap\Lambda^+$,
then 
$$
(\mu|\mu+2\delta)
=
\sum_{i=1}^rk_i(k_i+2(r+1-i))
\,.
$$
Now, each $i$-th summand in the right hand side is positive and increasing in $k_i$.
Hence, the minimum in \eqref{eq:g0}
is achieved either for the highest root $\theta=2e_1$, corresponding to the adjoint representation,
or for $\wp_2=e_1+e_2$.
By computing the two values, we get
$(\theta|\theta+2\delta)=4(r+1)$
and $(\wp_2|\wp_2+2\delta)=4r$.
As a consequence $g_0=\frac{r}{r+1}<1$, as claimed in Theorem \ref{t:mrl}.

Also in this case, we conclude
by exhibiting a representation of $\mf g$ for which 
$\gap(-L)=g_0$ and another representation for which $\gap(-L)=1$.
For the 
defining representation $V=V_{\wp_1}$ we have
$\End V\simeq V_0\oplus V_{\theta}\oplus V_{\wp_2}$,
so that, in this case, $V_{\wp_2}$ does appear in the decomposition 
and $\gap(-L)=g_0$.
On the other hand, for $V=V_{\wp_r}$
we have
$\End V\simeq V_0\oplus\Big(\bigoplus_{i=1}^{r}V_{2\wp_i}\Big)$.
Since, in this case, $V_{\wp_2}$ does not appear,
we have $\gap(-L)=1$.

\subsection*{Case $D_r=\mf{so}_{2r}$, $r\geq4$}

The root system of type $D_r$ is
$$
\Phi
=
\big\{
\pm e_i\pm e_j
\big\}_{1\leq i<j\leq r}
$$
where $e_1,\dots,e_{r}$ is the canonical basis of the Euclidean space $E=\bb R^{r}$.
We fix the base $\Delta=\{\alpha_1,\dots,\alpha_r\}$,
where 
$\alpha_i=e_i-e_{i+1}$ for $i=1,\dots,r-1$ and $\alpha_r=e_{r-1}+e_r$.
With this choice of the base, the subset of positive roots is
$$
\Phi_+
=
\big\{
e_i\pm e_j
\big\}_{1\leq i<j\leq r}
\,,
$$
the highest root is $\theta=e_1+e_2$,
and the sum of positive roots is
$$
2\delta
=
\sum_{\alpha\in\Phi_+}\alpha
=
2\sum_{i=1}^r (r-i) e_i
\,.
$$
The root lattice is
$$
P
=
\Span_{\bb Z}\{\Phi\}
=
\bigg\{
\sum_{i=1}^rk_ie_i\in \bb Z^r\,\colon\,
\sum_{i=1}^r k_i\in2\bb Z
\bigg\}
\,,
$$
while its positive cone is
$$
P^+
=
\Span_{\bb Z_+}\{\Phi_+\}
=
\bigg\{
\sum_{i=1}^rk_ie_i\in P\,\colon\,
\sum_{j=1}^i k_j\geq0,\, i=1,\dots,r-1
,\,
|k_r|\leq \sum_{j=1}^{r-1} k_j
\bigg\}
\,.
$$

The system of fundamental weights 
associated to the base $\Delta$ is $\Pi=\{\wp_1,\dots,\wp_r\}$,
where
$$
\wp_i
=
\sum_{j=1}^ie_j\,,\quad i=1,\dots,r-2
\,,\qquad
\wp_{r-1}=\frac12\Big(\sum_{i=1}^{r-1}e_i-e_r\Big)
\,,\quad
\wp_{r-1}=\frac12\sum_{i=1}^{r}e_i
\,.
$$
The corresponding weight lattice is
$\Lambda
=
\Span_{\bb Z}\{\Pi\}
=
\Lambda_{\bar 0}\sqcup\Lambda_{\bar 1}$,
where
$$
\Lambda_{\bar 0}
=
\bigg\{
\sum_{i=1}^rk_ie_i
\,\colon\,
k_1,\dots,k_r\in\bb Z
\bigg\}
\,\text{ and }\,
\Lambda_{\bar 1}
=
\bigg\{
\sum_{i=1}^rk_ie_i
\,\colon\,
k_1,\dots,k_r\in\frac12+\bb Z
\bigg\}
\,.
$$
The positive cone of dominant integral weights is
$\Lambda^+
=
\Span_{\bb Z_+}\{\Pi\}
=
\Lambda_{\bar 0}^+\sqcup\Lambda_{\bar 1}^+$,
where
$$
\Lambda_{a}^+
=
\bigg\{
\sum_{i=1}^rk_ie_i\in\Lambda_a
\,\colon\,
k_1\geq k_2\geq \dots\geq k_{r-1}\geq|k_r|
\bigg\}
\,,\quad a\in\{\bar 0,\bar 1\}
\,.
$$
It follows that
$$
P^+\cap\Lambda^+
=
\bigg\{
\sum_{i=1}^rk_ie_i\in\bb Z^r
\,\colon\,
k_1\geq k_2\geq \dots\geq k_{r-1}\geq|k_r|,\,
\sum_{i=1}^rk_i\in 2\bb Z
\bigg\}
$$

Observe that, if $\mu=\sum_{i=1}^rk_ie_i\in P^+\cap\Lambda^+$,
then 
$$
(\mu|\mu+2\delta)
=
\sum_{i=1}^rk_i(k_i+2(r-i))
\,.
$$
Now each $i$-th summand in the right hand side is positive and increasing in $k_i$.
Hence, the minimum in \eqref{eq:g0}
is achieved 
either for the highest root $\theta=e_1+e_2=\wp_2$, corresponding to the adjoint representation,
or for $2e_1=2\wp_1$.
By computing the two values, we get
$(\theta|\theta+2\delta)=4(r-1)
<
4r=(2\wp_1|2\wp_1+2\delta)$.
As a consequence $g_0=1$, as claimed in Theorem \ref{t:mrl}.

\subsection*{Case $E_6$}

The Dynkin diagram of $E_6$ is
\begin{center}
\begin{picture}(100,46)
\put(65,30){\circle{6}}
\put(65,26){\line(0,-1){12}}
\multiput(25,10)(20,0){5}{\circle{6}}
\multiput(29,10)(20,0){4}{\line(1,0){12}}
\put(23,-1){\footnotesize{1}}
\put(43,-1){\footnotesize{2}}
\put(63,-1){\footnotesize{3}}
\put(83,-1){\footnotesize{4}}
\put(103,-1){\footnotesize{5}}
\put(63,36){\footnotesize{6}}
\end{picture}
\end{center}
The corresponding Cartan matrix thus reads:
$$
A
=
\left(\begin{array}{cccccc}
2&-1&0&0&0&0 \\
-1&2&-1&0&0&0 \\
0&-1&2&-1&0&-1 \\
0&0&-1&2&-1&0 \\
0&0&0&-1&2&0 \\
0&0&-1&0&0&2
\end{array}\right)
$$
and its inverse is
$$
A^{-1}
=
\frac13
\left(\begin{array}{cccccc}
4&5&6&4&2&3 \\
5&10&12&8&4&6 \\
6&12&18&12&6&9 \\
4&8&12&10&5&6 \\
2&4&6&5&4&3 \\
3&6&9&6&3&6
\end{array}\right)
\,.
$$
We denote by $\alpha_1,\dots,\alpha_6$ the simple roots 
and by $\wp_1,\dots,\wp_6$ the associated fundamental weights.
Up to an overall normalization constant,
the Cartan matrix $A$ is the Gram matrix 
for the basis of simple roots $\alpha_1,\dots,\alpha_6$,
while the inverse matrix $A^{-1}$ is the Gram matrix for the basis 
of fundamental weight $\wp_1,\dots,\wp_6$.
In particular, $(\alpha_1,\dots,\alpha_6)=(\wp_1,\dots,\wp_6)A$.
By a direct computation, 
the maximal root is
$$
\theta
=
\alpha_1+2\alpha_2+3\alpha_3+2\alpha_4+\alpha_5+2\alpha_6
=
\wp_6
\,,
$$
and the sum of positive  roots is
\begin{equation}\label{deltaE6}
2\delta
=
16\alpha_1+30\alpha_2+42\alpha_3+30\alpha_4+16\alpha_5+22\alpha_6
\,.
\end{equation}
By definition, $P^+=\sum_{i=1}^6\bb Z_+\alpha_i$,
while $\Lambda^+=\sum_{i=1}^6\bb Z_+\wp_i$.
Hence,
$$
P^+\cap\Lambda^+
=
\Big\{
\sum_{i=1}^6n_i\wp_i
\,:\,
n\in\bb Z_+^6\,,\,\,
A^{-1}n\in\bb Z_+^6
\Big\}
\,,
$$
where $n^T=(n_1,\dots,n_6)$.
By the explicit expression of the matrix $A^{-1}$,
the condition $A^{-1}n\in\bb Z_+^6$
is equivalent to the equation
\begin{equation}\label{capE6}
n_1-n_2+n_4-n_5\equiv 0 \mod 3
\,.
\end{equation}

For $\mu=\sum_{i=1}^6n_i\wp_i$, we have
$$
(\mu|\mu+2\delta)
=
n^T A^{-1} n+R^Tn
\,,
$$
where $R^T=(16,30,42,30,16,22)$ (cf. \eqref{deltaE6}).

Since both the matrix $A^{-1}$ and the vector $R$ have positive entries,
a direct comparison shows that the minimum of $(\mu|\mu+2\delta)$
for $\mu=\sum_in_i\wp_i$ with $n_i\in\bb Z_+$ satisfying \eqref{capE6}
is achieved when $n_i=\delta_{6,i}$,
i.e. when $\mu=\wp_6=\theta$.
Hence, in this case $\gap(-L)=g_0=1$, as claimed in Theorem \ref{t:mrl}.

\subsection*{Case $E_7$}

The Dynkin diagram of $E_7$ is
\begin{center}
\begin{picture}(100,46)
\put(65,30){\circle{6}}
\put(65,26){\line(0,-1){12}}
\multiput(5,10)(20,0){6}{\circle{6}}
\multiput(9,10)(20,0){5}{\line(1,0){12}}
\put(3,-1){\footnotesize{1}}
\put(23,-1){\footnotesize{2}}
\put(43,-1){\footnotesize{3}}
\put(63,-1){\footnotesize{4}}
\put(83,-1){\footnotesize{5}}
\put(103,-1){\footnotesize{6}}
\put(63,36){\footnotesize{7}}
\end{picture}
\end{center}
The corresponding Cartan matrix thus reads:
$$
A
=
\left(\begin{array}{ccccccc}
2&-1&0&0&0&0&0 \\
-1&2&-1&0&0&0&0 \\
0&-1&2&-1&0&0&0 \\
0&0&-1&2&-1&0&-1 \\
0&0&0&-1&2&-1&0 \\
0&0&0&0&-1&2&0 \\
0&0&0&-1&0&0&2
\end{array}\right)
$$
and its inverse is
$$
A^{-1}
=
\frac12
\left(\begin{array}{ccccccc}
3&4&5&6&4&2&3 \\
4&8&10&12&8&4&6 \\
5&10&15&18&12&6&9 \\
6&12&18&24&16&8&12 \\
4&8&12&16&12&6&8 \\
2&4&6&8&6&4&4 \\
3&6&9&12&8&4&7
\end{array}\right)
\,.
$$
As before, we denote by $\alpha_1,\dots,\alpha_7$ the simple roots 
and by $\wp_1,\dots,\wp_7$ the associated fundamental weights.
By a direct computation, 
the maximal root is
$$
\theta
=
\alpha_1+2\alpha_2+3\alpha_3+4\alpha_4+3\alpha_5+2\alpha_6+2\alpha_7
=
\wp_6
\,,
$$
and the sum of positive  roots is
\begin{equation}\label{deltaE7}
2\delta
=
27\alpha_1+52\alpha_2+75\alpha_3+96\alpha_4+66\alpha_5+34\alpha_6+49\alpha_7
\,.
\end{equation}
We have
$$
P^+\cap\Lambda^+
=
\Big\{
\sum_{i=1}^6n_i\wp_i
\,:\,
n\in\bb Z_+^7\,,\,\,
A^{-1}n\in\bb Z_+^7
\Big\}
\,,
$$
where $n^T=(n_1,\dots,n_7)$.
By the explicit expression of the matrix $A^{-1}$,
the condition $\sum_{j=1}^7 (A^{-1})_{ij}n_j\in\bb Z_+$
is equivalent to the equation
\begin{equation}\label{capE7}
n_1+n_3+n_7 \equiv 0 \mod 2
\,.
\end{equation}

For $\mu=\sum_{i=1}^7n_i\wp_i$, we have
$$
(\mu,\mu+2\delta)
=
n^T A^{-1} n+R^Tn
\,,
$$
where $R^T=(27,52,75,96,66,34,49)$ (cf. \eqref{deltaE7}).

Since both the matrix $A^{-1}$ and the vector $R$ have positive entries,
a direct comparison shows that
the minimum of $(\mu|\mu+2\delta)$
for $\mu=\sum_in_i\wp_i$ with $n_i\in\bb Z_+$ satisfying \eqref{capE7}
is achieved when $n_i=\delta_{6,i}$,
i.e. when $\mu=\wp_6=\theta$.
Hence, in this case $\gap(-L)=g_0=1$, as claimed in in Theorem \ref{t:mrl}..

\subsection*{Case $E_8$}

The Dynkin diagram of $E_8$ is
\begin{center}
\begin{picture}(100,46)
\put(65,30){\circle{6}}
\put(65,26){\line(0,-1){12}}
\multiput(-15,10)(20,0){7}{\circle{6}}
\multiput(-11,10)(20,0){6}{\line(1,0){12}}
\put(-17,-1){\footnotesize{1}}
\put(3,-1){\footnotesize{2}}
\put(23,-1){\footnotesize{3}}
\put(43,-1){\footnotesize{4}}
\put(63,-1){\footnotesize{5}}
\put(83,-1){\footnotesize{6}}
\put(103,-1){\footnotesize{7}}
\put(63,36){\footnotesize{8}}
\end{picture}
\end{center}
The corresponding Cartan matrix thus reads:
$$
A
=
\left(\begin{array}{cccccccc}
2&-1&0&0&0&0&0&0 \\
-1&2&-1&0&0&0&0&0 \\
0&-1&2&-1&0&0&0&0 \\
0&0&-1&2&-1&0&0&0 \\
0&0&0&-1&2&-1&0&-1 \\
0&0&0&0&-1&2&-1&0 \\
0&0&0&0&0&-1&2&0 \\
0&0&0&0&-1&0&0&2
\end{array}\right)
$$
and its inverse is
$$
A^{-1}
=
\left(\begin{array}{cccccccc}
2&3&4&5&6&4&2&3 \\
3&6&8&10&12&8&4&6 \\
4&8&12&15&18&12&6&9 \\
5&10&15&20&24&16&8&12 \\
6&12&18&24&30&20&10&15 \\
4&8&12&16&20&14&7&10 \\
2&4&6&8&10&7&4&5 \\
3&6&9&12&15&10&5&8
\end{array}\right)
\,.
$$
As before, we denote by $\alpha_1,\dots,\alpha_8$ the simple roots 
and by $\wp_1,\dots,\wp_8$ the associated fundamental weights.
By a direct computation, 
the maximal root is
$$
\theta
=
2\alpha_1+3\alpha_2+4\alpha_3+5\alpha_4+6\alpha_5+4\alpha_6+2\alpha_7+3\alpha_8
=
\wp_1
\,,
$$
and the sum of positive  roots is
\begin{equation}\label{deltaE8}
2\delta
=
58\alpha_1+114\alpha_2+168\alpha_3+220\alpha_4
+270\alpha_5+182\alpha_6+92\alpha_7+136\alpha_8
\,.
\end{equation}
In this case, since the matrix $A^{-1}$ has integral entries,
we have 
$$
P^+\cap\Lambda^+=\Lambda^+=
\Big\{
\sum_{i=1}^8n_i\wp_i
\,:\,
n_i\in\bb Z_+
\Big\}
\,.
$$

For $\mu=\sum_{i=1}^8n_i\wp_i$, we have
$$
(\mu|\mu+2\delta)
=
n^T A^{-1} n+R^Tn
\,,
$$
where $R^T=(58,114,168,220,270,182,92,136)$ (cf. \eqref{deltaE8}).

Since both the matrix $A^{-1}$ and the vector $R$ have positive entries,
a direct comparison shows that
the minimum of $(\mu|\mu+2\delta)$
for $\mu=\sum_in_i\wp_i$ with $n_i\in\bb Z_+$
is achieved when $n_i=\delta_{1,i}$,
i.e. when $\mu=\wp_1=\theta$.
Hence, in this case $\gap(-L)=g_0=1$, as claimed in in Theorem \ref{t:mrl}.

\subsection*{Case $F_4$}

The Dynkin diagram of $F_4$ is
\begin{center}
\begin{picture}(100,26)
\multiput(25,10)(20,0){4}{\circle{6}}
\put(29,10){\line(1,0){12}}
\put(69,10){\line(1,0){12}}
\put(49,12){\line(1,0){12}}
\put(49,8){\line(1,0){12}}
\put(54,8){$\rangle$}
\put(23,-1){\footnotesize{1}}
\put(43,-1){\footnotesize{2}}
\put(63,-1){\footnotesize{3}}
\put(83,-1){\footnotesize{4}}
\end{picture}
\end{center}

The corresponding Cartan matrix thus reads:
$$
A
=
\left(\begin{array}{cccc}
2&-1&0&0 \\
-1&2&-1&0 \\
0&-2&2&-1 \\
0&0&-1&2
\end{array}\right)
\,.
$$
We denote by $\alpha_1,\dots,\alpha_4$ the simple roots 
and by $\wp_1,\dots,\wp_4$ the associated fundamental weights,
which are defined by 
$\frac{2(\alpha_1|\wp_j)}{(\alpha_i|\alpha_i)}=\delta_{i,j}$.
The inverse of the Cartan matrix is
$$
A^{-1}
=
\left(\begin{array}{cccc}
2&3&2&1 \\
3&6&4&2 \\
4&8&6&3 \\
2&4&3&2
\end{array}\right)
\,.
$$
By the definition of the fundamental weights,
$(\alpha_1,\dots,\alpha_4)=(\wp_1,\dots,\wp_4)A$
and 
$(\wp_1,\dots,\wp_4)=(\alpha_1,\dots,\alpha_4)A^{-1}$.

Assuming, without loss of generality,
that $||\alpha_1||^2=||\alpha_2||^2=2$ and $||\alpha_3||^2=||\alpha_4||^2=1$,
the Gram matrix for the fundamental weights $\wp_1,\dots,\wp_4$ is
$$
S
=
D
A^{-1}
=
\left(\begin{array}{cccc}
2&3&2&1 \\
3&6&4&2 \\
2&4&3&\frac32 \\
1&2&\frac32&1
\end{array}\right)
\quad
\text{where}
\quad
D
=
\left(\begin{array}{cccc}
1&0&0&0 \\
0&1&0&0 \\
0&0&\frac12&0 \\
0&0&0&\frac12
\end{array}\right)
\,.
$$

By a direct computation, 
the maximal root is
$$
\theta
=
2\alpha_1+3\alpha_2+4\alpha_3+2\alpha_4=\wp_1
\,,
$$
and the sum of positive  roots is
\begin{equation}\label{deltaF4}
2\delta
=
16\alpha_1+30\alpha_2+42\alpha_3+22\alpha_4
\,.
\end{equation}

By definition, $P^+=\sum_{i=1}^4\bb Z_+\alpha_i$,
while $\Lambda^+=\sum_{i=1}^4\bb Z_+\wp_i$
and, since $A^{-1}$ has integer positive entries, $\Lambda^+\subset P^+$.
Hence,
$P^+\cap\Lambda^+=\Lambda^+$.

For $\mu=\sum_{i=1}^4n_i\wp_i$, we have
$$
(\mu|\mu+2\delta)
=
n^T S n+R^TDn
\,,
$$
where $R^T=(16,30,42,22)$ (cf. \eqref{deltaF4}).

Since both the matrix $S$ and the vector $R$ have positive entries,
a direct comparison shows that the minimum of $(\mu|\mu+2\delta)$
is achieved when $n_i=\delta_{4,i}$,
i.e. when $\mu=\wp_4$.
The corresponding value is
$(\wp_4|\wp_4+2\delta)
=
12$.
On the other hand, we have 
$(\theta|\theta+2\delta)
=
18$.
As a consequence $g_0=\frac23$, as claimed in in Theorem \ref{t:mrl}..

We conclude
by exhibiting a representation of $F_4$ for which 
$\gap(-L)=\frac23$ and another representation for which $\gap(-L)=1$.
For the 
$26$-dimensional representation $V=V_{\wp_4}$ we have
$\End V\simeq V_0\oplus V_{\wp_4}\oplus V_{2\wp_4}\oplus V_{\wp_3}\oplus V_{\wp_1}$,
so that, in this case, $V_{\wp_4}$ does appear in the decomposition 
and $\gap(-L)=\frac23$.
On the other hand, for the adjoint representation $V=V_{\wp_1}$
we have
$\End V\simeq V_0\oplus V_{\wp_1}\oplus V_{2\wp_1}\oplus V_{\wp_2}\oplus V_{2\wp_4}$.
Since, in this case, $V_{\wp_4}$ does not appear,
we have $\gap(-L)=1$.

\subsection*{Case $G_2$}

The Dynkin diagram of $G_2$ is
\begin{center}
\begin{picture}(100,26)
\multiput(45,10)(20,0){2}{\circle{6}}
\put(49,12){\line(1,0){12}}
\put(49,10){\line(1,0){12}}
\put(49,8){\line(1,0){12}}
\put(54,8){$\rangle$}
\put(43,-1){\footnotesize{1}}
\put(63,-1){\footnotesize{2}}
\end{picture}
\end{center}
The corresponding Cartan matrix thus reads:
$$
A
=
\left(\begin{array}{cc}
2&-1 \\
-3&2
\end{array}\right)
\,.
$$
We denote by $\alpha_1,\alpha_2$ the simple roots 
and by $\wp_1,\wp_2$ the associated fundamental weights,
defined by 
$\frac{2(\alpha_1|\wp_j)}{(\alpha_i|\alpha_i)}=\delta_{i,j}$.
The inverse of the Cartan matrix is
$$
A^{-1}
=
\left(\begin{array}{cccc}
2&1 \\
3&2
\end{array}\right)
\,.
$$
Hence, 
$\alpha_1=2\wp_1-3\wp_2$, $\alpha_2=\wp_1+2\wp_2$
and 
$\wp_1=2\alpha_1+3\alpha_2$, $\wp_2=\alpha_1+2\alpha_2$.

By the definition of the Cartan matrix, we have $\frac{2(\alpha_1|\alpha_2)}{(\alpha_1|\alpha_1)}=-1$
and $\frac{2(\alpha_2|\alpha_1)}{(\alpha_2|\alpha_2)}=-3$, so that
$\frac{(\alpha_1|\alpha_1)}{(\alpha_2|\alpha_2)}=3$.
We can assume, without loss of generality,
that $||\alpha_1||^2=2$ and $||\alpha_2||^2=\frac23$.
With this choice, the Gram matrix for the fundamental weights $\wp_1,\wp_2$ is
$$
S
=
D
A^{-1}
=
\left(\begin{array}{cc}
2&1 \\
1&\frac23
\end{array}\right)
\quad
\text{where}
\quad
D
=
\left(\begin{array}{cc}
1&0 \\
0&\frac13
\end{array}\right)
\,.
$$

By a direct computation, 
the maximal root is
$$
\theta
=
2\alpha_1+3\alpha_2=\wp_1
\,,
$$
and the sum of positive  roots is
\begin{equation}\label{deltaG2}
2\delta
=
6\alpha_1+10\alpha_2
\,.
\end{equation}

Since $A^{-1}$ has integer positive entries, $\Lambda^+\subset P^+$
so that
$P^+\cap\Lambda^+=\bb Z_+\wp_1+\bb Z_+\wp_2$.
For $\mu=n_1\wp_1+n_2\wp_2$ we have
$$
(\mu|\mu+2\delta)
=
n^T S n+R^TDn
\,,
$$
where $n^T=(n_1,n_2)$ and $R^T=(6,10)$ (cf. \eqref{deltaG2}).

Since both the matrix $S$ and the vector $R$ have positive entries,
a direct comparison shows that the minimum of $(\mu|\mu+2\delta)$
is achieved when $n_i=\delta_{2,i}$,
i.e. when $\mu=\wp_2$.
The corresponding value is
$(\wp_2|\wp_2+2\delta)
=
4$.
On the other hand, we have 
$(\theta|\theta+2\delta)
=
8$.
As a consequence $g_0=\frac12$, as claimed in in Theorem \ref{t:mrl}..

We conclude
by exhibiting a representation of $\mf g$ for which 
$\gap(-L)=\frac12$ and another representation for which $\gap(-L)=1$.
For the 
$7$-dimensional representation $V=V_{\wp_2}$ we have
$\End V\simeq V_0\oplus V_{\wp_1}\oplus V_{\wp_2}\oplus V_{2\wp_1}$,
so that, in this case, $V_{\wp_2}$ does appear in the decomposition 
and $\gap(-L)=\frac12$.
On the other hand, for the adjoint representation $V=V_{\wp_1}$
we have
$\End V\simeq V_0\oplus V_{\wp_1}\oplus V_{2\wp_1}\oplus V_{2\wp_2}\oplus V_{3\wp_2}$.
Since, in this case, $V_{\wp_2}$ does not appear,
we have $\gap(-L)=1$.

\section*{Acknowledgments}
Our interest in quantum Markov semigroups has been stimulated
by discussions with A. Faggionato, D. Gabrielli, G. Jona-Lasinio and C. Presilla.
We also thank an anonymous referee for several helpful comments and for pointing out the connection between the QMS analyzed in Section~\ref{s:la} and those introduced in \cite{Ba}.



\begin{thebibliography}{99}
\small{

\bibitem{AS}
M. Abramowitz, I.A. Stegun, 
\emph{Handbook of mathematical functions with formulas, graphs, and mathematical tables.}
National Bureau of Standards Applied Mathematics Series, No. 55 U. S. Government Printing Office, Washington, D.C., 1964

\bibitem{Ali}
R. Alicki, \emph{On the detailed balance conditions for non-Hamiltonian systems.}
Rep. Math. Phys. {\bf 10} (1976), 249--258.

\bibitem{gustavo}
D. Bahns, A. Pohl, I. Witt (Ed.s),
\emph{Open quantum systems. A mathematical perspective.} Birkh\"auser/Springer, 2019.

\bibitem{BGL}
D. Bakry, I. Gentil, M. Ledoux, 
\emph{Analysis and geometry of Markov diffusion operators.} 
Springer, Cham, 2014.

\bibitem{Ba}
I. Bardet, M. Junge, N. Laracuente, C. Rouzé, D. Stilck França,
\emph{Group transference techniques for the estimation of the decoherence times and capacities of quantum Markov semigroups},
IEEE Trans. Inform. Theory \textbf{67} (2021), 2878--2909.

\bibitem{Br:Ga:Ju}
  M. Brannan, L. Gao, M. Junge,
  \emph{Complete logarithmic Sobolev inequalities via Ricci curvature bounded below},
  Adv. Math. {\bf 394} (2022), 108129.

\bibitem{BP}
H.P. Breuer, F. Petruccione, \emph{The Theory of Open Quantum Systems},
Oxford University Press, New York, 2002.

\bibitem{CM1}
 E.A. Carlen, J. Maas,
    \emph{Gradient flow and entropy inequalities for quantum Markov
    semigroups with detailed balance}, Journal of Functional Analysis,
    \textbf{273} (2017), 1810--1869.
    
\bibitem{CM2}
 E.A. Carlen, J. Maas,
\emph{Non-commutative calculus, optimal transport and functional inequalities in dissipative quantum 
systems}, J. Stat. Phys. \textbf{178} (2020), 319--378.

\bibitem{CM3}
 E.A. Carlen, J. Maas,
\emph{Correction to: Non-commutative calculus, optimal transport and functional inequalities 
in dissipative quantum systems.} 
J. Stat. Phys. \textbf{181} (2020), 2432--2433.


\bibitem{CFL}
F. Cirpiani, F. Fagnola, J.M. Lindsay,
\emph{Spectral analysis and Feller property for quantum Ornstein-Uhlenbeck semigroups},
Commun. Math. Phys. \textbf{210} (2000), 85--105.

\bibitem{DR}
N. Datta, C. Rouz\'e,
\emph{Relating relative entropy, optimal transort and Fisher information: a quantum HWI inequality},
Ann. Henri Poincar\'e \textbf{21} (2020), 2115--2150.

\bibitem{Fa}
  F. Fagnola,
  \emph{Quantum Markov semigroups and quantum flows},
  Proyecciones \textbf{18} (1999), 1--144.

\bibitem{Hum}
J.E. Humphreys, \emph{Introduction to Lie Algebras and Representation Theory}.
Springer New York, NY, 1972

\bibitem{Peres}
D.A. Levin, Y. Peres, 
\emph{Markov chains and mixing times.} 
Second edition.
American Mathematical Society, Providence, RI, 2017.

\bibitem{Li:Ju:La}
  H. Li, M. Junge, N. Larancuente,
  \emph{Graph H\"ormander systems}
 ArXiV:2006.14578 (2020). 

\bibitem{Lindblad76}
G. Lindblad, \emph{On the generators of quantum dynamical semigroups.} 
Comm. Math. Phys. {\bf 48} (1976), 119--130.

\bibitem{Meta}
  G. Metafune, \emph{$L^p$-spectrum of Ornstein-Uhlenbeck operators.}
  Ann. Scuola Norm. Sup. Pisa Cl. Sci. {\bf 30} (2001), 97--124.

\bibitem{Mi}
  L. Miclo,
 \emph{An example of application of discrete Hardy's inequalities},
 Markov Process. Related Fields \textbf{5} (1999), 319--330.
  
\bibitem{MS}
A. M\"uller-Hermes, D. Stilck Fran{\c c}a, M.M. Wolf, 
\emph{Relative entropy convergence for depolarizing channels.} 
J. Math. Phys. \textbf{57} (2016).

\bibitem{Chu}
M.A. Nielsen, I.L. Chuang,
\emph{Quantum computation and quantum information.} 
Cambridge University Press, Cambridge, 2000.



\bibitem{PRV}
K. R. Parthasarathy, R. Ranga Rao, V.S.  Varadarajan,
\emph{Representations of complex semi-simple Lie groups and Lie algebras.} 
Ann. of Math. {\textbf 85} (1967), 383--429.

\bibitem{RS1}
  M. Reed, B. Simon,
  \emph{Methods of modern mathematical physics. I. Functional analysis.}
  Academic Press, New York-London, 1972.

\bibitem{RS2}
  M. Reed, B. Simon,
  \emph{Methods of modern mathematical physics. II. Fourier analysis, self-adjointness.}
  Academic Press, New York-London, 1972.

\bibitem{SL}
H. Spohn, J.L. Lebowitz,
\emph{Irreversible thermodynamics for quantum systems weakly coupled to thermal reservoirs.}
Adv. Chem. Phys. {\bf 38} (1978), 109--142.

\bibitem{TKRWV}
K. Temme, M.J. Kastoryano, M.B. Ruskai, M.M. Wolf, F. Verstraete, 
\emph{The $\chi^2$-divergence and mixing times of quantum Markov processes},
J. Math. Phys. \textbf{51} (2010).

\bibitem{wilde}
M.M. Wilde \emph{Quantum Information Theory}, Cambridge University Press 2016.

\bibitem{Wi:Zh}
  M. Wirth, H. Zhang,
  \emph{Complete gradient estimates of quantum Markov semigroups},
  Comm. Math. Phys.,
  \textbf{387} (2021), no. 2, 761--791.

\bibitem{Coreani}
  M. Wirth, H. Zhang,
  \emph{Curvature-dimension conditions for symmetric quantum Markov semigroups},
  Ann. Henri Poincaré,
  \textbf{24} (2023), 717--750.
}
 \end{thebibliography}
\end{document}